\newcommand\T{\rule{0pt}{2.4ex}}
\newcommand\B{\rule[-1.2ex]{0pt}{0pt}}
\newcommand{\BlackBox}{\rule{1.5ex}{1.5ex}}  
\newenvironment{proof}{\par\noindent{\bf Proof\
}}{\hfill\BlackBox\\[2mm]}
\newtheorem{theorem}{Theorem}
\newtheorem{lemma}{Lemma}
\newtheorem{proposition}{Proposition}
\newtheorem{corollary}{Corollary}
\newtheorem{definition}{Definition}
\newenvironment{customthm}[1]
  {\innercustomthm}
  {\endinnercustomthm}
\newenvironment{customprop}[1]
  {\innercustomprop}
  {\endinnercustomprop}
\newenvironment{customlem}[1]
  {\innercustomlem}
  {\endinnercustomlem}
\begin{document}
%
\title{Formation of Stable Strategic Networks with Desired Topologies
}
 \author{Swapnil~Dhamal and Y.~Narahari \\
Department of Computer Science and Automation, \\ Indian Institute of Science, Bangalore, India. \\
\{swapnil.dhamal, hari\}@csa.iisc.ernet.in
}

\date{}

\maketitle

\hrule
~\\
\textbf{Cite this article as:} 
S. Dhamal and Y. Narahari, ``Formation of Stable Strategic Networks with Desired Topologies,'' Studies in Microeconomics, vol. 3, no. 2, pp. 158--213, 2015.
The original publication is available at \url{http://journals.sagepub.com/doi/pdf/10.1177/2321022215588873}
\\
\hrule
~\\

\begin{abstract} 
Many real-world networks such as social networks consist of strategic agents. The topology of these networks often plays a crucial role in determining the ease and speed with which certain information driven tasks can be accomplished. Consequently, growing a stable network having a certain desired topology is of interest. Motivated by this, we study the following important problem: given a certain desired topology, under what conditions would best response link alteration  strategies adopted by strategic agents, uniquely lead to formation of a stable network having the given topology. This problem is the inverse of the classical network formation problem where we are concerned with determining stable topologies, given the conditions on the network parameters. We study this interesting inverse problem by proposing (1) a recursive model of network formation and (2) a utility model that captures key determinants of network formation. Building upon these models, we explore relevant topologies such as star graph, complete graph, bipartite Tur\'an graph, and multiple stars with interconnected centers. We derive a set of sufficient conditions under which these topologies uniquely emerge, study their social welfare properties, and investigate the effects of deviating from the derived conditions.
\end{abstract}

\noindent
\textbf{Keywords:}
strategic networks, network formation, social welfare, game theory, pairwise stability, 
network topology, graph edit distance.

\section{Introduction}
\label{sec:intro_nfsc}

A primary reason for networks such as social networks to be formed is that every person (or agent or node)  gets certain benefits from the network. These benefits assume different forms in different types of networks. These benefits, however, do not come for free. Every node in the network has to  incur
a certain cost for maintaining links with its immediate neighbors or direct friends. 
This cost takes the form of time, money, or effort, depending on the type of network. 
Owing to the tension between benefits and costs, self-interested or rational nodes 
think strategically while choosing their immediate neighbors. 
A stable network that forms out
of this process will have a topological structure that is dictated by the individual
utilities and the resulting best response strategies of the nodes.

The underlying social network structure plays a key role in determining the dynamics of several processes 
such as, the spread of epidemics~\cite{ganesh2005effect} and the diffusion of information~\cite{jacksonbook}. 
This, in turn, affects the decision of which nodes should be selected to be vaccinated \cite{abbassi2011toward}, 
or to trigger a campaign so as to either maximize the spread of certain information \cite{kempe2003maximizing} 
or minimize the spread of an already spreading misinformation \cite{budak2011limiting}. 
Often, stakeholders such as a social network owner or planner, who work with the networks so formed, would like the network to have a certain desirable topology to facilitate efficient handling of 
information driven tasks using the network. Typical examples of these tasks include spreading certain information to nodes (information diffusion), extracting certain critical information from nodes (information extraction), enabling optimal communication among nodes for maximum efficiency (knowledge management), etc. If a particular topology is an ideal one for the set of tasks to be handled, it would be useful to orchestrate network formation in a way that only the desired topology emerges as the unique stable topology.

A network in the current context can be naturally represented as a 
graph consisting of strategic agents called {\em nodes} 
and connections 
among them called {\em links}. 
Bloch and Jackson~\cite{bloch2006definitions} examine a variety of stability and equilibrium notions that have been used to study strategic network formation.
%
Our analysis in this paper is based on the notion of {\em pairwise stability\/} which accounts for bilateral deviations arising from mutual agreement of link creation between two nodes, that Nash equilibrium fails to capture~\cite{jacksonbook}. Deletion is unilateral and a node can delete a link without consent from the other node. 
Consistent with the definition of pairwise stability, we consider that all nodes are homogeneous and they have global knowledge of the network (this is a common, well accepted assumption in the literature on social network formation~\cite{jacksonbook}).

Before we proceed further, 
we present two important definitions from the literature~\cite{jacksonbook} for ease of discussion.
Let $u_j(g)$ denote the utility of node $j$ when the network formed is $g$.

\begin{definition} 
A network is said to be {\em pairwise stable} if it is a best response for a node not to delete any of its links and there is no incentive for any two unconnected nodes to create a link between them. So $g$  is pairwise stable if \\(a) for each edge $e = (i, j) \in g$, $u_i(g \backslash \{e\}) \leq u_i(g)$ and  $u_j(g \backslash \{e\}) \leq u_j(g)$, and\\
(b) for each edge $e' = (i, j) \notin g$, if $ u_i(g \cup \{e'\})>u_i(g) $, then $u_j(g \cup {e'})<u_j(g)$.
\end{definition}

\begin{definition} 
A network is said to be {\em efficient} if the sum of the utilities of the nodes in the network is maximal. So given a set of nodes $N$, $g$ is efficient if it maximizes $\sum_{j\in N} u_j(g)$, that is, for all networks $g'$ on $N$, $\sum_{j\in N} u_j(g) \geq \sum_{j\in N} u_j(g') $.
\end{definition}

Every network has certain parameters that influence its evolution process.
We refer to the tuple of values of these parameters as {\em conditions on the network}.
%
By conditions on a network, we mean a listing of the range of values taken by the various parameters that influence network formation, including the relations between these parameters.
For example, let $b_1$ be the benefit that a node gets from each of its direct neighbors, $b_2$ be the benefit that it gets from each node that is at distance two from it, and $c$ be the cost it pays for maintaining link with each of its direct neighbors. In real-world networks, it is often the case that $0 \leq b_2 \leq b_1$ and $c \geq 0$. The list of relations, say (1) $0 < b_2 < b_1$ and (2) $b_1-b_2 < c < b_1$, are the conditions on the network. Based on these conditions, the utilities of the involved nodes are determined, which in turn affect their (link addition/deletion) strategies, hence influencing the process of formation of that network.
Throughout this paper, we ignore enlisting trivial conditions such as $0 \leq b_2 \leq b_1$ and $c \geq 0$.

In general, the evolution of a real-world social network would depend on several other factors such as the information diffusing through the network~\cite{ehrhardt2006diffusion,zhang2013strategic}.
For simplicity, we make a well accepted assumption that the network evolves purely based on the conditions on it and does not
depend on any other factor.

\section{Motivation}
\label{sec:motiv_nfsc}
One of the key problems addressed in the literature on social network formation is: 
given a set of self-interested nodes and a model of social network formation, 
which topologies are  stable and which ones are  efficient.
The trade-off between stability and efficiency
is a key topic of  interest and concern in the literature on social network formation~\cite{jackson2005survey,jacksonbook}.

This work focuses on the inverse problem, namely, 
given a certain desired topology, under what conditions would  best response 
(link addition/deletion) strategies played 
by self-interested agents, uniquely lead to the formation of a stable (and perhaps efficient)
network with that topology. 
The problem becomes important because networks, such as an organizational network of a global company, play an important role in a
variety of knowledge management, information extraction, and information diffusion tasks. The topology of these networks is one of the major factors that decides the ease and speed with which the above tasks can be accomplished. In short, a certain topology might serve the 
interests of the network owner better.
%

%
In social networks, in general, it is difficult to figure out what the desired topology is. Moreover, it is possible that the social network is being formed for more than one reason. It can, however, be argued that given a set of individuals, there may not exist a unique social network amongst them. For instance, there may exist several networks like friendship network, collaboration network, organizational network, etc. on the same set of nodes. 
Different networks have different cost and benefit parameters, for example, from a mutual connection, two nodes may gain more in collaboration network than in friendship network, and also pay more cost. 
Furthermore, in real-world networks, a link between two nodes in one network may influence the corresponding link in another network. The influence may be positive (friendship trust leads to business trust) or negative (time spent for maintaining link in one network may adversely affect the corresponding link in another network).
For simplicity, we consider these various networks to be formed independently of each other.
A way to look at the problem under consideration is that, we focus on one such network at a time and derive conditions so that it has the desired topology or structure. 

\begin{figure}[t!]
\begin{tabular}{p{5cm} p{5cm} p{5cm}}
\centering
\includegraphics[scale=0.5]{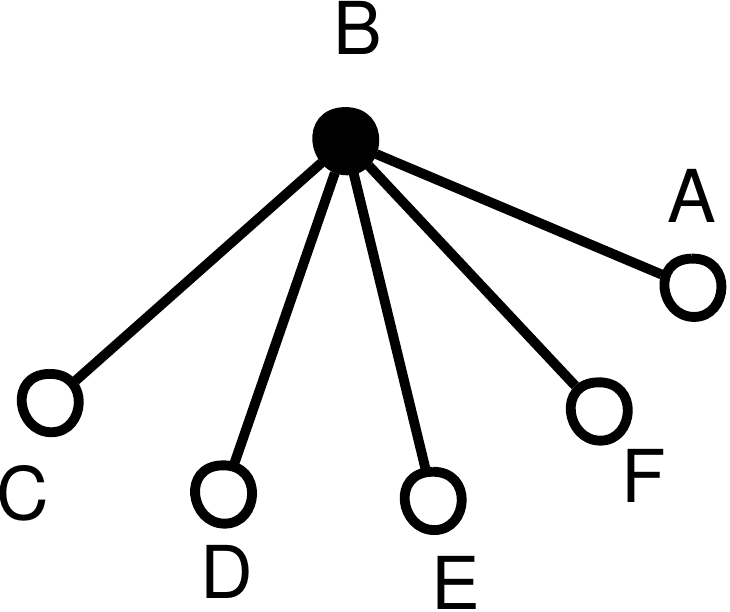}
&
\centering
\includegraphics[scale=0.5]{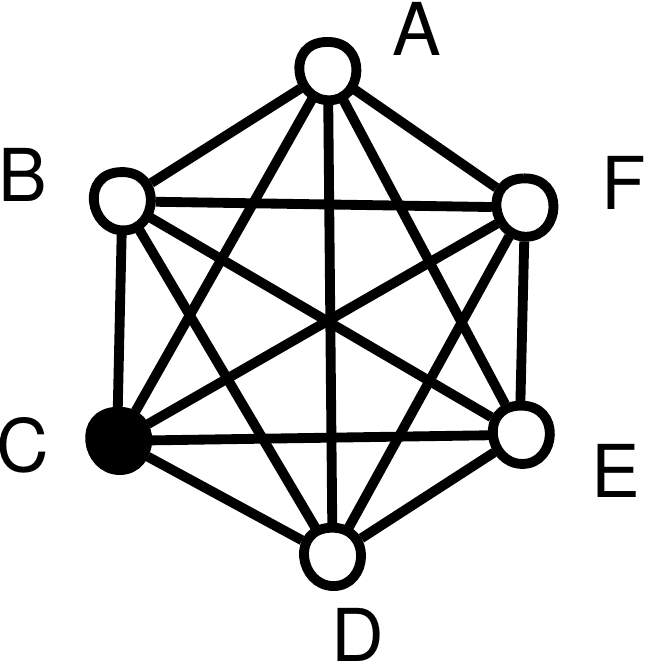}
&
\hspace{7mm}
\includegraphics[scale=0.5]{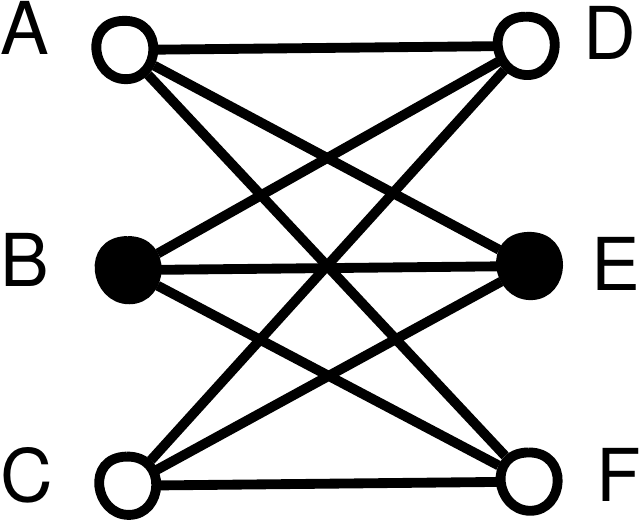}
\\
\centering (a) Star & 
\centering (b) Complete & 
\centering (c) Bipartite Tur\'an 
\vspace{5mm}
\end{tabular}

\begin{tabular}{p{7.5cm} p{7.5cm}}
\centering
\includegraphics[scale=0.5]{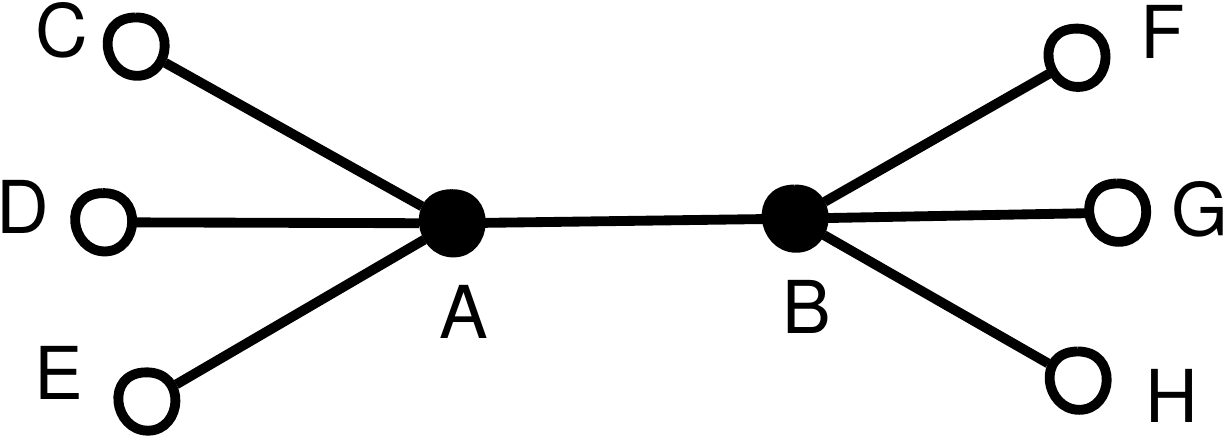}
&
\hspace{8.7mm}
\includegraphics[scale=0.5]{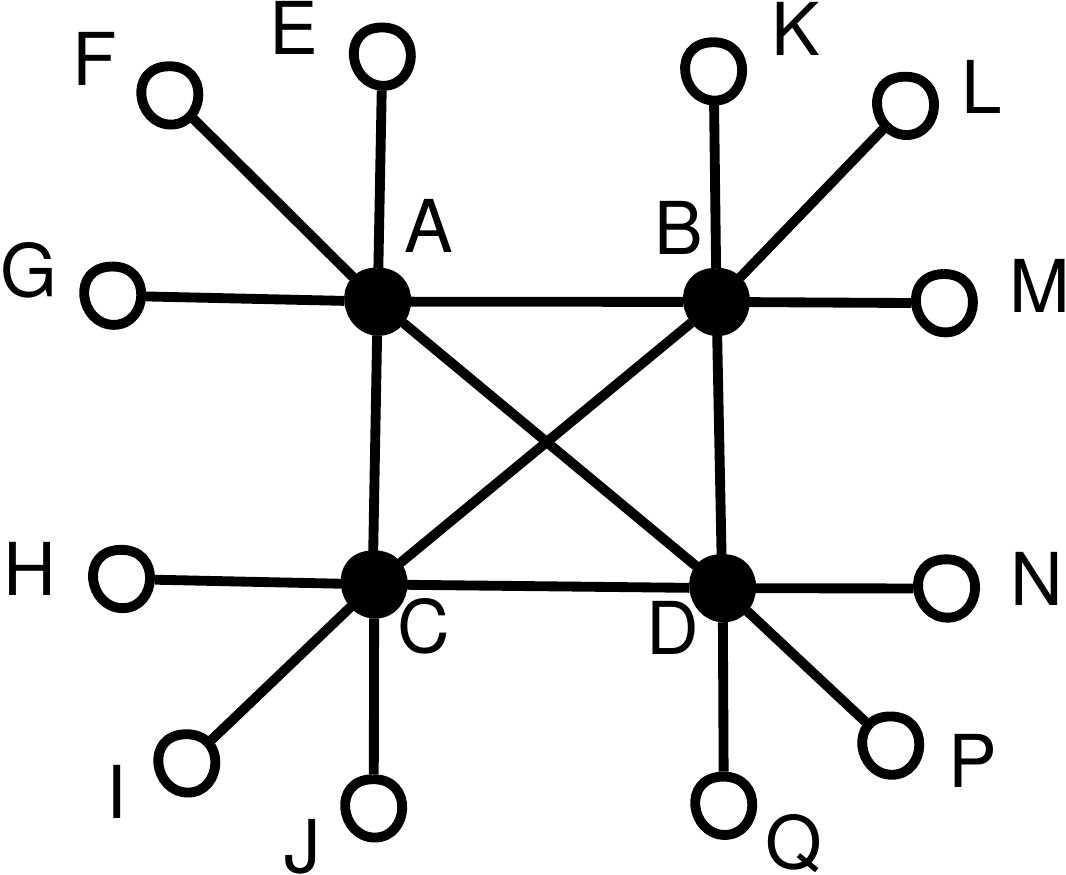}
\\ 
\centering (d) 2-star & 
\centering (e) $k$-star ($k=4$)
\end{tabular}

\caption{Relevant topologies investigated in this work}
\label{fig:motiv_nfsc}
\end{figure}

In this paper, for the sake of brevity, we consider only a representative set of commonly encountered topologies for investigation. However, our approach is general and can be used to study other topologies, albeit with more involved analysis.
We motivate our investigation further with the help of several 
relevant topologies shown
in Figure~\ref{fig:motiv_nfsc}.

Consider a network where there is a need to rapidly spread certain critical information, requiring redundant ways of communication to account for any link failures. 
The information may be received by any of the nodes and it is important that all other 
nodes also get the information at the earliest.
In such cases, a complete network (Figure~\ref{fig:motiv_nfsc}(b)) would be ideal. 
In general, if the information received by any node is required to be propagated throughout the network within a certain number of steps $d$, the network's diameter should be bounded by the number $d$.

Consider a different scenario where the time required to spread the information is critical, but
there is also a need for moderation to verify the authenticity of the information before 
spreading it to the other nodes in the network (for example, it could be a rumor).
 Here a star network (Figure~\ref{fig:motiv_nfsc}(a)) would be desirable since the center would act as a moderator and any 
information that originates in any part of the network has to flow through the
moderator before it reaches other nodes in the network. 
Virus inoculation is a related example where a star network would be desirable 
since vaccinating the center may be sufficient to prevent spread of the virus
to other parts of the network, thus reducing the cost of vaccination.

Our next example concerns  two sections
 of a society where some or all members of a section
receive certain information simultaneously. The objective here is to forward
the information to the other section. Moreover, it is desirable to not have intra-section links to save on resources. In this case, it would be desirable to have 
a bipartite network. Moreover, if the information is critical and urgent, 
requiring redundancy, a complete bipartite network would be desirable. 
A bipartite Tur\'an network (Figure~\ref{fig:motiv_nfsc}(c)) is a practical special case where both sections
are required to be nearly of equal sizes.

Consider a generalization of the star network where there are $k$ centers and 
the leaf nodes are evenly distributed among them, that is, the difference between the number of leaf nodes connected to any two centers, is at most one. Such a network would be desirable 
when the number of nodes is expected to be very large and there is a need for 
decentralization for efficiently controlling information in the network. 
We call such a network, $k$-star network (Figures~\ref{fig:motiv_nfsc}(d-e)).

For similar reasons, if fast information extraction is the main criterion, certain topologies may be better than others. Information extraction in social networks can be thought of as the reverse of information diffusion. Also, an information extraction or search algorithm would work better on some topologies than others.

The problem under study also assumes importance in knowledge management. McInerney~\cite{mcinerney2002knowledge} defines knowledge management as an effort to increase useful knowledge within an organization, and highlights that the ways to do this include encouraging communication, offering opportunities to learn, and promoting the sharing of appropriate knowledge artifacts. An organization may want to develop a particular network within, so as to make the most of knowledge management. A complete network would be desirable if the nodes are trustworthy with no possibility of manipulation.
For practical reasons, an organization may want nodes of different sections to communicate with each other and not within sections so that each node can aggregate knowledge received from nodes belonging to the other section, in its own way. A bipartite Tur\'an network would be desirable in such a case. Such a network may also be more desirable than the complete network in order to prevent inessential investment of time for communication within a section.

Similarly, for a variety of reasons, there may be a need to form networks having certain other structural properties.
So depending on the tasks for which the network would be used, a certain topology might
be more desirable than others. This provides the motivation for our work.

\section{Relevant Work}
\label{sec:relevant_nfsc}

Models of network formation in literature can be broadly classified as either simultaneous move models or sequential move models.
Jackson and Wolinsky~\cite{jackson1996strategic} propose a simultaneous move game model where nodes simultaneously propose the set of nodes with whom they want to create a link, and a link is created between any two nodes if they mutually propose a link to each other.
Aumann and Myerson~\cite{myerson20} provide a sequential move game model where nodes are farsighted, whereas
Watts~\cite{watts618} considers a sequential move game model where nodes are myopic. In both of these approaches and in any sequential network formation model in general, the resulting network is based on the ordering in which links are altered and owing to the assumed random ordering, it is not clear which networks would emerge.
 
 The modeling of strategic formation in a general network setting was first studied by
  Jackson and Wolinsky~\cite{jackson1996strategic} by proposing a utility model called {\em symmetric connections model}. This widely cited model, however, does not capture many key determinants involved in strategic network formation.
Since then, several utility models have been proposed in literature in the effort of capturing these determinants. 
 Jackson~\cite{jackson2003stability} reviews several such models in the literature and highlights that pairwise stable networks may not exist in some settings. 
 Hellmann and Staudigl \cite{hellmann2014evolution} provide a survey of random graph models and game theoretic models for analyzing network evolution. 

Given a network, Myerson value~\cite{myerson1977graphs} gives an allocation to each of the involved nodes based on certain desirable properties.
 Jackson~\cite{jackson2005allocation} proposes a family of allocation rules that consider alternative network structures when allocating the value generated by the network to the individual nodes.
 Narayanam and Narahari~\cite{ramasuri1} investigate the topologies of networks formed with a generic model based on value functions and analyze resulting networks using 
 Myerson value.
 %
 There have also been studies on stability and efficiency of specific networks such as R\&D networks \cite{konig2012efficiency}. 
 Atalay \cite{atalay2013sources} studies sources of variation in social networks by extending the model in \cite{jackson2007meeting} by allowing agents to have varying abilities to attract contacts. 
 %

    Goyal and Joshi~\cite{goyal2006unequal} explore two specific models of network formation and arrive at circumstances under which networks exhibit an unequal distribution of connections across agents.
 Goyal and Vega-Redondo~\cite{goyal2007structural} propose a non-cooperative game model capturing bridging benefits wherein they introduce the concept of {\em essential nodes}, which is a part of our proposed utility model. Their model, however, does not capture the decaying of benefits obtained from remote nodes.
Kleinberg et al.~\cite{kleinberg2008strategic} propose a localized model that considers benefits that a node gets by bridging any pair of its neighbors separated by a path of length 2. Their model does not capture indirect benefits and bridging benefits that nodes can gain by being intermediaries between non-neighbors which are separated by a path of length greater than 2.
%
Under another localized model where a node's bridging benefits depend on its clustering coefficient, Vallam et al.~\cite{vallam2013topologies} study stable and efficient topologies.

%
Hummon~\cite{hummon2000utility} uses agent-based simulation approaches to explore the dynamics of network evolution based on the symmetric connections model. 
Doreian~\cite{doreian2006actor}, given some conditions on a network, analytically arrives at specific networks that are pairwise stable using the same model. However, the complexity of analysis increases exponentially with the number of nodes and the analysis in the paper is limited to a network with only five nodes.
Some gaps in this analysis are addressed by
Xie and Cui~\cite{xie2008cost,xie2008note}.

Most existing models of social network formation assume that all nodes are present throughout the evolution of a network, thus allowing nodes to form links that may be inconsistent with the desired network.
For instance, if the desired topology is a star, 
it is desirable to have conditions that ensure a link between two nodes, of which one would play the role of the center. But with the same conditions, links between other pairs would be created with high probability, leading to inconsistencies with the star topology.
Also, with all nodes present in an unorganized network, a random ordering over them in sequential network formation models adds to the complexity of analysis.
However, in most social networks, not all nodes are present from beginning itself. A network starts building up from a few nodes and gradually grows to its full form. Our model captures such a type of network formation.

There have been  a few approaches earlier to design incentives for nodes so that the resulting network is efficient.
Woodard and Parkes~\cite{woodard2003strategyproof} 
use mechanism design to
ensure that the outcome is an efficient network. 
Mutuswami and Winter~\cite{mutuswami2002subscription} design a mechanism that ensures efficiency, budget balance, and equity. 
Though it is often assumed that the welfare of a network is based only on its efficiency, there are many
situations where this may not be true. A network
may not be efficient in itself, but it may be desirable for reasons
external to the network, as explained in Section~\ref{sec:motiv_nfsc}.

\section{Contributions of this Paper}
\label{sec:gameinbrief}
In this paper, we study the inverse of the classical network formation problem, that is, 
under what conditions would the desired topology uniquely emerge 
when 
agents adopt their best response strategies.  
Our specific contributions are summarized below.

\begin{itemize} 
\item We propose a recursive model of  network formation, with which we can guarantee that a network being formed retains a designated topology in each of its stable states. Our model ensures that, for common network topologies, the analysis can be carried out independent of the current number of nodes in the network and also independent of the upper bound on the 
number of nodes in the network.
 The utility model we propose captures most key aspects relevant to strategic network formation:
(a) benefits from immediate neighbors, (b) costs of maintaining links with immediate neighbors, (c) benefits from indirect neighbors, (d) bridging benefits, (e) intermediation rents, and (f) an entry fee  for entering the network. 
We then present our procedure for deriving sufficient conditions for the formation of a given topology as the unique one. (Section~\ref{sec:model})
\item Using the proposed models, we study common and important networks, namely, star network, complete network, bipartite Tur\'an network, and $k$-star network, 
and derive sufficient conditions under which these topologies uniquely emerge.
We also investigate the efficiency (or social welfare) properties of the above network topologies. (Section~\ref{sec:analysis})
\item We introduce the concept of dynamic conditions on a network and study the effects of deviation 
from the derived sufficient conditions on the resulting network, using the notion of graph edit distance. 
In this process, we develop a polynomial time algorithm for computing graph edit distance between a given graph and a corresponding $k$-star graph. 
(Section~\ref{sec:deviation})
\end{itemize}

To the best of our knowledge, this is the first detailed effort in investigating the problem of obtaining a desired topology uniquely in social network formation.

\section{The Model}
\label{sec:model}

We consider the
process of formation of a network consisting of 
strategic nodes, where each node aims at maximizing its utility it gets from the network. 

\subsection{A Recursive Model of Network Formation}

The network consists of $n$ nodes at any given time, where $n$ could vary over time. 
The process starts with one node, whose only strategy is to remain in its current state. The strategy of the second node is to either (a) not enter the network or (b) propose a link with the first node. We make a natural assumption that in order to be a part of the network, the second node has to propose a link with the first node and not vice versa. 
Based on the model under study, the first node may or may not get to decide whether to accept this link. 
If this link is created, the second node successfully enters the network. Following this, the network evolves to reach a stable state after which, 
the third node considers entering the network. 
The third node can enter the network by successfully creating link(s) with one or both of the first two nodes. In this paper, we consider that at most one link is altered at a time, and so the third node can enter the network by successfully creating a link with exactly one of the already present nodes in the network. If it does, 
the network of these three nodes evolves. 
Once the network reaches a stable state, 
the fourth node considers entering the network, and
this process continues.
Note that in the above process, 
no node in the network of $n-1$ nodes can create a link with the newly entering $n^{th}$ node until the latter proposes and successfully creates a link 
in order to enter
the network.
After the new node enters the network successfully, 
the network evolves 
until it reaches a stable state consisting of $n$ nodes. Following this, a new ${(n+1)}^{th}$ node considers entering the network and the process goes on recursively. The assumption that a node considers entering the network only when it is stable may seem unnatural in general networks, but can be justified in networks where entry of nodes can be controlled by a network administrator.
This recursive model is depicted in Figure~\ref{fig:model}.
Note that the model is not based on any utility model, network evolution model, or equilibrium notion.

\begin{figure} [!t]
\centering
\includegraphics[scale=0.42]{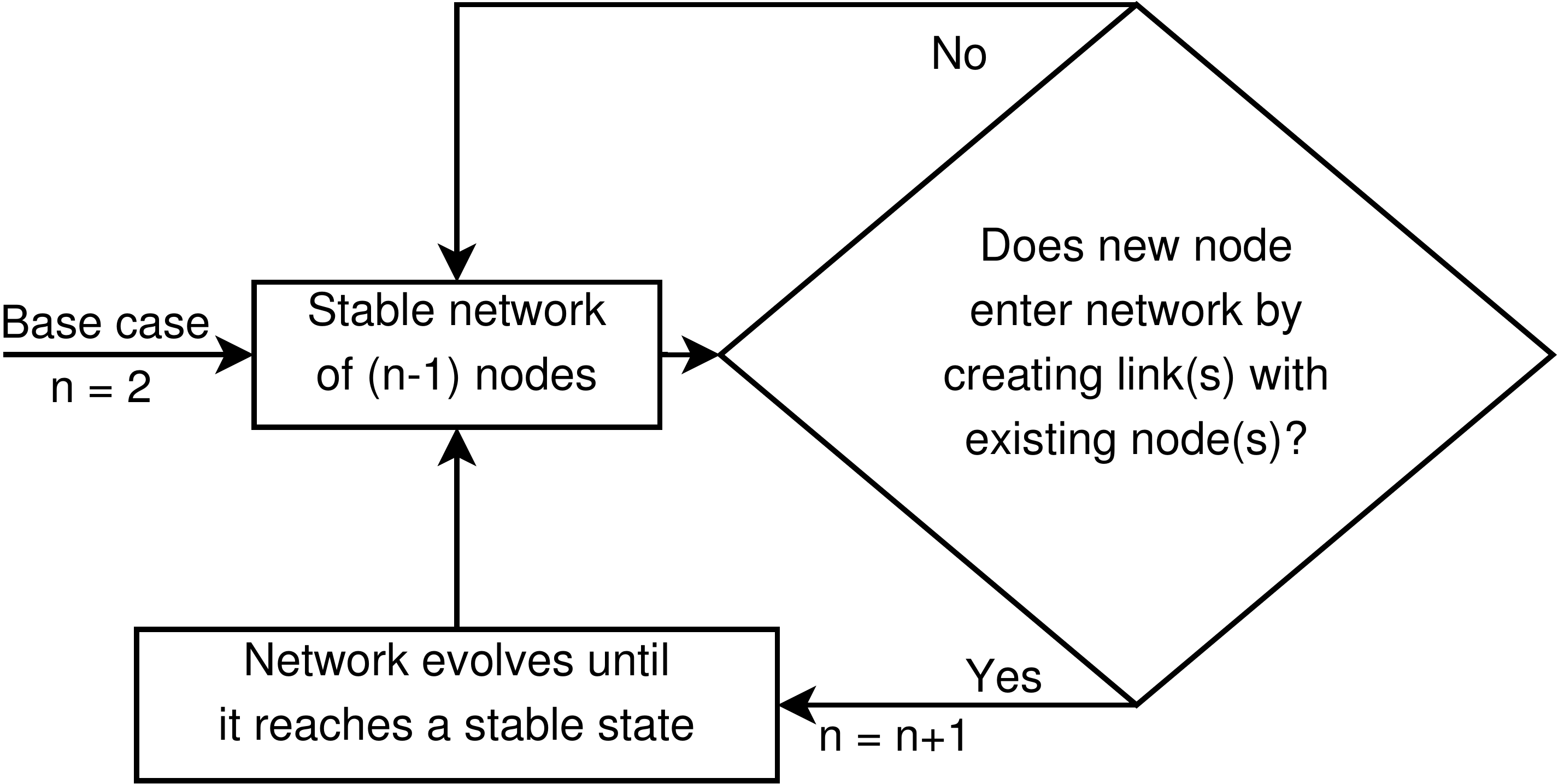}
\caption{
Proposed recursive model of network formation}
\label{fig:model}
\end{figure}

%
It can be observed at first glance that, if at some point of time, a new node fails to enter the network by failing to create a link with some existing node, the network will cease to grow. In such cases, it may seem that Figure~\ref{fig:model} 
goes into infinite loop for no reason, while it may have just pointed to an exit. The argument holds for the current social network models where the cost and benefit parameters, and hence the conditions on the network, are assumed to remain unchanged throughout the network formation process. 
But in real-world networks, this is often not the case and the conditions may vary over time or evolve owing to some internal or external factors. For instance, if the individual workload on the employees increases, the cost of maintaining link with each other also increases. On the other hand, if the workload is of collaborative nature, then the benefit parameters attain an increased value.
It is possible that no node successfully enters the network for some time, but with changes in the conditions, nodes may resume entering and the network may start to grow again. We explore this concept of {\em dynamic conditions} on a network in Section~\ref{sec:deviation}.

\subsection{Dynamics of Network Evolution}
\label{sec:directing}

The model of network evolution considered in this paper is based on a sequential move game \cite{watts618}.
During the evolution phase, nodes which get to make a move are chosen at random at all time. Each node has a set of strategies at any given time and when it gets a chance to make a move, it chooses its {\em myopic best response} strategy which maximizes its immediate utility. 
A strategy can be of one of the three types, namely (a) creating a link with a node that is not its immediate neighbor, (b) deleting a link with an immediate neighbor, or (c) maintaining status quo. 
Note that a node will compute whether a link it proposes, decreases utility of the other node, because if it does, it is not its myopic best response as the link will not be accepted by the latter. 
Moreover, consistent with the notion of pairwise stability, if a node gets to make a move and altering a link does not strictly increase its utility, then it prefers not to alter it.
%
%
The aforementioned sequential move evolution process can be represented as an extensive form game tree. 

\subsubsection{Game Tree}
\label{sec:tree}

The entry of each node in the network results in one game tree, and so the network formation process results in a series of game trees, each tree corresponding to a sequential move game
(see Figure~\ref{fig:star}).
Each branch represents a possible transition from a network state, owing to decision made by a node.
So, the root of a game tree represents the network state in which a new node considers entering the network.

A way to find an equilibrium in an extensive form game consisting of farsighted players, is to use backward induction~\cite{osborne}. 
However, in our game, 
the players have bounded rationality, that is, their best response strategies are myopic. So instead of the regular backward induction approach or the bottom-up approach, we take a top-down approach for ease of understanding.
%
We now recall the definition of an {\em improving path}~\cite{jackson2002evolution}.

\begin{definition}
\label{def:improving} 
An {\em improving path} is a sequence of networks, where each transition is obtained by either any two nodes choosing to add a mutual link or any node choosing to delete any of its links. 
\end{definition}

Thus, a pairwise stable network is one from which there is no improving path leaving it.
%
The notion of improving paths is based on the assumption of myopic agents, who make their decisions 
without considering how their actions affect the decisions of other nodes and hence the evolution of the network. 

\subsubsection{Notion of Types}
\label{sec:types}
As the order in which nodes take decisions is random, in a general game, the number of branches arising from each state in the game tree depends on the number of nodes, $n$, as well as the number of possible direct connections each node can be involved in (or number of possible direct connections with respect to each node), $n-1$. 
The complexity of analysis can, however, be significantly reduced by the notion of {\em types} using which, several nodes and links can be analyzed at once. This is a widely used technique in analyzing pairwise stability of a network. 
%
%
We now explain the notion of types in detail.

%
\begin{definition}
\label{def:typenodes}
Two nodes $A$ and $C$ of a graph $g$ are of the same type if there exists an automorphism
$f:V(g)\rightarrow V(g)$ such that $f(A)=C$, where $V(g)$ is the vertex set of $g$.
\end{definition}
The implication of nodes being of the same type is that, for any automorphism $f$, if a best response strategy of node $A$ is to alter its link with node $D$, then a best response strategy of $f(A)$ is to alter its link with $f(D)$. So at any point of time, it is sufficient to consider the best response strategies of one node of each type.
\begin{definition}
\label{def:typeconnections}
Two connections with respect to a node $B$, connections $BA$ and $BC$, are of the same type if there exists an automorphism
$f$ such that $f(A)=C$ and $f(B)=B$.
\end{definition}
The implication of connections being of the same type with respect to a node is that, the node is indifferent between the connections, irrespective of the underlying utility model. Different types of connections with respect to a node form different branches in the game tree.

\begin{wrapfigure}{l}{75mm}
  \centering
    \includegraphics[scale=0.45]{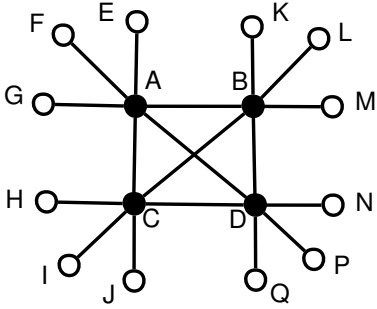} 
  \caption{A 4-star graph}
  \label{fig:kstar}
\end{wrapfigure}

%
For example, in Figure~\ref{fig:kstar}, nodes $G$ and $H$ are of the same type. Also, the two possible connections $MG$ and $MH$ with respect to node $M$, are of the same type.
But the possible connections $EG$ and $EH$ with respect to node $E$, are not of the same type. So, these two strategies of node $E$, namely, connecting with node $G$ and connecting with node $H$, form different branches in the game tree, implying that the utilities arising from these two types of connections are not necessarily equal.

\subsubsection{Directing Network Evolution}
\label{directing_dymanics}

Our procedure for deriving sufficient conditions for the formation of a given topology as the unique topology, is modeled on the lines of {\em mathematical induction}. Consider a base case network with very few nodes (two in our analysis).
We derive conditions so that the network formed with these few nodes has the desired topology. Then using induction, we assume that a network with $n-1$ nodes has the desired topology, and derive conditions so that, the network with $n$ nodes, also has that topology. 
Without loss of generality, we explain this procedure with the example of star topology, referring to the game tree in Figure~\ref{fig:star}. 
Assuming that the network formed with $n-1$ nodes is a star, our objective is to derive conditions so that the network of $n$ nodes is also a star.

In Figure~\ref{fig:star}, at the root of the game tree, node $A$ is the newly entering $n^{th}$ node and the network is in state 0, where a star with $n-1$ nodes is already formed. 
%
Recall that the complexity of analyzing a network
depends on the number of different types of nodes as well as the number of different types of possible connections with respect to a node in that network.
Note that in state 0, with respect to node $A$, there are two types of possible connections: (a) with the center and (b) with a leaf node. 
In states 1, 3, 4 and 5, there are two types of nodes, and two types of possible connections with respect to a leaf node and one with respect to the center. It will be seen that, the network is directed to not enter state 2, so even though there are four types of nodes in that state, it is not a matter of concern.

\begin{figure}[t]
\begin{tabular}{cc}
\hspace{1cm}
\begin{minipage}{3.7cm}
\vspace{-2.8in}
\centering
\includegraphics[scale=0.37]{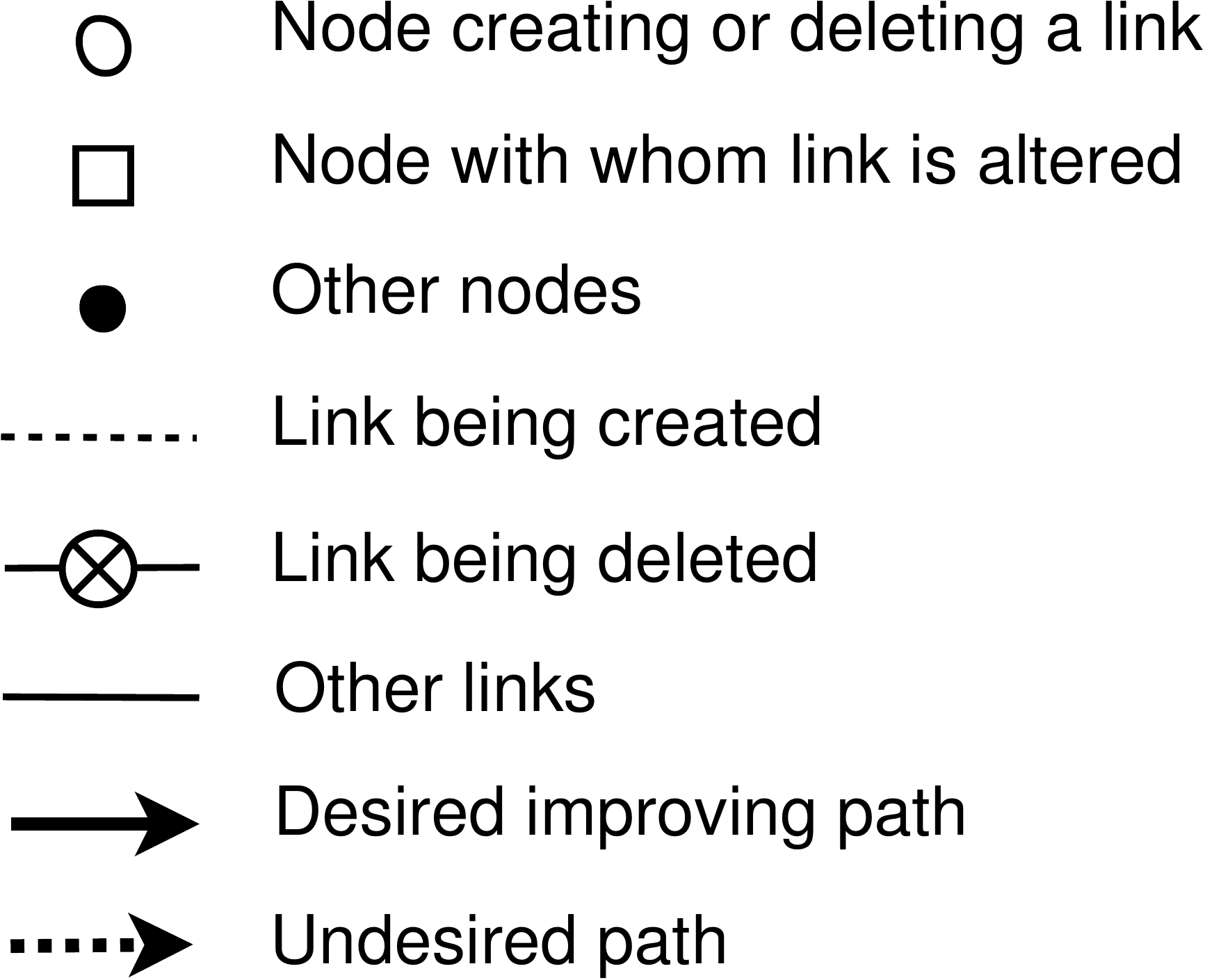}
\end{minipage}
&
\begin{minipage}{7cm}
\centering
\includegraphics[scale=0.37]{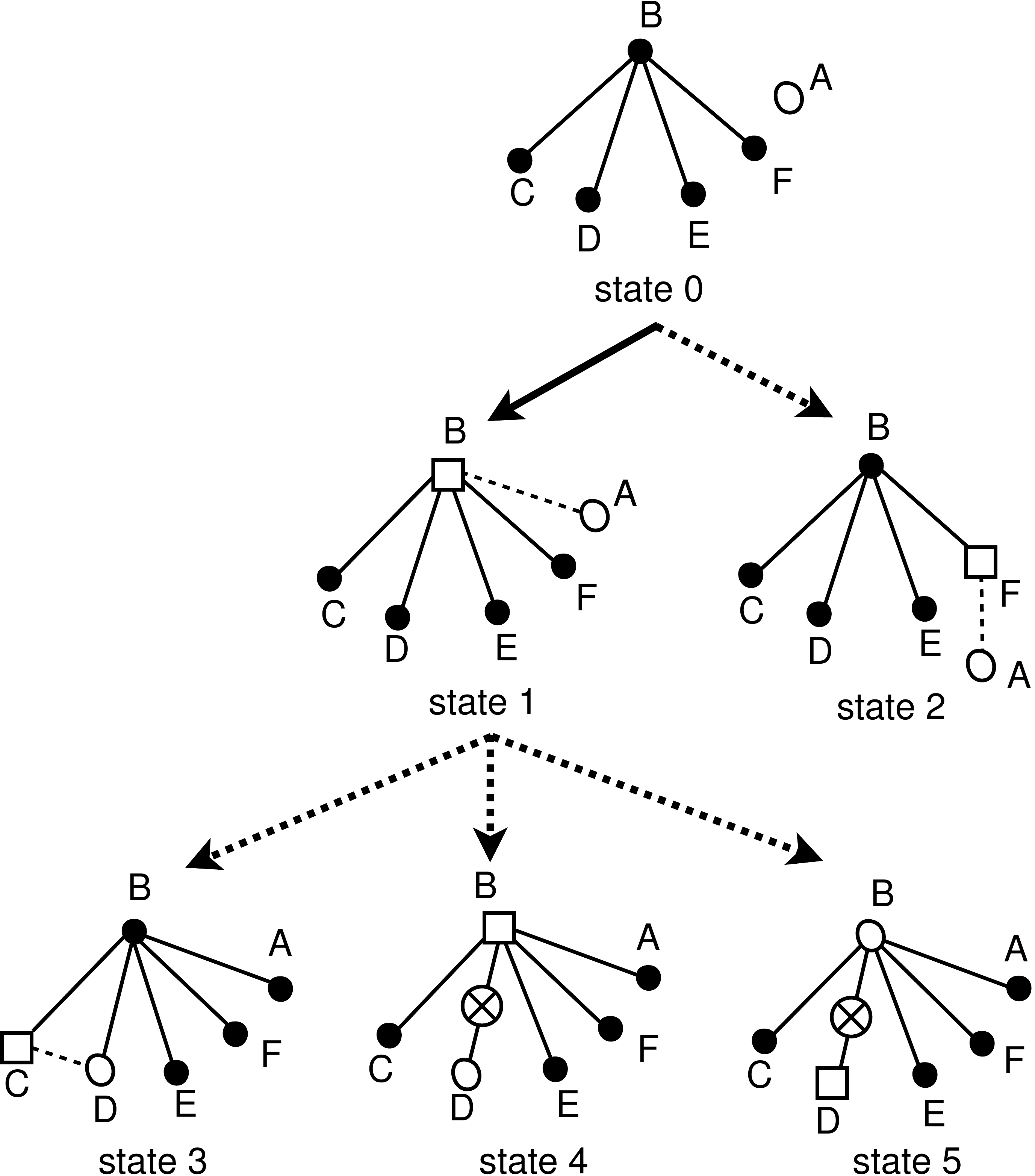}
\end{minipage}
\end{tabular}
\caption{Directing the network evolution for the formation of star topology uniquely}
\label{fig:star}
\end{figure}

Let $u_j(s)$ be the utility of node $j$ when the network is in state $s$.
In state 0, as the newly entering node $A$ gets to make the first move,
we want it to connect to the center by choosing the improving path that transits from state 0 to state 1. So utility of node $A$ in state 1 should be greater than that in state 0, that is, $u_A(1) > u_A(0)$.
Similarly, for node $B$ to accept the link from node $A$, $B$'s utility should not decrease, that is $u_B(1) \geq u_B(0)$. 
We do not want node $A$ to connect to any of the leaf nodes, that is, we do not want the network to enter state 2. Note that as we are interested in sufficient conditions, we are not concerned if there exists an improving path from state 2 that eventually results in a star (we discard state 2 in order to shorten the analysis). 
One way to ensure that the network does not enter state 2, irrespective of whether it lies on an improving path, is by making it less favorable for node $A$ than the desired state 1, that is, $u_A(2) < u_A(1)$. 
Another way to ensure the same is by a condition for a leaf node such that, accepting a link from node $A$ decreases its utility, and so the leaf node does not accept the link, thus forcing node $A$ to connect to the center. That is, $u_j(2) < u_j(0)$ for any leaf node $j$. Thus the network enters state 1, which is our desired state.

To ensure pairwise stability of our desired state, no improving paths should lead out of it, for which we need to consider two cases. First, when node $B$ gets to make its move, it should not break any of its links (state 5),
that is $u_B(1) \geq u_B(5)$.
Second, when any of the leaf nodes is chosen at random, it should neither create a link with some other leaf node (state 3), nor delete its link with the center (state 4).
The corresponding conditions are $u_j(1) \geq u_j(3)$ and $u_j(1) \geq u_j(4)$ for any leaf node $j$.

Thus we direct the network evolution along a desired improving path by imposing a set of conditions, ensuring that the resulting network is in the desired state or has the desired topology uniquely. 
In the evolution process of a network consisting of homogeneous nodes, the number of branches from a state of the game tree depends on the number of different types of nodes and the number of different types of possible connections with respect to a node, at that particular instant.
As we are primarily interested in the formation of special topologies in a recursive manner (nodes are already organized according to the topology and the objective is to extend the topology to that with one more node, so the existing nodes play the same role as before, and most or all of the existing links do not change), the number of different types of nodes as well as the number of different types of possible connections with respect to a node, are small constants at any instant, thus simplifying the analysis.

\subsection{The Utility Model}
\label{sec:utility}

Keeping in view the necessity of solving the problem in a setting that reflects real-world networks in a reasonably general way, we propose a utility model that captures several key determinants of social network formation.
In particular, our model is a considerable generalization of the extensively explored symmetric connections model~\cite{jackson1996strategic} and also builds upon other well known models in literature~\cite{goyal2007structural,kleinberg2008strategic}.
Furthermore, as nodes have global knowledge of existing nodes in the network while making their decisions (for instance, proposing a link with a faraway node), we propose a utility model that captures the global view of indirect and bridging benefits. 

\begin{definition} \cite{goyal2007structural}
\label{def:essential}
A node $j$ is said to be {\em essential} for nodes $y$ and $z$ if $j$ lies on every path joining $y$ and $z$.
\end{definition}

Whenever nodes $y$ and $z$ are directly connected, they get the entire benefits arising from the direct link. On the other hand, when they are indirectly connected with the help of other nodes, of which at least one is essential, $y$ and $z$ lose some fraction of the benefits arising from their communication, in the form of intermediation rents paid to the essential nodes without whom the communication is infeasible. 

Let $E(x,y)$ be the set of essential nodes connecting nodes $y$ and $z$.
The model proposed by Goyal and Vega-Redondo~\cite{goyal2007structural} suggests that the benefits produced by $y$ and $z$ be divided in a way that $x$, $y$, and the nodes in $E(x,y)$ get fraction $\frac{1}{|E(x,y)|+2}$ each.
However, in practice, if nodes $y$ and $z$ can communicate owing to the essential nodes connecting them, that pair would want to enjoy at least some fraction of the benefits obtained from each other, since that pair is the real producer of these benefits (and possess human characteristics such as ego and prestige). That is, the pair would not agree to give away more than some fraction, say $\gamma$, to the corresponding set of essential nodes. As this fact is known to all nodes, in particular, to the set of essential nodes, they as a whole will charge the pair exactly $\gamma$ fraction as intermediation rents. As each essential node in the set is equally important for making the communication feasible, it is reasonable to assume that the intermediation rents are equally divided among them.

It can be noted that nodes which lie on every shortest path connecting $y$ and $z$, but are not essential for connecting them, also have bargaining power, since without them, the indirect benefits obtained from the communication would be less. And so, they should get some fraction proportional to their differential contribution, that is, the indirect benefits produced through the shortest path minus the indirect benefits produced through the second shortest path.
But, for simplicity of analysis, we ignore this differential contribution and assume that nodes that lie on path(s) connecting $y$ and $z$, but are not essential, do not get any share of the intermediation rents. So, when $y$ and $z$ are indirectly connected with the help of other nodes of which none is essential, they get the entire indirect benefits arising from their communication.

We now describe the determinants of network formation that our model captures, and thus obtain expression for the utility function.
Let $N$ be the set of nodes present in the given network,
$d_j$ be the degree of node $j$,
$l(j,w)$ be the shortest path distance between nodes $j$ and $w$,
$b_i$ be the benefit obtained from a node at distance $i$ in absence of rents (assume $b_\infty=0$),
and
$c$ be the cost for maintaining link with an immediate neighbor.

\textbf{\textit{(1) Network Entry Fee:}} 
Since nodes enter a network one by one, we introduce the notion of {\em network entry fee}. 
This fee corresponds to some cost a node has to bear in order to be a part of the network. 
It is clear that, if a newly entering node wants its first connection to be with an existing node which is of high importance or degree, then it has to spend more time or effort. 
So we assume the entry fee that the former pays to be an increasing function of the latter's degree, say $d_{\text{T}}$. 
For simplicity of analysis, we assume the fee to be directly proportional to $d_{\text{T}}$ and call the proportionality constant, {\em network entry factor} $c_0$.

%

\textbf{\textit{(2) Direct Benefits:}}
These benefits are obtained from immediate neighbors  in a network.
For a node $j$, these benefits equal $b_1$ times $d_j$.

\textbf{\textit{(3) Link Costs:}}
These costs are the amount of resources like time, money, and effort a node has to spend in order to maintain links with its immediate neighbors.
For a node $j$, these costs equal $c$ times $d_j$.

\textbf{\textit{(4) Indirect Benefits:}}
These benefits are obtained from indirect neighbors, and these decay with distance $(b_{i+1} < b_i)$.
In the absence of rents, the total indirect benefits that a node $j$ gets is $\sum_{w \in N, \text{ } l(j,w)>1}{b_{l(j,w)}}$.


\textbf{\textit{(5) Intermediation Rents:}}
Nodes pay a fraction $\gamma$ ($0 \leq \gamma < 1$) of the indirect benefits, in the form of additional favors or monetary transfers to the corresponding set of essential nodes, if any. The loss incurred by a node $j$ due to these rents is $\sum_{w \in N , \text{ } E(j,w)\neq \phi}{\gamma b_{l(j,w)}}$.


\textbf{\textit{(6) Bridging Benefits:}}
Consider a node $j \in E(y,z)$.
Both $y$ and $z$ benefit $b_{l(y,z)}$ each and so this indirect connection produces a total benefit of $2 b_{l(y,z)}$. 
As described earlier, each node from the set $E(y,z)$ gets a fraction $\frac{\gamma}{|E(y,z)|}$, the absolute benefits being $\left( \frac{\gamma}{|E(y,z)|} \right) 2 b_{l(y,z)}$.
So the bridging benefits obtained by a node $j$ from the entire network is $\sum_{j \in E(y,z),\text{ }\{y,z\}\subseteq N}{ \left( \frac{\gamma}{|E(y,z)|} \right) 2 b_{l(y,z)} }$.


\textbf{\textit{Utility Function:}}
The utility of a node $j$ is a function of the network, that is, $u_j:g \rightarrow \mathbb{R}$. We drop the notation $g$ from the following equation for readability.
Summing up all the aforementioned determinants of network formation that our model captures, we get
%
\begin{equation}
\label{eqn:utility}
\begin{split}
u_j =& -c_0d_{\text{T}(j)}\textbf{I}_{\{j=\text{NE}\}} + d_j(b_1-c) +\sum_{\substack{w \in N \\l(j,w)>1}}{b_{l(j,w)}} \\
& - \sum_{\substack{w \in N \\E(j,w)\neq \phi}}{\gamma b_{l(j,w)}}  
    + \sum_{\substack{j \in E(y,z) \\ \{y,z\}\subseteq N}}{ \left( \frac{\gamma}{|E(y,z)|} \right) 2 b_{l(y,z)} } 
    \end{split}
\end{equation}
where 
$\text{T}(j)$ is the node to which node $j$ connects to enter the network, and
$\textbf{I}_{\{j=\text{NE}\}}$ is 1 when $j$ is a newly entering node about to create its first link, else it is 0.

%
\section{Analysis of Relevant Topologies}
\label{sec:analysis}

Using the proposed model of recursive and sequential network formation and the proposed utility model,
we provide sufficient conditions under which several relevant network topologies, namely star, complete graph, bipartite Tur\'an graph, 2-star, and $k$-star, uniquely emerge as pairwise stable networks.
Note that as the conditions derived for any particular topology are sufficient, 
 there may exist alternative conditions that result in the same topology uniquely.

\subsection{Sufficient Conditions for the Formation of Relevant Topologies Uniquely}

We use Equation~(\ref{eqn:utility}) for mathematically deriving the conditions.

\begin{proposition}
\label{thm:star}
For a network, if $b_1-b_2 + \gamma b_2 \leq c < b_1$ and $c_0 < \left( 1-\gamma \right) \left( b_2-b_3 \right)$, 
the unique resulting topology is star.
\end{proposition}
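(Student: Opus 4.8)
The plan is to run the induction-on-$n$ schema described in Section~\ref{directing_dymanics}, driven by the game tree of Figure~\ref{fig:star}. The base case is the two-node network, which is a single edge and hence trivially a star. For the inductive step I would assume that the stable network on $n-1$ nodes is a star with center $B$, let $A$ be the entering $n^{\text{th}}$ node, and show that the two stated inequalities force every reachable stable network on $n$ nodes to be a star again. The entire argument reduces to evaluating Equation~(\ref{eqn:utility}) at the six states $0$--$5$ of Figure~\ref{fig:star} and reading off the inequality that each desired (non-)transition imposes, using the elementary star facts that the center is at distance $1$ from every leaf, any two leaves are at distance $2$, and the center is the unique essential node for every pair of leaves.

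Concretely, I would tabulate, for the relevant node in each state, the degree term $d_j(b_1-c)$, the indirect-benefit sum, the rent sum, the bridging sum, and the entry-fee term. The three decisive computations are: (i) a leaf linking to another leaf (state $3$) buys one extra direct link while turning a distance-$2$ neighbour into a distance-$1$ neighbour, giving $u_j(1)-u_j(3) = (1-\gamma)b_2 - (b_1-c)$, so $u_j(1)\ge u_j(3)$ is exactly $c \ge b_1 - b_2 + \gamma b_2$; (ii) a leaf dropping the center (state $4$) or the center dropping a leaf (state $5$) isolates a node, and the gains $u_j(1)-u_j(4) = (b_1-c) + (n-2)(1-\gamma)b_2$ and $u_B(1)-u_B(5) = (b_1-c) + 2(n-2)\gamma b_2$ are nonnegative precisely because $c < b_1$; and (iii) $A$ attaching to a leaf instead of the center (state $2$) creates nodes at distance $3$ from $A$, and a short computation collapses to $u_A(1)-u_A(2) = (n-3)\big[(1-\gamma)(b_2-b_3) - c_0\big]$, which is positive for $n\ge 4$ exactly when $c_0 < (1-\gamma)(b_2-b_3)$. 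I would also record that the center accepts $A$'s spoke since $u_B(1)\ge u_B(0)$ again reduces to $c<b_1$.

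Assembling these, the lower bound $b_1 - b_2 + \gamma b_2 \le c$ rules out intra-leaf links, the upper bound $c < b_1$ keeps every spoke intact and makes the center willing to accept, and $c_0 < (1-\gamma)(b_2-b_3)$ forces the entrant onto the center. Thus from state $0$ the only topology-changing improving move leads to state $1$, the $n$-node star, which is itself pairwise stable, so the induction closes. By the notion of types (Definitions~\ref{def:typenodes} and~\ref{def:typeconnections}), states $3$--$5$ exhaust all single-link deviations from the star up to automorphism, so checking them establishes full pairwise stability.

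I expect the main obstacle to be bookkeeping at the boundary rather than any single hard inequality. Two points need care. First, for $n=3$ the "bad" state $2$ and the "good" state $1$ are the \emph{same} graph (the path on three vertices is itself a star), so the vanishing of the factor $(n-3)$ there is harmless and must be flagged explicitly; the $c_0$ condition only has to bite for $n\ge 4$. Second, the claim concerns only the \emph{topology} of the stable network: should $u_A(1)$ be nonpositive the entrant simply stays out, leaving the $(n-1)$-node star, which is still a star, so no separate "entry is profitable" hypothesis is needed and the three stated conditions already suffice. The remaining risk is a miscount in the indirect-benefit, rent, and bridging sums (particularly the distance-$3$ contributions in state $2$), which is where I would spend the most verification effort.
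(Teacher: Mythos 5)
Your proposal is correct and follows essentially the same route as the paper's own proof: induction on $n$ over the game tree of Figure~\ref{fig:star}, with the same three decisive inequalities ($u_j(1)\ge u_j(3)$ yielding $c\ge b_1-b_2+\gamma b_2$, $u_A(1)>u_A(2)$ yielding $c_0<(1-\gamma)(b_2-b_3)$, and the deletion/acceptance checks reducing to $c<b_1$), and your utility tabulations for states $0$--$5$ agree with the paper's. The only differences are cosmetic: you explicitly flag the $n=3$ degeneracy of state $2$ and shortcut the entry-profitability condition (which the paper derives but which is anyway implied by $c<b_1$ and $c_0<(1-\gamma)(b_2-b_3)\le(1-\gamma)b_2$), while the paper additionally records that the alternative of having a leaf reject $A$'s link is inconsistent with $c<b_1$.
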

\begin{proof}
Refer to Figure~\ref{fig:star} throughout the proof. For the base case of $n=2$, the requirement for the second node to propose a link 
to the first is that its utility should become strictly positive. Also as the first node has degree $0$, there is no entry fee.
\begin{equation}
\label{B1for2} 
0 < b_1-c  \iff c < b_1
\end{equation}
Now, consider a star consisting of $n-1$ nodes. Let the newly entering $n^{th}$ node get to make a decision of whether to enter the network. For $n\geq3$, if the entering node connects to the center, it gets indirect benefits of $b_2$ each from $n-2$ nodes. But as the center is essential for enabling communication between newly entering node and other leaf nodes, the new node has to pay $\gamma$ fraction of these benefits to the center. Also, it has to pay an entry fee of $(n-2)c_0$ as the degree of center is $n-2$. So in Figure~\ref{fig:star}, 
$u_A(0) < u_A(1)$ gives
\begin{equation}
\nonumber
0 < b_1-c+(n-2) \left( 1-\gamma \right) b_2-(n-2)c_0
\end{equation}
\begin{equation}
\nonumber
\iff  c < b_1+(n-2) \left( \left( 1-\gamma \right) b_2-c_0 \right)
\end{equation}
As it needs to be true for all $n \geq 3$, we set the condition to
\begin{equation}
\nonumber
 c < \min_{n \geq 3} \Big\{ b_1+(n-2) \left( \left( 1-\gamma \right) b_2-c_0 \right) \Big\}
\end{equation}
\begin{equation}
\label{B1}
\Longleftarrow  c <b_1+  \left( 1-\gamma \right)  b_2-c_0
\end{equation}
The last step is obtained so that the condition for link cost is independent of the upper limit on the number of nodes, by enforcing
\begin{equation}
\label{B1forc0}
c_0 \leq  \left( 1-\gamma \right) b_2
\end{equation}
which enables us to substitute $n=3$ and the condition holds for all $n\geq 3$.\\
For the center to accept a link from the newly entering node, we need to have $u_B(0) \leq u_B(1)$.
For $n=2$, the requirement for the first node to accept link from the second node is $0 \leq b_1-c$ which is satisfied by Inequality~(\ref{B1for2}).
For $n=3$, as the center is essential for connecting the other two nodes separated by distance two, it gets $\gamma$ fraction of $b_2$ from both the nodes. So it gets bridging benefits of $2 \gamma b_2$.
\begin{equation}
\nonumber
 b_1-c \leq 2(b_1-c)+2\gamma b_2
\end{equation}
\begin{equation}
\nonumber
\iff c \leq b_1+ 2\gamma b_2
\end{equation}
This condition is satisfied by Inequality~(\ref{B1for2}).
For $n\geq 4$, prior to entry of the new node, the center alone bridged $\dbinom{n-2}{2}$
\normalsize pairs of nodes at distance two from each other, while after connecting with the new node, the center is the sole essential node for \small{$\dbinom{n-1}{2}$} 
\normalsize such pairs.
So the required condition:
\begin{equation}
\nonumber
 (n-2)(b_1-c)+\gamma \dbinom{n-2}{2}2 b_2 \leq (n-1)(b_1-c)+\gamma \dbinom{n-1}{2}2 b_2
\end{equation}
This condition is satisfied by Inequality~(\ref{B1for2}) for all $n \geq 4$.\\
For the newly entering node to prefer the center over a leaf node as its first connection (not applicable for $n=2$ and $3$), we need $u_A(1) > u_A(2)$.
\begin{equation}
\nonumber
\begin{split}
 b_1-c +(n-2) \left( 1-\gamma \right) b_2 -(n-2)c_0  > b_1-c+ \left( 1-\gamma \right) b_2 -c_0 +(n-3) \left( 1-\gamma \right) b_3
 \end{split}
\end{equation}
\begin{equation}
\label{B3a}
\iff c_0 <  \left( 1-\gamma \right) \left( b_2-b_3 \right)
\end{equation}
Alternatively, the newly entering node may want to connect to the leaf node, but the leaf node's utility decreases. In that case, the alternative condition can be $u_j(2)<u_j(0)$ for $j=C,D,E,F$.
Note that this leaf node gets bridging benefits of $2\gamma b_2$ for being essential for indirectly connecting the new node with the center. Also, as it is one of the two essential nodes for indirectly connecting the new node with the other $n-3$ leaf nodes (the other being the center), it gets bridging benefits of $(n-3) (\frac{\gamma}{2})2 b_3 = (n-3) \gamma b_3$.
\begin{equation}
\nonumber
\begin{split}
b_1-c +(n-3) \left( 1-\gamma \right) b_2 > 2(b_1-c)+(n-3) \left( 1-\gamma \right) b_2  + 2\gamma b_2  + (n-3) \gamma b_3
 \end{split}
\end{equation}
which gives $c>b_1+ 2\gamma b_2 + (n-3) \gamma b_3$. But this is inconsistent with the condition in Inequality~(\ref{B1for2}). So in order to ensure that the newly entering node connects to the center and not to any of the leaf nodes, we use Inequality~(\ref{B3a}).

Now that a star of $n$ nodes is formed, we ensure its pairwise stability by deriving conditions for the same. 
Firstly, we ensure that the center does not delete any of its links. So we need $u_B(1) \geq u_B(5)$. Note that from the center's point of view, state $5$ is same as state $0$ and as we have seen earlier that $u_B(0) \leq u_B(1)$, the required condition $u_B(5) \leq u_B(1)$ is already ensured.\\
Next, no two leaf nodes should form a link between them. So we should ensure that, not creating a link between them is at least as good for them as creating, that is $u_j(1) \geq u_j(3)$ for any leaf node $j$. This condition is applicable for $n\geq 3$. 
\begin{equation}
\nonumber
b_1-c+(n-2) \left( 1-\gamma \right) b_2 \geq 2(b_1-c)+(n-3) \left( 1-\gamma \right) b_2
\end{equation}
\begin{equation}
\label{B4}
\iff c \geq b_1-b_2+\gamma b_2
\end{equation}
For a leaf node to not delete its link with the center, we need $u_j(1) \geq u_j(4)$ for any leaf node $j$. For $n \geq 2$, we have
\begin{equation}
\nonumber
b_1-c+(n-2)  \left( 1-\gamma \right) b_2  \geq 0
\end{equation}
\begin{equation}
\nonumber
\iff c \leq b_1+(n-2)  \left( 1-\gamma \right) b_2
\end{equation}
which is a weaker condition than Inequality~(\ref{B1for2}) for $n\geq 2$.

Note that Inequalities~(\ref{B1for2}) and (\ref{B3a}) put together are stronger than Inequalities~(\ref{B1}) and (\ref{B1forc0}) combined. 
We get the required result using Inequalities~(\ref{B1for2}), (\ref{B3a}) and (\ref{B4}).
\end{proof}


We provide the proofs of the remaining results of this section in Appendices~\ref{app:smallworld} through \ref{app:kstar}. 


\begin{proposition}
\label{thm:smallworld}
For a network, if $c<b_1-b_{d+1}$ 
and $c_0\leq(1-\gamma)b_2$, the resulting diameter is at most $d$.
\end{proposition}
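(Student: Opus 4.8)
The plan is to prove the statement at the level of stability rather than by tracking the whole process: I will show that, under the two stated conditions, no pairwise stable network that the recursive process can produce admits a pair of nodes at distance $\ge d+1$. Suppose, for contradiction, that the resulting stable network $g$ has diameter exceeding $d$. Reading off two vertices of a shortest path that realises the diameter gives a pair $y,z$ with $l(y,z)\ge d+1$, both already present in the network (so the entry-fee indicator $\mathbf{I}_{\{\cdot=\mathrm{NE}\}}$ vanishes for them). The goal is to show that forming the link $(y,z)$ strictly increases the utility of both endpoints, which exhibits an improving path leaving $g$ and contradicts clause (b) of pairwise stability.

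The engine is a decomposition of $\Delta u_y := u_y(g\cup\{(y,z)\})-u_y(g)$ along the six determinants of Equation~(\ref{eqn:utility}). The pair's own contribution is clean: $z$ leaves the indirect-benefit sum, where it supplied at most $b_{l(y,z)}\le b_{d+1}$ (the inequality by the decay $b_{i+1}<b_i$), and enters the direct-benefit term, adding $b_1-c$. This part therefore contributes at least $(b_1-c)-b_{d+1}$, which is strictly positive exactly because $c<b_1-b_{d+1}$. What remains is to check that the indirect benefits drawn from the other nodes, the rents, and the bridging benefits do not erode this margin, and by the symmetry $y\leftrightarrow z$ the same bound must be obtained for $z$.

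The real work, which I expect to be the crux, is a pair of monotonicity facts about edge addition. Adding an edge can only (i) weakly decrease every shortest-path distance, since paths are created and none destroyed, and (ii) weakly shrink every essential set $E(\cdot,\cdot)$ for the same reason; moreover a node incident to the new edge cannot be removed from any essential set it already belonged to, because every new path through that edge still passes through the node. From (i)--(ii) I would argue term by term. For each other node $w\ne z$, the net \emph{benefit minus rent} that $y$ extracts from $w$ is weakly larger afterwards: its distance falls, while an essential set that was empty stays empty (so a rent that was not being paid is not newly incurred) and one that was nonempty either stays so on a larger $b_{l}$ or vanishes. For each pair for which $y$ is a bridge, the payment $\tfrac{\gamma}{|E|}2b_{l}$ weakly rises, since $b_l$ grows while $|E|$ cannot grow, and any newly connected pairs only add non-negative bridging. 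Hence every secondary term is non-negative, giving $\Delta u_y\ge (b_1-c)-b_{d+1}>0$, and symmetrically for $z$; the contradiction follows.

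Two loose ends remain. Connectivity comes for free from the same computation: a pair $y,z$ in distinct components sits at distance $\infty$ with $b_\infty=0\le b_{d+1}$, so the bound $\Delta u_y\ge(b_1-c)-b_{d+1}>0$ still holds and such a pair would link; hence every stable network the process yields is connected and its diameter is finite. The entry-fee condition $c_0\le(1-\gamma)b_2$ then keeps the statement non-vacuous by guaranteeing profitable entry: a node joining at a vertex of degree $d_T$ gains at least $d_T\bigl((1-\gamma)b_2-c_0\bigr)\ge0$ from the distance-two neighbours it acquires, on top of $b_1-c>0$, which is precisely the role this inequality plays in the proof of Proposition~\ref{thm:star}. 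Since any shortcut link a node adds \emph{after} its entry carries no entry fee, the incentive argument of the previous paragraph applies to it verbatim, and assembling the pieces produces a connected network with no pair at distance exceeding $d$, i.e.\ diameter at most $d$.
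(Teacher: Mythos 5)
Your proposal is correct and follows essentially the same route as the paper's proof: both isolate the pair's own contribution $(b_1-c)-b_{d+1}>0$ from the link-cost condition, argue that all secondary effects of the new edge (shortened distances, weakly shrinking essential sets, weakly increased bridging benefits) are non-negative for both endpoints, and use the entry-fee condition only to guarantee that nodes actually join. Your write-up merely makes explicit the monotonicity facts about distances and essential sets that the paper dismisses with ``it can be easily seen,'' and phrases the conclusion as a contradiction with pairwise stability rather than as the existence of an improving path, which is the same argument viewed from the other end.
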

%

The following corollary results when $d=1$.

\begin{corollary}
\label{thm:complete}
For a network, if $c < b_1-b_2$ and $c_0 \leq \left( 1-\gamma \right) b_2$, the unique resulting topology is complete graph.
\end{corollary}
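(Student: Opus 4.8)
The plan is to obtain the statement as the specialization of Proposition~\ref{thm:smallworld} to the case $d=1$, combined with the elementary graph-theoretic fact that a diameter bound of $1$ already pins down the complete graph. First I would substitute $d=1$ into the hypotheses of Proposition~\ref{thm:smallworld}. Since $b_{d+1}=b_2$ when $d=1$, the condition $c<b_1-b_{d+1}$ becomes exactly $c<b_1-b_2$, while the entry-fee condition $c_0\leq(1-\gamma)b_2$ is already stated in the required form. Both hypotheses of the corollary therefore coincide with those of the proposition at $d=1$, and the proposition yields that the unique resulting topology has diameter at most $1$.

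Next I would translate the diameter bound into completeness. On any fixed node set $N$ with $|N|\geq 2$, the only graph whose diameter is at most $1$ is the complete graph: by definition every pair of distinct nodes $j,w\in N$ satisfies $l(j,w)\leq 1$, and since distinct nodes obey $l(j,w)\geq 1$, we get $l(j,w)=1$ for every such pair, i.e.\ every pair is directly linked. Hence the produced topology is $K_{|N|}$, and because this is the \emph{only} graph meeting the diameter bound, the uniqueness asserted in the corollary is inherited directly from the proposition.

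The main obstacle is not located in this corollary but in Proposition~\ref{thm:smallworld}, which we may assume; the only genuine content here is the equivalence between the diameter bound and completeness. As an independent sanity check I would also re-derive the result along the recursive game tree of Section~\ref{sec:tree}. Assuming $K_{n-1}$ has already formed, the entering node first attaches to some existing node (its entry fee $c_0 d_{\text{T}}$ is affordable precisely because $c_0\leq(1-\gamma)b_2$), after which every remaining node sits at distance $2$ from it through an essential intermediary. The inequality $c<b_1-b_2$ then makes replacing such a distance-$2$ relation by a direct link strictly beneficial to both endpoints (the gain is at least $b_1-c-(1-\gamma)b_2$, which is positive since $c<b_1-b_2\leq b_1-(1-\gamma)b_2$), so links keep being added until $K_n$ is reached.

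For pairwise stability I would finally verify that no existing link is deleted. For $n\geq 3$, deleting an edge of $K_n$ leaves its two endpoints joined through several parallel non-essential intermediaries, so no rents are incurred and the relevant comparison reduces to $b_1-c\geq b_2$, i.e.\ $c\leq b_1-b_2$, again guaranteed by $c<b_1-b_2$. This deletion constraint is the binding one (it is tighter than the addition constraint above), and it confirms that $K_n$ is the unique pairwise-stable outcome at every level of the recursion, in agreement with the corollary.
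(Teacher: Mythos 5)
Your proposal is correct and follows essentially the same route as the paper, which obtains the corollary simply by setting $d=1$ in Proposition~\ref{thm:smallworld} and using the fact that diameter at most $1$ on a fixed node set forces the complete graph; your explicit treatment of that last step and of uniqueness is exactly what the paper leaves implicit. The additional game-tree re-derivation and stability check are consistent bonuses (with the minor caveat that for $n=3$ the single intermediary after an edge deletion \emph{is} essential, so the comparison there is $b_1-c\geq(1-\gamma)b_2$, which is still implied by $c<b_1-b_2$).
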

%


\begin{proposition}
\label{thm:bipartite}
For a network with $\gamma <   \frac{b_2 - b_3}{3b_2 - b_3} $, if $b_1-b_2+ \gamma \left( 3b_2 - b_3 \right) <  c < b_1 - b_3$ 
and $\left( 1-\gamma \right) \left( b_2-b_3 \right) < c_0 \leq \left( 1-\gamma \right) b_2$, the unique resulting topology is 
bipartite Tur\'an graph.
\end{proposition}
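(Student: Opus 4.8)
The plan is to follow the same inductive, game-tree-directed scheme used for the star in Proposition~\ref{thm:star}, now taking the induction hypothesis to be that the network on $n-1$ nodes is the bipartite Tur\'an graph $K_{\lceil (n-1)/2\rceil,\lfloor (n-1)/2\rfloor}$, with parts $L$ (the larger, of size $p$) and $R$ (the smaller, of size $q\le p$). I would direct the newly entering node $A$ to \emph{join the smaller part} $R$, i.e.\ to become adjacent to every node of $L$ and to no node of $R$; since $|p-(q+1)|\le 1$, the result is again a bipartite Tur\'an graph, now on $n$ nodes. The utility computations via Equation~(\ref{eqn:utility}) simplify greatly once $p,q\ge 2$, because then any two same-part nodes are joined by at least two internally disjoint paths, so there are no essential nodes and hence no rents or bridging benefits in the bulk of the graph; the analysis reduces to bookkeeping the distances $1,2,3$ together with the rents incurred only while $A$ is still a low-degree, freshly entered node.

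First I would handle entry. When $A$ proposes its first link it chooses between a node of $L$ (degree $q$, entry fee $qc_0$) and a node of $R$ (degree $p$, entry fee $pc_0$); comparing the resulting utilities, the difference factors as $(q-p)[(1-\gamma)(b_2-b_3)-c_0]$, so forcing $A$ to attach to $L$ (and thereby land in the smaller part) when $p>q$ is exactly the condition $c_0>(1-\gamma)(b_2-b_3)$, while for $p=q$ either choice is Tur\'an. The companion bound $c_0\le(1-\gamma)b_2$ makes the entering node's utility positive for every $n$, rendering entry beneficial independently of the number of nodes, as in Proposition~\ref{thm:star}. Having attached to some $v\in L$, node $A$ sits at distance $3$ from every other node of $L$, and I would show that each subsequent inter-part link $A$ adds raises \emph{both} endpoints' utilities when $c<b_1-b_3$ (the gain on a distance-$3$ pair being $b_1-c-b_3$). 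Read from the deletion side, the same inequality guarantees pairwise stability against severing an inter-part link in the final graph, since deleting such a link turns a distance-$1$ pair into a distance-$3$ pair.

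The delicate part is forbidding intra-part links, and here the binding case is \emph{not} the generic large-$n$ step but the small transition $K_{2,1}\to K_{2,2}$. For $p,q\ge 2$ one checks that an $R$-node rejects $A$'s intra-part overture as soon as $c>b_1-b_2+\gamma b_2$, a comparatively mild requirement, because the dense intermediate graph is $2$-connected and no node gains bridging income from the extra link. When $n=4$, however, the graph on three nodes is the path $K_{2,1}$, whose centre $X$ is a cut vertex carrying genuine bridging rents: with $A$ a pendant on a leaf, $X$ is essential for the pairs $(Y,Z)$ and $(A,Z)$, earning bridging $2\gamma b_2+\gamma b_3$; were $A$ to link to $X$ (creating a triangle and destroying bipartiteness), $X$'s bridging income would \emph{rise} to $4\gamma b_2$, making $X$ strictly \emph{more} willing to accept. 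Forcing $X$ to reject instead requires $c>b_1-b_2+\gamma(3b_2-b_3)$, which is precisely the stated lower bound. I expect this step to be the main obstacle: it is the one place where the sparse structure creates an essential node whose bridging benefits make the wrong link attractive, so the full rent/bridging accounting of Equation~(\ref{eqn:utility}) must be carried out rather than short-circuited. I would separately dispose of the trivial cases ($n=2$ giving $c<b_1$, and $n=3$ giving the path through the weaker triangle-exclusion bound $c>b_1-b_2+\gamma b_2$), and note that for $n\ge 5$ the parts already have size $\ge 2$, so no cut vertices, and hence no such amplified bridging, ever arise.

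Finally I would assemble the conditions and check consistency. The upper and lower bounds on $c$ are compatible exactly when $b_1-b_2+\gamma(3b_2-b_3)<b_1-b_3$, i.e.\ $\gamma<\frac{b_2-b_3}{3b_2-b_3}$, which is the hypothesis of the proposition, while $(1-\gamma)(b_2-b_3)<(1-\gamma)b_2$ holds automatically, so the admissible window for $c_0$ is nonempty. Collecting $b_1-b_2+\gamma(3b_2-b_3)<c<b_1-b_3$ together with $(1-\gamma)(b_2-b_3)<c_0\le(1-\gamma)b_2$ then yields, by induction on $n$, the bipartite Tur\'an graph as the unique pairwise stable topology.
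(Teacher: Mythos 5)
Your proposal is correct and follows essentially the same route as the paper: induction on $n$, steering the entrant to the larger partition via the factorization $(q-p)\left[(1-\gamma)(b_2-b_3)-c_0\right]$, obtaining the upper bound $c<b_1-b_3$ from inter-part link creation/deletion, and correctly isolating the $n=4$ transition $K_{2,1}\to K_{2,2}$ — where the cut vertex's bridging income rises from $2\gamma b_2+\gamma b_3$ to $4\gamma b_2$ — as the bottleneck that forces $c>b_1-b_2+\gamma(3b_2-b_3)$ and hence $\gamma<\frac{b_2-b_3}{3b_2-b_3}$. The paper's proof differs only in that it exhaustively verifies every intermediate-state move through a Type I--IV node taxonomy (e.g., Type III/IV nodes not deleting or proposing links mid-evolution), all of which reduce to conditions weaker than the ones you identify.
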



%
\begin{proposition}
\label{thm:2star}
Let $\sigma$ be the upper bound on the number of nodes that can enter the network and $\lambda = \lceil \frac{\sigma}{2} -1 \rceil \left( 2b_2-b_3 \right)$.
Then, if $\left( 1-\gamma \right) \left( b_2-b_3 \right) < c_0 < \left( 1-\gamma \right) \left( b_2-b_4 \right) $ and either \\
(i) $\gamma < \min \Big\{  \frac{b_2-b_3}{\lambda-b_3} ,  \frac{b_3}{b_2+b_3} \Big\}$ and $b_1-b_3+\gamma(b_2+b_3) \leq c < b_1$, or\\
(ii) $\frac{b_2-b_3}{\lambda-b_3} \leq \gamma < \min \Big\{ \frac{b_2}{\lambda+b_2} , \frac{b_3}{b_2+b_3}  \Big\}$ and $b_1-b_2+\gamma b_2 + \gamma \lambda \leq c < b_1$,
\\
the unique resulting topology is 
2-star.
\end{proposition}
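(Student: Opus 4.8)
The plan is to mirror the inductive, game-tree methodology used for Proposition~\ref{thm:star}, treating the two initial nodes as the prospective centers. The base case is again two nodes forming the central edge, which forces $c<b_1$; I would then assume inductively that the $(n-1)$-node network is a balanced $2$-star with centers $M,N$ and derive conditions so that the entry and subsequent evolution of the $n$th node again yields a balanced $2$-star. The distance profile I will use throughout is that a leaf sits at distance $1$ from its own center, $2$ from its sibling leaves and from the opposite center, and $3$ from the opposite center's leaves; this explains why $b_2,b_3$ govern stability, while $b_4$ enters only through an off-path entry move, since a newcomer attaching to a leaf would sit at distance $4$ from the far center's leaves.

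For the entry stage I would write, via Equation~(\ref{eqn:utility}), the newcomer's utility under three candidate first links: to the less-loaded center (with $q$ leaves), to the more-loaded center (with $p$ leaves), and to a leaf. Comparing the two centers isolates a term proportional to $(q-p)\big[(1-\gamma)(b_2-b_3)-c_0\big]$, so requiring the newcomer to attach to the less-loaded center (preserving balance) forces the lower bound $c_0>(1-\gamma)(b_2-b_3)$; comparing center versus leaf isolates a term proportional to $q\big[(1-\gamma)(b_2-b_4)-c_0\big]$, forcing the upper bound $c_0<(1-\gamma)(b_2-b_4)$. Together these give the stated $c_0$-window. I would check the small base cases ($n=3,4$), where one or both centers are still leaf-free, separately, and confirm that the relevant center always accepts, its bridging gain making acceptance non-binding given $c<b_1$.

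For the stability stage I would rule out every improving move leaving the $n$-node $2$-star. The routine moves (a leaf deleting its link, a center deleting the central edge or one of its leaves, and two \emph{sibling} leaves linking) are either non-binding or give only the weak bound $c\geq b_1-b_2+\gamma b_2$. The two binding deviations are: (a) two \emph{cousin} leaves (on opposite centers) linking, whose rent/bridging accounting yields $c\geq b_1-b_3+\gamma(b_2+b_3)$, which is bound $A$; and (b) a leaf linking to the \emph{opposite} center, which I prevent by making that center unwilling to accept, since its bridging benefit on the $q$ pairs (its own leaves, the new neighbour) jumps by $\gamma(2b_2-b_3)$ each, so non-acceptance requires $c\geq b_1-b_2+\gamma b_2+\gamma q(2b_2-b_3)$; the worst case $q=\lceil\sigma/2-1\rceil$ yields bound $B$ with $\lambda$. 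Because the requirement is $c\geq\max\{A,B\}$ and $A-B=(b_2-b_3)-\gamma(\lambda-b_3)$ changes sign exactly at $\gamma=\tfrac{b_2-b_3}{\lambda-b_3}$, this maximum splits into the two stated cases, and the $\gamma$-upper bounds $\tfrac{b_3}{b_2+b_3}$ and $\tfrac{b_2}{\lambda+b_2}$ are precisely what keep $A<b_1$ and $B<b_1$, i.e.\ what make the $c$-interval nonempty.

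The main obstacle I anticipate is the intermediation-rent and bridging bookkeeping in deviation (b) and its mirror, a center linking to the opposite center's leaf: one must correctly identify which nodes become essential after the new cross-link, how many pairs each newly bridges, and at what distances, because it is exactly the per-leaf factor $(2b_2-b_3)$ and its multiplication by the maximum leaf count that introduce both the $\sigma$-dependence (through $\lambda$) and the whole two-case structure. A secondary subtlety is that, unlike in Proposition~\ref{thm:star}, the conditions cannot be made independent of the node bound $\sigma$, so I must carry the worst-case leaf count through consistently and handle the boundary ($\Delta=0$) situations of pairwise stability with care.
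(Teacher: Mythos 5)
Your proposal matches the paper's own proof in both structure and all the decisive computations: the skeleton/base cases for $n\le 4$, the entry-stage comparisons yielding the window $(1-\gamma)(b_2-b_3)<c_0<(1-\gamma)(b_2-b_4)$, the two binding stability constraints (cousin-leaf links giving $c\ge b_1-b_3+\gamma(b_2+b_3)$, and the opposite center refusing a cross-link because its per-own-leaf bridging gain is $\gamma(2b_2-b_3)$, maximized at $\lceil\sigma/2-1\rceil$ leaves), and the case split at $\gamma=\frac{b_2-b_3}{\lambda-b_3}$ with the $\gamma$-caps coming from nonemptiness of the $c$-interval. This is essentially the paper's argument, so no further comparison is needed.
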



The following corollary transforms the above conditions in (i) to be independent of the upper bound on the number of nodes that can enter the network.

\begin{corollary}
\label{cor:2star}
For a network with $\gamma=0$, if $b_1-b_3 \leq  c < b_1$ and $b_2-b_3< c_0 < b_2-b_4$,
the unique resulting topology is 
2-star.
\end{corollary}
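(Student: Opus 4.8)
The plan is to obtain this corollary directly from Proposition~\ref{thm:2star} by specializing to $\gamma = 0$, rather than re-running the game-tree analysis from scratch. The whole content of the corollary is that, once $\gamma$ is set to zero, case (i) of that proposition both remains applicable and has all of its $\sigma$-dependence (through $\lambda$) stripped away, leaving exactly the two stated hypotheses. So the task reduces to three things: checking that case (i) applies at $\gamma=0$, substituting $\gamma=0$ into its inequalities, and confirming that what survives no longer mentions $\lambda$ or $\sigma$.

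First I would verify applicability of case (i) at $\gamma = 0$. Case (i) requires $\gamma < \min\{\frac{b_2 - b_3}{\lambda - b_3}, \frac{b_3}{b_2 + b_3}\}$, so with $\gamma = 0$ it suffices to show both arguments of the minimum are strictly positive. The decay convention $b_{i+1} < b_i$ gives $b_2 > b_3 > b_4 \geq 0$, and the corollary's own hypothesis $b_2 - b_3 < c_0 < b_2 - b_4$ forces the interval to be nonempty, hence $b_4 < b_3$ and in particular $b_3 > 0$; thus $\frac{b_3}{b_2 + b_3} > 0$. For the other term I would observe that for any meaningful 2-star ($\sigma \geq 4$) one has $\lceil \frac{\sigma}{2} - 1 \rceil \geq 1$, so $\lambda \geq 2b_2 - b_3 > b_3$ (using $b_2 > b_3$), giving $\lambda - b_3 > 0$ and $\frac{b_2 - b_3}{\lambda - b_3} > 0$. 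Hence $\gamma = 0$ satisfies the strict inequality and case (i) is in force.

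Next I would substitute $\gamma = 0$ into the remaining inequalities of case (i). The link-cost condition $b_1 - b_3 + \gamma(b_2 + b_3) \leq c < b_1$ collapses to $b_1 - b_3 \leq c < b_1$, and the entry-factor condition $(1-\gamma)(b_2 - b_3) < c_0 < (1-\gamma)(b_2 - b_4)$ collapses to $b_2 - b_3 < c_0 < b_2 - b_4$, which are precisely the hypotheses of the corollary. The point worth stressing is that every occurrence of $\lambda$ (and therefore of the node-bound $\sigma$) sat inside a $\gamma$-weighted term; with $\gamma = 0$ those terms vanish, so the surviving conditions reference neither $\lambda$ nor $\sigma$, yielding the advertised independence from the upper bound on the number of nodes.

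The only genuine obstacle I anticipate is making sure that the positivity of the minimum in the $\gamma$-constraint is actually \emph{guaranteed} by the corollary's own hypotheses, so the specialization is logically valid and not vacuous. As shown above, this reduces to the two positivity checks, both of which follow from the benefit-decay convention together with the nonemptiness of the $c_0$-interval. Granting those, the 2-star conclusion of Proposition~\ref{thm:2star} transfers verbatim and the corollary follows.
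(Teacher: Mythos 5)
Your proof is correct and takes essentially the same route the paper intends: the corollary is obtained by specializing case (i) of Proposition~\ref{thm:2star} to $\gamma=0$, whereupon every $\lambda$- (hence $\sigma$-) dependent term vanishes and the stated bounds on $c$ and $c_0$ remain. Your additional check that $\gamma=0$ actually satisfies the strict upper bound on $\gamma$ (positivity of both terms in the minimum, using $b_2>b_3>b_4\geq 0$ and $\lambda-b_3>0$ for $\sigma\geq 4$) is a detail the paper leaves implicit but is worth making explicit.
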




We define {\em base graph} of a network formation process as the graph from which the process starts. The conditions derived for the formation of the above networks are obtained starting from the graph consisting of a single node (corresponding to the base case of formation of a network with $n=2$). 
Now for certain topologies to be well-defined, it is required that the network has a certain minimum number of nodes. For instance, for a network to have a well-defined $k$-star topology, it should consist of at least $2k$ nodes (complete network on $k$ centers with one leaf node connected to each center). So it is reasonable to consider this network of $2k$ nodes as a base graph for forming a $k$-star network.
Moreover, in case of some topologies (under a given utility model), the conditions required for its formation on discretely small number of nodes, may be inconsistent with that required on arbitrarily large number of nodes. We will now see that, under the proposed network formation and utility models, $k$-star ($k\geq 3$) is one such topology; and a way to circumvent this problem is to start the network formation process from the aforementioned base graph.

Note that in a real-world network, the upper bound on the number of nodes 
is unknown to the network owner. So it is essential that, irrespective of the number of nodes, the desired topology is formed and is stable. That is, the conditions on the network must be set such that the entire family of networks having that topology, is stable.


\begin{lemma}
\label{lem:kstar0}
Under the proposed utility model, for the entire family of $k$-star networks (given some $k\geq 3$) to be pairwise stable, it is necessary that 
$\gamma=0$ and $c=b_1-b_3$.
\end{lemma}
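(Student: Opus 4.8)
The plan is to exhibit three specific link alterations whose suppression, required simultaneously across \emph{all} members of the $k$-star family, forces the two stated equalities. Throughout, let $p$ denote the number of leaves attached to a center; since leaves are distributed evenly, $p$ grows linearly with the total number of nodes, so the requirement that the \emph{whole} family be pairwise stable means the stability inequalities must hold for every $p\geq 1$. First I would record the distance and essentiality structure of a $k$-star under Equation~(\ref{eqn:utility}): any two centers are at distance $1$; a center and a non-incident center's leaf are at distance $2$ with the leaf's own center essential; two leaves of the same center are at distance $2$ with their common center essential; and two leaves of different centers are at distance $3$ with both of their centers essential. These facts drive every subsequent computation.

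To force $\gamma=0$, I would consider a leaf $a$ of center $A$ proposing a link to a different center $B$. Computing $\Delta u_a$, the dominant effect is that each of $B$'s $p$ leaves moves from distance $3$ to distance $2$ relative to $a$, contributing a term $p(1-\gamma)(b_2-b_3)$, whose coefficient is strictly positive; hence $a$ strictly gains for all large $p$, regardless of $\gamma<1$. The pairwise-stability condition for a non-edge then forces $B$ to strictly lose. But computing $\Delta u_B$, the decisive term comes from $B$'s bridging role: for each of its $p$ leaves $b$, the essential set of the pair $(a,b)$ shrinks from $\{A,B\}$ to $\{B\}$ once $a$ attaches to $B$, so $B$'s share rises from $\gamma b_3$ to $2\gamma b_2$, giving a term $p\,\gamma(2b_2-b_3)$. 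Since $2b_2-b_3>0$, requiring $\Delta u_B\leq 0$ for arbitrarily large $p$ forces $\gamma(2b_2-b_3)\leq 0$, i.e.\ $\gamma=0$.

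With $\gamma=0$ in hand, I would pin down $c$ with two opposing inequalities. For the lower bound, consider two leaves of different centers (distance $3$); with no rents their mutual link changes each endpoint's utility by exactly $b_1-c-b_3$, and since the two endpoints are symmetric, preventing the link requires $b_1-c-b_3\leq 0$, i.e.\ $c\geq b_1-b_3$. For the upper bound, I would examine the \emph{smallest} admissible configuration, a $k$-star with a single leaf per center ($p=1$), and the unilateral deletion by a center $A$ of its link to another center $B$. Here $B$ drops from distance $1$ to $2$ and $B$'s lone leaf from $2$ to $3$, and since $k\geq 3$ keeps the network connected through a third center, a direct count gives $\Delta u_A = c-b_1+b_3$; stability requires this $\leq 0$, i.e.\ $c\leq b_1-b_3$. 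The two bounds together yield $c=b_1-b_3$, completing the argument.

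The main obstacle is the essentiality and bridging bookkeeping in the $\gamma=0$ step: one must correctly track how adding the edge $aB$ alters the essential sets (and hence the intermediation rents and bridging shares) not merely for the pair $(a,B)$ but for every pair $(a,b)$ with $b$ a leaf of $B$, since it is precisely the halving of the essential set there that produces the decisive $\Theta(p)$ term in $\Delta u_B$. A secondary point requiring care is verifying that all other distance changes induced by each deviation are $O(1)$ (or vanish once $\gamma=0$), so that the stated $\Theta(p)$ coefficients are genuinely the binding ones and the two $O(1)$ bounds on $c$ are clean.
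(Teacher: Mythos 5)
Your proof is correct, but the route to $\gamma=0$ is genuinely different from the paper's. The paper works with two scenarios only --- (I) a center deleting its link to another center in the minimal configuration (one leaf per center), yielding $c \leq b_1-b_3+\gamma(b_2-b_4)$ for $k\geq 4$ (and an analogous bound for $k=3$), and (II) two leaves of different centers creating a link in a general configuration, yielding $c \geq b_1-b_3+\gamma\left((k-1)b_2+b_3\right)$ --- and extracts $\gamma=0$ from the incompatibility of these two bounds whenever $\gamma>0$ (since $(k-1)b_2+b_3 > b_2-b_4$ for $k\geq 3$), after which both collapse to $c=b_1-b_3$. You instead isolate $\gamma=0$ first via a third deviation the paper does not use in this lemma (a leaf of one center attaching to a foreign center), exploiting the fact that the foreign center's bridging share over its $p$ leaves grows like $p\,\gamma(2b_2-b_3)$ while the leaf's own gain grows like $p(1-\gamma)(b_2-b_3)$, so for unboundedly large members of the family the link would be mutually profitable unless $\gamma=0$; your bookkeeping of the essential sets (the pair $(a,b)$ passing from $E=\{A,B\}$ at distance $3$ to $E=\{B\}$ at distance $2$, and the absence of other $\Theta(p)$ terms in $\Delta u_B$) checks out against Equation~(\ref{eqn:utility}). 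With $\gamma=0$ in hand, your two $c$-bounds come from the same two deviations the paper uses, but the computations become exact one-line identities ($\Delta u = b_1-c-b_3$ for the cross-center leaf pair, $\Delta u_A = c-b_1+b_3$ for the center deletion at $p=1$), both of which are correct. The trade-off: the paper's argument needs only two specific members of the family (its scenario-II bound is independent of the leaf counts $m_i$), whereas yours requires the family to contain arbitrarily large members to drive the asymptotic step; in exchange, yours cleanly separates the role of the intermediation rent from the role of the link cost and makes the final bounds on $c$ transparent rather than emerging from cancellation of $\gamma$-terms.
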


It can be seen that 
the conditions necessary for the family of $k$-star networks to be pairwise stable (Lemma~\ref{lem:kstar0})
are sufficient conditions for the formation of a 2-star network uniquely, 
when $b_2-b_3< c_0 < b_2-b_4$ (Corollary~\ref{cor:2star}).
%
When $c_0 < b_2-b_3$, these conditions $\gamma=0$ and $c=b_1-b_3$, are sufficient for the formation of a star topology uniquely (Proposition~\ref{thm:star}).
When $b_2-b_4 < c_0 < b_2$, these necessary conditions form a cycle among the initially entered nodes, but fails to form a clique among $k$ nodes even as more nodes enter the network, thus making it inconsistent with the $k$-star topology.
It can be similarly seen that for other values of $c_0$ including the boundary cases $c_0 = b_2-b_3$ and $c_0 = b_2-b_4$, the network so formed is not consistent with $k$-star topology for any $k\geq 3$. 
So we have that,
%
under the proposed network formation and utility models, with the requirement that the entire family be pairwise stable, no $k$-star network (given some $k\geq3$) can be formed starting with a network consisting of a single node.
%


A reasonable solution to overcome this problem is to start the network formation process from some other base graph. Such a graph can be obtained by external methods such as providing additional incentives to its nodes.
For 
initializing the formation of $k$-star, 
as mentioned earlier,
the base graph can be taken to be the complete network on the $k$ centers, with the centers connected to one leaf node each. As the base graph consists of $2k$ nodes, the induction starts with the base case for formation of $k$-star network with $n=2k+1$.


\begin{proposition}
\label{thm:kstar}
For a network starting with the base graph for $k$-star (given some $k \geq 3$), and $\gamma =0 $, if $c =b_1-b_3 $ 
and $ b_2-b_3  < c_0 < b_2-b_4$, the unique resulting topology is 
$k$-star.
\end{proposition}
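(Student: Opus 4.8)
The plan is to follow the same induction-along-the-game-tree template used for Proposition~\ref{thm:star}, but now with the base case being the supplied base graph on $2k$ nodes rather than a single node. I would first substitute the hypotheses $\gamma=0$ and $c=b_1-b_3$ into Equation~(\ref{eqn:utility}). This is the crucial simplification: the intermediation-rent and bridging-benefit terms vanish entirely (each carries a factor $\gamma$), and the direct-benefit coefficient collapses to $b_1-c=b_3$. Hence for any non-entering node $j$ we have $u_j = d_j b_3 + \sum_{w:\, l(j,w)>1} b_{l(j,w)}$, while a newly entering node attaching to target $\mathrm{T}$ pays an extra $-c_0 d_{\mathrm{T}}$. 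With this, the utilities of a center (degree $(k-1)+m_i$, writing $m_i$ for its leaf count) and of a leaf can be read off purely from the distance census of a $k$-star, in which the only occurring distances are $1,2,3$.

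Next comes the inductive step: assume the network on $n-1\ge 2k$ nodes is a $k$-star and let $A$ be the entering node. When $A$ attaches to a center carrying $m_i$ leaves, the coefficient of $m_i$ in $u_A$ works out to $(b_2-b_3-c_0)$, which is negative precisely because $c_0>b_2-b_3$; therefore $A$ strictly prefers a least-loaded center (ties being between centers of the same type, hence interchangeable), and attaching there keeps the maximum leaf-count difference at most one, preserving the $k$-star shape. I would then rule out $A$ attaching to a leaf, which, with $\gamma=0$, the leaf would otherwise accept since it simply gains $b_3>0$. Comparing $A$'s best center option against a leaf option, the distance census shows the far nodes sit at distance $3$ rather than $4$, and after tallying the terms the inequality reduces to $c_0<b_2-b_4$, with the base graph $m_i=1$ being the binding instance; this same upper bound also forces $u_A>0$, so $A$ indeed enters. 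The target center accepts because it merely gains a distance-$1$ neighbour worth $b_3>0$.

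It then remains to verify pairwise stability of the resulting $k$-star, independently of $n$ and of the individual leaf counts. Using the notion of types I would enumerate the deviations: a center deleting a center--center edge moves the far center from distance $1$ to $2$ and its $m_j$ leaves from $2$ to $3$, giving $\Delta=(1-m_j)(b_2-b_3)\le 0$ since every center keeps at least one leaf and $k\ge 3$ guarantees the two centers remain at distance $2$; a center deleting a leaf, or a leaf deleting its center, is plainly loss-making. The two genuinely delicate non-edges are controlled by the necessary conditions of Lemma~\ref{lem:kstar0}. Two leaves of different centers sit at distance $3$, and since $b_1-c=b_3$ equals the distance-$3$ benefit, linking them changes nobody's utility, so the link does not form; and a leaf attempting to attach to a second center is blocked not by the leaf but by that center, which already reaches the leaf at distance $2$ and would only obtain it at distance $1$, losing $b_2-b_3>0$ while bringing no new node closer.

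The main obstacle I anticipate is the distance bookkeeping across all deviated graphs and all admissible leaf-count profiles, together with showing that every binding constraint is attained at the base graph, so that the derived conditions are genuinely independent of $n$. The conceptual subtlety is the second-center attachment: it is the one deviation where the leaf itself would profit (when the target center has $\ge 2$ leaves), so the argument must pivot on the target center's refusal rather than on the leaf's incentive. This is also exactly where the $\gamma=0$ hypothesis earns its keep, since a positive $\gamma$ would hand that center bridging rents and could flip its decision, which is precisely the content of Lemma~\ref{lem:kstar0}.
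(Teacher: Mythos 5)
Your proposal follows essentially the same route as the paper's proof: induction from the $2k$-node base graph, Lemma~\ref{lem:kstar0} pinning down $\gamma=0$ and $c=b_1-b_3$, the negative coefficient of $m_i$ yielding $c_0>b_2-b_3$ for least-loaded-center attachment, the leaf-versus-center comparison yielding $c_0<b_2-b_4$, and the same pairwise-stability checks with the same blocking mechanisms (the target center refusing a link to another center's leaf, utility-neutral cross-partition leaf links, weakly loss-making center--center deletions). The only case you leave implicit is a link between two leaves of the \emph{same} center, which your distance census disposes of immediately since $b_1-c-b_2=b_3-b_2<0$, so the argument is complete and matches the paper's.
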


\subsection{Intuition Behind the Sufficient Conditions}
\label{sec:explain}

The network entry fee has an impact on the resulting topology as seen from the above propositions. For instance, in Propositions~\ref{thm:star} and \ref{thm:bipartite}, the intervals spanned by the values of $c$ and $\gamma$ may intersect, but the values of network entry factor $c_0$ span mutually exclusive intervals separated at $(1-\gamma)(b_2-b_3)$. In case of star, $c_0$ is low and so a newly entering node can afford to connect to the center, which in general, has very high degree. In case of bipartite Tur\'an graph, it is important to ensure that the sizes of the two partitions are as equal as possible. As $c_0$ is high, a newly entering node connects to a node with a lower degree (whenever applicable), that is, to a node that belongs to the partition with more number of nodes. Hence the newly entering node potentially becomes a part of the partition with fewer number of nodes, thus maintaining a balance between the sizes of the two partitions. 
In case of $k$-star, as the objective is to ensure that a newly entering node connects to a node with moderate degree, the network entry factor is not so high that a newly entering node prefers connecting to a leaf node and not so low that it prefers connecting to a center with the highest degree. This intuition is clearly reflected in Propositions~\ref{thm:2star} and \ref{thm:kstar} where $c_0$ takes intermediate values.
In general, {\em network entry factor} $c_0$ plays an important role in dictating the degree distribution of the resulting network; 
a higher value of $c_0$ lays the foundation for formation of a more regular graph.

As $c$ increases, the desirability of a node to form links decreases.
This is clear from Proposition~\ref{thm:smallworld} which says that, as $c$ decreases, nodes would create more links, hence effectively reducing the network diameter.
In particular, a complete network is formed when the costs of maintaining links is extremely low, as reflected in Corollary~\ref{thm:complete}. The remaining topologies are formed in the intermediate ranges of $c$. 

From Propositions~\ref{thm:bipartite}, \ref{thm:2star} and \ref{thm:kstar}, it can be seen that the feasibility of a network being formed depends on the values of $\gamma$ as well, which arises owing to contrasting densities of connections in a network. 
For instance, in a bipartite Tur\'an network, nodes belonging to different partitions are densely connected with each other, while that within the same partition are not connected at all. Similarly, in a $k$-star network, there is an extreme contrast in the densities of connections (dense amongst centers and sparse for leaf nodes).

\subsection{Connection to Efficiency}
\label{sec:efficiency}

We now analyze efficiency of the considered networks. 
As the derived conditions are sufficient, there may exist other sets of conditions that uniquely result in a given topology. We analyze the efficiency 
assuming that the networks are formed using the derived conditions.
%

From Equation~(\ref{eqn:utility}), the intermediation rents are transferable among the nodes, and so do not affect the efficiency of a network. Furthermore, the network entry fee is paid by any node at most once, and so does not account for efficiency in the long run. So the expression for efficiency of a network is
\begin{equation}
\nonumber
\sum_{j\in N} \Bigg( d_j(b_1-c) + \sum_{\substack{w \in N \\l(j,w)>1}}{b_{l(j,w)}} \Bigg)
 \end{equation}

The following result follows from the analysis by 
Narayanam and Narahari~\cite{ramasuri1}.
\begin{lemma}
\label{lem:efficient}
Let $\mu$ be the number of nodes in  network. \\
(a) If $c < b_1-b_2$, complete graph is uniquely efficient.\\
(b) If $b_1-b_2< c \leq b_1 + \left( \frac{\mu-2}{2} \right) b_2$, star is the unique efficient topology.\\
(c) If $c >b_1+ \left( \frac{\mu-2}{2}\right)b_2 $, null graph is uniquely efficient.
\end{lemma}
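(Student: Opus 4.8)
The plan is to first rewrite the welfare functional as a sum over \emph{unordered pairs} of nodes and then run a Jackson--Wolinsky style case analysis; this is exactly the route underlying the cited analysis of Narayanam and Narahari~\cite{ramasuri1}, so the argument can either reconstruct it or invoke it directly. Starting from the efficiency expression displayed just before the statement, observe that each edge contributes $b_1-c$ to each of its two endpoints and each pair at distance $k>1$ contributes $b_k$ to each endpoint. Hence the social welfare of a network $g$ equals $2\sum_{\{j,w\}}\phi\big(l(j,w)\big)$, where $\phi(1)=b_1-c$, $\phi(k)=b_k$ for $k\geq 2$, and $\phi(\infty)=0$. Since cross-component pairs contribute $\phi(\infty)=0$, the welfare decomposes additively over the connected components of $g$; this structural fact drives the whole proof.

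For part (a) I would argue pair by pair: when $c<b_1-b_2$ we have $\phi(1)=b_1-c>b_2\geq\phi(k)$ for every $k\geq 2$ and $\phi(1)>0=\phi(\infty)$, so each pair's contribution is \emph{strictly} maximised by making the two nodes adjacent. The complete graph realises this maximum for all pairs simultaneously, and any non-complete graph has a non-adjacent pair that strictly lowers the welfare; hence the complete graph is uniquely efficient.

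The substance is the middle regime (b), where $b_1-c<b_2$, and it proceeds in two steps. First, within a single connected component on $m$ nodes with $L$ edges, the welfare is at most $\binom{m}{2}b_2+L(b_1-c-b_2)$, obtained by charging every non-edge its maximal possible value $b_2$; since $b_1-c-b_2<0$ this bound decreases in $L$ and is therefore largest at the least feasible value $L=m-1$, where it is attained \emph{only} by the star (the unique tree of diameter at most $2$). Thus every efficient network is a disjoint union of stars, each contributing $V(m):=(m-1)(b_1-c)+\binom{m-1}{2}b_2$, with $V(1)=0$. Second, I would maximise $\sum_i V(m_i)$ over partitions of $\mu$ using convexity of $V$ (its second difference equals $b_2>0$): comparing $V$ with the chord joining $(1,0)$ and $(\mu,V(\mu))$ gives $V(m)\leq\frac{m-1}{\mu-1}V(\mu)$ on $1\leq m\leq\mu$, so for a partition into $r$ parts, $\sum_i V(m_i)\leq\frac{V(\mu)}{\mu-1}\sum_i(m_i-1)=\frac{V(\mu)}{\mu-1}(\mu-r)\leq\max\{0,V(\mu)\}$. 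The threshold then follows from the sign of $V(\mu)=(\mu-1)\big[(b_1-c)+\tfrac{\mu-2}{2}b_2\big]$: when $b_1-b_2<c<b_1+\tfrac{\mu-2}{2}b_2$ we have $V(\mu)>0$, so the single spanning star (the case $r=1$) is the unique maximiser, giving part (b); when $c>b_1+\tfrac{\mu-2}{2}b_2$ we have $V(\mu)<0$, forcing $r=\mu$, i.e. every component a singleton, which is the null graph, giving part (c). Strict convexity ($b_2>0$) yields uniqueness in both endpoints.

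The main obstacle is this two-step optimisation in regime (b): one must simultaneously show that a star beats every other connected graph on the same vertex set \emph{and} that a single spanning star beats every star-forest, and it is the latter convexity (chord) argument that pins down the exact threshold $b_1+\tfrac{\mu-2}{2}b_2$. Everything else — the pairwise argument for (a) and the sign computation of $V(\mu)$ — is routine once the welfare has been recast as a sum of distance-dependent pair values.
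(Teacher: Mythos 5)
Your proof is correct and is essentially the argument the paper relies on: the paper offers no proof of this lemma, deferring entirely to Narayanam and Narahari~\cite{ramasuri1}, whose analysis is precisely this Jackson--Wolinsky-style route --- rewrite welfare as a sum of distance-dependent pair values, show each component is dominated by a star, then optimise over star forests via convexity of $V$. The only point of friction is the boundary $c = b_1+\frac{\mu-2}{2}b_2$: your (correct) argument needs the strict inequality $V(\mu)>0$, i.e.\ $c < b_1+\frac{\mu-2}{2}b_2$, and at equality the spanning star ties with the null graph, so the ``$\leq$'' in part (b) of the statement slightly overstates uniqueness --- a defect of the lemma as stated, not of your proof.
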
 

The null network in the proposed model of recursive network formation corresponds to a single node to which no other node prefers to connect, and so the network does not grow. 


\begin{proposition}
\label{thm:eff_star}
Based on the derived sufficient conditions, null network, star network, and complete network are efficient.
\end{proposition}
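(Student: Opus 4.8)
The plan is to read off the efficiency result directly from Lemma~\ref{lem:efficient}, which characterizes the efficient topology purely in terms of where the link cost $c$ falls relative to the thresholds $b_1-b_2$ and $b_1+\left(\frac{\mu-2}{2}\right)b_2$, and then to check that the sufficient conditions derived for each of the three topologies place $c$ in the matching efficiency region. Since the three regions of Lemma~\ref{lem:efficient} partition the range of $c$, the task reduces to three containment checks, one per topology.

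For the complete graph the check is immediate: Corollary~\ref{thm:complete} forms the complete graph under $c<b_1-b_2$, which is exactly the hypothesis of Lemma~\ref{lem:efficient}(a), so the complete network that forms is (uniquely) efficient. For the star I would verify the two bounds of Lemma~\ref{lem:efficient}(b). The lower bound follows from the star condition in Proposition~\ref{thm:star}, namely $c \geq b_1-b_2+\gamma b_2 \geq b_1-b_2$ since $\gamma b_2 \geq 0$; the upper bound follows from $c<b_1$ together with the fact that a star has $\mu\geq 2$ nodes, so $b_1 \leq b_1+\left(\frac{\mu-2}{2}\right)b_2$ and hence $c<b_1\leq b_1+\left(\frac{\mu-2}{2}\right)b_2$. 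Thus the star condition is contained in the region of Lemma~\ref{lem:efficient}(b), and the star is efficient. For the null network, recall that it corresponds to a single node that nobody wants to join; by the base-case requirement in Inequality~(\ref{B1for2}) the second node enters only if $c<b_1$, so the null network forms precisely when $c\geq b_1$, which lies in the region $c>b_1+\left(\frac{\mu-2}{2}\right)b_2$ of Lemma~\ref{lem:efficient}(c) already at $\mu=2$; hence the null network is efficient.

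The main delicacy — and the step I would treat most carefully — is the behavior at the region boundaries, where Lemma~\ref{lem:efficient} asserts \emph{unique} efficiency only on open intervals and leaves the endpoints as ties. Two such endpoints are actually reachable by the derived conditions: when $\gamma=0$ the star lower bound degenerates to $c\geq b_1-b_2$, admitting the boundary value $c=b_1-b_2$ between regions (a) and (b); and the null network admits $c=b_1$, the boundary between regions (b) and (c). At these points two topologies attain equal total utility, so neither is uniquely efficient, but both are efficient in the sense of the definition (maximal welfare). Since Proposition~\ref{thm:eff_star} claims only efficiency and not uniqueness, it suffices to observe that at a tie the topology that forms is among the welfare-maximizers, which the boundary analysis confirms. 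I would therefore phrase the conclusion as \emph{efficient} throughout the closed regions $c\leq b_1-b_2$, $b_1-b_2\leq c<b_1$, and $c\geq b_1$, so that the boundary ties cause no difficulty and the three derived condition sets each yield an efficient network.
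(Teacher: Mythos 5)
Your proof is correct and follows essentially the same route as the paper: read the efficiency regions off Lemma~\ref{lem:efficient} and verify that the derived ranges of $c$ for the complete graph, the star, and the null (single-node) network are contained in the corresponding regions. Your additional treatment of the boundary ties at $c=b_1-b_2$ and $c=b_1$ is a careful refinement that the paper's proof glosses over, but it does not change the argument.
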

\begin{proof}
It is easy to see that irrespective of the value of $c_0$, if $c>b_1$, no node, external to the network, connects to the only node in the network and hence, does not enter the network. Such a network is trivially efficient as in the range $c>b_1$, it is a star of one node and also a null network. It is also clear that the star network and the complete network are efficient as the conditions on $c$ from Proposition~\ref{thm:star} and Corollary~\ref{thm:complete}, respectively, form a subset of the range of $c$ in which these topologies are respectively efficient.
\end{proof}


It can be seen that when the number of nodes in the network is small, the absolute difference between the efficiency of the resulting network and that of the efficient network is also small, and hence the network owner will not be too concerned about the efficiency of the network. 
So for the following propositions, we make a reasonable assumption that the number of nodes in the network is sufficiently large. 


\begin{proposition}
\label{thm:eff_bipartite}
Based on the derived sufficient conditions, for sufficiently large number of nodes, the efficiency of a bipartite Tur\'an network is half of that of the efficient network in the worst case and the network is close to being efficient in the best case.
\end{proposition}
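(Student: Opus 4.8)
The plan is to compare the efficiency of the bipartite Tur\'an network that actually forms (Proposition~\ref{thm:bipartite}) against the efficiency of the topology that is genuinely efficient over the same cost range. The first step is to identify that benchmark. In Proposition~\ref{thm:bipartite} the cost satisfies $b_1-b_2+\gamma(3b_2-b_3)<c<b_1-b_3$; since $\gamma\geq 0$ and $b_2>b_3$, this forces $b_1-b_2<c<b_1$, and for any $\mu\geq 2$ we have $b_1\leq b_1+\frac{\mu-2}{2}b_2$. Hence by Lemma~\ref{lem:efficient}(b) the star is the unique efficient topology throughout this range, so the efficient benchmark is the star.

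Next I would evaluate the efficiency functional $\sum_{j\in N}\big(d_j(b_1-c)+\sum_{w:\,l(j,w)>1}b_{l(j,w)}\big)$ on both graphs, noting that each has diameter two, so only $b_1$, $b_2$, and $c$ appear. For the star on $\mu$ nodes the degree sum is $2(\mu-1)$ and the only distance-two pairs are the leaf pairs, giving $W_{\text{star}}=2(\mu-1)(b_1-c)+(\mu-1)(\mu-2)b_2$. For the bipartite Tur\'an graph with parts of sizes $p,q$ (where $p+q=\mu$, $|p-q|\leq 1$, so $p,q\approx\mu/2$) the degree sum is $2pq$ and the distance-two pairs are the within-part pairs, giving $W_{\text{bip}}=2pq(b_1-c)+\big(p(p-1)+q(q-1)\big)b_2$.

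The decisive observation is the contrast in edge densities. The bipartite graph has $\Theta(\mu^2)$ edges, so its link term $2pq(b_1-c)\sim\frac{\mu^2}{2}(b_1-c)$ is of the same order as its indirect term; the star has only $\Theta(\mu)$ edges, so its link term is of lower order and $W_{\text{star}}\sim\mu^2 b_2$. Therefore for large $\mu$ the efficiency ratio tends to $\frac{W_{\text{bip}}}{W_{\text{star}}}\to\frac{b_1-c+b_2}{2b_2}$, a quantity decreasing in $c$. I would then read off its two extremes over the feasible window. At the low end, $\gamma\to 0$ and $c\to b_1-b_2$, the ratio tends to $1$, so the bipartite network is essentially efficient (the best case). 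At the high end, $c\to b_1-b_3$ with $b_3\to 0$, the ratio tends to $\frac{b_2+b_3}{2b_2}$, which approaches $\frac12$; moreover $c<b_1-b_3$ gives $b_1-c+b_2>b_2+b_3$, so the ratio stays strictly above $\frac12$ with $\frac12$ as its infimum (the worst case).

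I expect the only genuine subtlety to be the asymptotic bookkeeping in the last step, namely recognising that the $(b_1-c)$ contribution is negligible for the sparse star but enters at leading order for the dense bipartite graph; this asymmetry is precisely what produces the factor of two. Identifying the efficient benchmark and carrying out the two efficiency computations are routine once the cost range has been pinned down.
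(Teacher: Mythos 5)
Your proposal is correct and follows essentially the same route as the paper: identify the star as the efficient benchmark via Lemma~\ref{lem:efficient}, compute the welfare sums $2pq(b_1-c)+\bigl(p(p-1)+q(q-1)\bigr)b_2$ and $2(\mu-1)(b_1-c)+(\mu-1)(\mu-2)b_2$, obtain the asymptotic ratio $\tfrac{1}{2}+\tfrac{b_1-c}{2b_2}$, and read off the bounds $1$ and $\tfrac{1}{2}+\tfrac{b_3}{2b_2}\to\tfrac12$ from the admissible range of $c$. You merely make explicit the asymptotic bookkeeping (the $(b_1-c)$ term being negligible for the sparse star but leading-order for the dense bipartite graph) that the paper leaves as ``it can be shown.''
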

%
\begin{proof} 
As $\mu$ is large, $\mu$ can be assumed to be even without loss of accuracy. The sum of utilities of nodes in a bipartite Tur\'an network with even number of nodes is approximately
\begin{equation}
\nonumber
\left( \frac{\mu}{2} \right)^2 2(b_1-c)+2 \dbinom{\frac{\mu}{2}}{2}2b_2
\end{equation}
From Lemma~\ref{lem:efficient}, star network is efficient in the range of $c$ derived in Proposition~\ref{thm:bipartite}. So, to get the efficiency of the bipartite Tur\'an network relative to the star network, we divide the above expression by the sum of utilities of nodes in a star network, which is
\begin{equation}
\label{eqn:star_eff}
2(\mu-1)(b_1-c)+\dbinom{\mu -1}{2}2b_2
\end{equation}
Using the assumption that $\mu$ is large and the fact from the derived sufficient conditions that $b_2$ is comparable to $b_1-c$, it can be shown that the efficiency relative to the star network, approximately is
$\frac{1}{2}+\frac{b_1-c}{2b_2}$.
As the range of $c$ in Proposition~\ref{thm:bipartite} depends on the value of $\gamma$, the values of $c$ are bounded by $b_1-b_2$ and $b_1-b_3$. So the efficiency is bounded by 1 on the upper side and $\left( \frac{1}{2}+\frac{b_3}{2b_2} \right)$ on the lower side, of that of the star network; $\left( \frac{1}{2}+\frac{b_3}{2b_2} \right)$ can take a minimum value of $\frac{1}{2}$ when $b_3<<b_2$. 
\end{proof}


\begin{proposition}
\label{thm:eff_kstar}
Based on the derived sufficient conditions, for sufficiently large number of nodes, the efficiency of a $k$-star network is $\frac{1}{k}$ of that of the efficient network in the worst case and the network is close to being efficient in the best case.
\end{proposition}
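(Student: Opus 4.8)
The plan is to mirror the proof of Proposition~\ref{thm:eff_bipartite}: first identify the efficient benchmark under the conditions that produce a $k$-star, then compute the sum of node utilities in a $k$-star and divide by the efficiency of that benchmark. By Proposition~\ref{thm:kstar}, the formation conditions force $\gamma=0$ and $c=b_1-b_3$. Since benefits decay ($b_3<b_2$), we have $c=b_1-b_3>b_1-b_2$, and for $\mu$ large the inequality $c=b_1-b_3\le b_1+\left(\frac{\mu-2}{2}\right)b_2$ holds trivially, so Lemma~\ref{lem:efficient}(b) identifies the star as the unique efficient topology in this regime. Hence the relevant benchmark is the star, whose total utility is given by Expression~(\ref{eqn:star_eff}). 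Note also that $\gamma=0$ kills the rent and bridging terms, so the efficiency formula is just the degree term plus the indirect-benefit term.

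Next I would tally the degree and distance structure of a $k$-star on $\mu$ nodes. The $k$ centers form a clique and the $\mu-k$ leaves are split nearly evenly, so each center carries about $L=\frac{\mu-k}{k}\approx\frac{\mu}{k}$ leaves. The key observation is that the diameter is $3$: two leaves of the same center are at distance $2$, a leaf and a nonincident center are at distance $2$, and two leaves of distinct centers are at distance $3$, with all remaining pairs at distance $1$. Counting unordered pairs, the distance-$2$ pairs are dominated by the same-center leaf pairs, $k\binom{L}{2}\approx\frac{\mu^2}{2k}$ (the $O(\mu)$ center--leaf contributions being lower order), while the distance-$3$ pairs are the cross-center leaf pairs, $\binom{\mu-k}{2}-k\binom{L}{2}\approx\frac{(k-1)\mu^2}{2k}$. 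The degree/cost term $\sum_j d_j(b_1-c)$ sums to $O(\mu)$ and is therefore negligible against the $O(\mu^2)$ indirect-benefit terms. Doubling the unordered counts to match the ordered sum over pairs, I expect the total utility of the $k$-star to be approximately
\begin{equation}
\nonumber
\frac{\mu^2}{k}b_2+\frac{(k-1)\mu^2}{k}b_3=\frac{\mu^2}{k}\bigl(b_2+(k-1)b_3\bigr).
\end{equation}

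Finally I would divide by the leading term $\mu^2 b_2$ of the star's efficiency (Expression~(\ref{eqn:star_eff})) to obtain the relative efficiency
\begin{equation}
\nonumber
\frac{1}{k}\left(1+(k-1)\frac{b_3}{b_2}\right).
\end{equation}
The worst case is $b_3\ll b_2$, where this ratio tends to $\frac{1}{k}$; the best case is $b_3$ close to $b_2$, where it tends to $\frac{1}{k}\bigl(1+(k-1)\bigr)=1$, i.e.\ the network is nearly efficient. The main obstacle is the careful pair-counting: correctly classifying every ordered pair by distance and verifying that the $O(\mu)$ terms (the degree/cost contribution and the center--leaf distance-$2$ pairs) are genuinely negligible relative to the $O(\mu^2)$ leaf--leaf contributions, so that substituting $L\approx\mu/k$ is justified for large $\mu$.
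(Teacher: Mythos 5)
Your proposal is correct and follows essentially the same route as the paper: identify the star as the efficient benchmark via Lemma~\ref{lem:efficient}(b) (using $c=b_1-b_3>b_1-b_2$), tally the distance-1, distance-2, and distance-3 pairs of the $k$-star, divide the resulting total utility by Expression~(\ref{eqn:star_eff}), and let $\mu\to\infty$ to obtain the relative efficiency $\frac{1}{k}+\left(1-\frac{1}{k}\right)\frac{b_3}{b_2}$, bounded between $\frac{1}{k}$ and $1$. The only cosmetic difference is that you pass to leading-order asymptotics immediately, whereas the paper first writes the exact pair counts before approximating.
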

%
\begin{proof}
As $\mu$ is large, in particular, $\mu >> k$ (not necessarily $>>k^2$), $\mu$ can be assumed to be divisible by $k$ without loss of accuracy. The sum of utilities of nodes in such a $k$-star network is approximately
\begin{equation}
\nonumber
\begin{split}
\left\{\dbinom{k}{2}+(\mu-k) \right\}2(b_1-c)+ \left\{k (k-1)\left( \frac{\mu-k}{k} \right)+k \dbinom{\frac{\mu-k}{k} }{2} \right\} 2b_2  + \dbinom{k}{2} \left( \frac{\mu-k}{k} \right) ^2 2b_3
\end{split}
\end{equation}
From Lemma~\ref{lem:efficient}, star network is efficient in the range of $c$ derived in Propositions~\ref{thm:2star} and \ref{thm:kstar}. So, to get the efficiency of the $k$-star network relative to the star network, we divide the above expression by Expression~(\ref{eqn:star_eff}).
Using the assumption that $\mu$ is large and the fact from the derived sufficient conditions that $b_2$ and $b_3$ are comparable to $b_1-c$, it can be shown that the efficiency relative to the star network, approximately is
$\frac{1}{k}+ \left( 1- \frac{1}{k} \right) \frac{b_3}{b_2}$.
As $b_3$ is bounded by $0$ and $b_2$, the efficiency of $k$-star is bounded by $\frac{1}{k}$ and 1 of that of the star network.
\end{proof} 

\section{Deviation from the Derived Sufficient Conditions: A Simulation Study}
\label{sec:deviation}
We have derived sufficient conditions under which various network topologies uniquely emerge. In this section, we investigate the robustness of the derived sufficient conditions by studying the deviation in network topology when there is a slight deviation in these sufficient conditions. This problem is of practical interest since it may be difficult to maintain the conditions on a network throughout its formation process.

We use the notion of {\em graph edit distance} (GED)~\cite{gao2010survey} to measure 
the deviation in network topology.

\begin{definition}
Given two graphs $g$ and $h$ having same number of nodes, the {\em graph edit distance} 
between them is the minimum number of link additions and deletions required to transform $g$ into a graph that is isomorphic to $h$.
\end{definition}

\subsection{Computation of Graph Edit Distance}
\label{sec:ged}

The problem of computing GED between two graphs is NP-hard, in general~\cite{zeng2009comparing}.
However, we can exploit structural properties of certain graphs to compute GED between them and other graphs, in polynomial time;
we state three such results. 

\begin{theorem}
\label{thm:gedstar}
The graph edit distance between a graph $g$ and a star graph with same number of nodes as $g$, is $\mu+\xi-2\Delta-1$, where $\mu$ and $\xi$ are the number of nodes and edges in $g$, respectively, and $\Delta$ is the 
 highest degree 
in $g$.
\end{theorem}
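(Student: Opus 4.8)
The plan is to reduce the computation of the graph edit distance to a minimization of the edge symmetric difference over the single degree of freedom available when the target is a star, namely the choice of its center.

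First I would note that for any two graphs on a common labeled vertex set, the minimum number of edge additions and deletions transforming one into the other is exactly the size of the symmetric difference of their edge sets, since each edge present in exactly one of the two graphs must be edited once and no edge needs editing more than once. Hence, by the definition of graph edit distance, the quantity we seek is the minimum, over all graphs $h'$ on the vertex set of $g$ that are isomorphic to the star, of $|E(g)\,\triangle\,E(h')|$.

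Next I would exploit the structure of the star: a labeled star on the vertex set of $g$ is completely determined by the choice of its center $v$, with edge set $\{vw : w \neq v\}$, because all leaves are mutually interchangeable. So the minimization over $h'$ collapses to a minimization over the $\mu$ possible center vertices $v$. For a fixed center $v$ with $\deg_g(v)=d$, I would count the two contributions to the symmetric difference: the edges of the target star absent from $g$ (the $(\mu-1)-d$ edges joining $v$ to its non-neighbors, which must be added) and the edges of $g$ absent from the target (precisely the $\xi-d$ edges of $g$ not incident to $v$, which must be deleted, since every edge incident to $v$ already lies in the target). Summing gives $(\mu-1-d)+(\xi-d)=\mu+\xi-1-2d$.

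Finally I would minimize $\mu+\xi-1-2d$ over the choice of center, which amounts to maximizing $d$; the maximum is attained by a vertex of maximum degree $\Delta$, yielding $\mu+\xi-2\Delta-1$. Choosing a maximum-degree vertex as center simultaneously exhibits an explicit edit sequence of the claimed length (the upper bound) and, as the minimizer, shows no other center does better (the lower bound), both being automatic once we take the minimum. The only point requiring care, and the closest thing to an obstacle, is justifying that the isomorphism affords no freedom beyond the choice of center; once the leaf-interchangeability observation is made, the remainder is a direct count.
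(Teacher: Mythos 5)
Your proposal is correct and follows essentially the same route as the paper's proof: fix the vertex mapped to the center, count $(\mu-1)-d$ additions and $\xi-d$ deletions, and minimize over the choice of center by taking $d=\Delta$. The only addition is your explicit justification that the edit distance equals the minimum edge symmetric difference and that the center choice is the only degree of freedom, which the paper leaves implicit.
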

\begin{proof}
While transforming $g$ into a corresponding star graph, we need to map one node of $g$ to the center while the others to the leaf nodes. Let $d$ be the degree of the node which is mapped to the center. In order to transform $g$ into a star graph, the node mapped to the center must be connected to $\mu-1$ nodes. So the number of edges to be added is $(\mu-1)-d$. Also all edges connecting any two nodes, that are mapped to the leaf nodes, must be deleted, that is, all edges except the ones incident to the node mapped to the center, must be removed. These account for $\xi - d$ edges. Thus, total number of edges to be added and deleted is $\mu+\xi-2d-1$. This is minimized when $d=\Delta$.
\end{proof}

\begin{theorem}
\label{thm:gedcomplete}
The graph edit distance between a graph $g$ and a complete graph with same number of nodes as $g$, is $\frac{\mu(\mu-1)}{2} - \xi$, where $\mu$ and $\xi$ are the number of nodes and edges in $g$, respectively.
\end{theorem}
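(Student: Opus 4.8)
The plan is to exploit the fact that the complete graph on $\mu$ nodes is unique up to isomorphism, so that—unlike the star case of Theorem~\ref{thm:gedstar}—there is no freedom in how the nodes of $g$ get mapped onto the target. On a fixed vertex set, the only graph isomorphic to a complete graph is the complete graph itself, so transforming $g$ into a graph isomorphic to $K_\mu$ simply means transforming $g$ into $K_\mu$ on $V(g)$. This collapses the optimization-over-mappings step (minimizing over the choice of which node becomes the center) that made the star computation slightly involved, and it is worth stating explicitly as the one structural observation driving the argument.

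First I would record that the complete graph on $\mu$ nodes has exactly $\binom{\mu}{2} = \frac{\mu(\mu-1)}{2}$ edges, one for every unordered pair of distinct vertices. Second, since $K_\mu$ contains every possible edge, each of the $\xi$ edges already present in $g$ is also present in the target; hence no edge of $g$ ever needs to be deleted, and the deletion cost is zero. The only remaining edit operations are additions of the edges of $K_\mu$ that are missing from $g$, and the number of such missing edges is precisely $\frac{\mu(\mu-1)}{2} - \xi$.

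Finally I would argue optimality. Because no deletions are required and the target must contain $\frac{\mu(\mu-1)}{2}$ edges on the same vertex set while $g$ contains $\xi$ of them, exactly $\frac{\mu(\mu-1)}{2} - \xi$ additions are forced; no edit sequence can do better, so this count is the minimum and equals the graph edit distance. I do not anticipate any genuine obstacle here: the result follows directly once one notes the vertex-transitivity (full symmetry) of $K_\mu$, which is what removes the mapping optimization, combined with the elementary edge-count bookkeeping above.
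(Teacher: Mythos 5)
Your proposal is correct and follows essentially the same route as the paper, whose proof is the one-line observation that $g$ is transformed into the complete graph in the minimum number of steps by adding the absent edges. Your write-up merely makes explicit the uniqueness-up-to-isomorphism point and the forced-additions optimality argument that the paper leaves implicit.
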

\begin{proof}
Graph $g$ can be transformed into the corresponding complete graph in minimum number of steps by adding the edges which are absent.
\end{proof}

%

\begin{theorem}
\label{thm:gedkstar}
There exists an $O(\mu^{k+2})$ polynomial time algorithm to compute the graph edit distance 
 between a graph $g$ and a $k$-star graph with same number of nodes as $g$, where $\mu$ is the number of nodes in $g$.
\end{theorem}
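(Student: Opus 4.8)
The plan is to reduce the edit-distance computation to a combinatorial optimization over the placements of the $k$ centers, and then to show that, for each fixed placement, the optimal assignment of the remaining vertices to the centers is computable in polynomial time. First I would make the target side explicit. Since a graph obtained from $g$ by edge additions and deletions lives on the same vertex set $V(g)$, and the requirement is that this graph be isomorphic to a $k$-star, the feasible targets are exactly the \emph{labelled} $k$-stars on $V(g)$: each is specified by a $k$-element center set $S$ (whose vertices form a clique) together with a balanced assignment $\phi$ of the $\mu-k$ leaves to the centers, where ``balanced'' means every center receives $\lfloor(\mu-k)/k\rfloor$ or $\lceil(\mu-k)/k\rceil$ leaves. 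For any such target $T$ the number of edit operations is $|E(g)\mathbin{\triangle}E(T)|=\xi+|E(T)|-2\,|E(g)\cap E(T)|$. Because every labelled $k$-star on $\mu$ vertices has exactly $\dbinom{k}{2}+(\mu-k)$ edges, both $\xi$ and $|E(T)|$ are independent of the choice of target, so minimizing the edit distance is equivalent to \emph{maximizing} the number of retained edges $|E(g)\cap E(T)|$.

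Next I would split the retained edges according to the $k$-star structure. The center-center edges of $T$ are precisely the $\dbinom{k}{2}$ pairs inside $S$, so they contribute $e_g(S)$, the number of edges of $g$ with both endpoints in $S$; the leaf edges of $T$ contribute one retained edge for each leaf $v$ with $\{v,\phi(v)\}\in E(g)$. Hence $|E(g)\cap E(T)|=e_g(S)+m(S,\phi)$, and the quantity to maximize becomes $\max_{S}\big(e_g(S)+\max_{\phi}m(S,\phi)\big)$. The inner maximization over $\phi$ is a maximum-weight balanced bipartite assignment: the leaves on one side, the $k$ centers on the other, an arc of weight $1$ whenever the leaf is adjacent in $g$ to that center and $0$ otherwise, subject to each center receiving $\lfloor(\mu-k)/k\rfloor$ or $\lceil(\mu-k)/k\rceil$ leaves. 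The near-regularity constraint is encoded as lower and upper capacities on the center-to-sink arcs, so $\max_{\phi}m(S,\phi)$ is the value of a transportation (min-cost flow) instance on $O(\mu)$ vertices, solvable in $O(\mu^2)$ time.

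The algorithm then enumerates all $\dbinom{\mu}{k}=O(\mu^{k})$ center sets $S$; for each it computes $e_g(S)$ in $O(k^2)$ time and solves the transportation problem in $O(\mu^2)$ time, keeping the best total. Converting the maximized retained-edge count back through $\xi+|E(T)|-2\,|E(g)\cap E(T)|$ yields the graph edit distance, and the overall running time is $O(\mu^{k})\cdot O(\mu^2)=O(\mu^{k+2})$, as claimed. The main obstacle I anticipate is the inner step: one must argue both that restricting attention to labelled targets loses no generality (which holds because the isomorphism realizing the $k$-star may be chosen freely, so the labelled targets range over exactly the admissible outcomes of editing $g$) and that the balanced assignment with its awkward near-regularity constraint is genuinely solvable within the $O(\mu^2)$ budget. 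A convenient way to discharge the latter, exploiting that $k$ is a fixed constant, is to group the leaves into at most $2^k$ types according to their adjacency pattern to $S$; the assignment then reduces to a transportation problem of constant size, which both confirms polynomiality and in fact lowers the per-set cost, so the stated $O(\mu^{k+2})$ bound holds comfortably.
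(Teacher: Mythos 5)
Your proposal is correct and follows essentially the same route as the paper: enumerate all $\binom{\mu}{k}=O(\mu^{k})$ placements of the centers, observe that the center--center and leaf--leaf edits are forced, and reduce the remaining leaf-to-center allotment to maximizing the number of retained leaf--center edges via a flow computation in $O(\mu^{2})$ time per placement. The only difference is cosmetic: you encode the near-regularity constraint directly as lower/upper arc capacities (or a constant-size transportation problem after grouping leaves by adjacency type), whereas the paper builds an explicit max-flow gadget with $\binom{k}{q+1}$ constraint boxes enforcing that any $q+1$ pseudo-centers have total vacancy at most $(q+1)p-1$; your encoding is arguably cleaner but the argument is the same.
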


We provide the proof of Theorem~\ref{thm:gedkstar} in Appendix~\ref{app:gedkstar}.

\subsection{Simulation Setup}
\label{sec:simsetup}

In order to study the robustness of the derived sufficient conditions, we observed the effects of deviation from these conditions, on the resulting networks, using GED as the measure of topology deviation. We first observed the effect when the conditions were made to deviate throughout the network formation process. The results were, however, uninteresting since the deviation from the sufficient conditions for the formation of one topology, lead to the formation of a completely different topology. 
A primary reason for such observations is that, under the deviated conditions, some other networks are pairwise stable and these networks have a very different topology than the desired one.
In some cases, these deviated conditions were sufficient conditions for other topologies, which were, however, not the desired ones.

In fact, it is unreasonable to assume that the conditions remain deviated throughout the entire network formation process.
It is possible that the conditions deviate at some point of time, 
but the network owner will observe the resulting network under such deviations and take necessary actions to rectify this problem.
This lets us introduce the concept of {\em dynamic conditions} on the network.

In simulations, we assume that the conditions deviate during the entry of a new node and remain deviated throughout the evolution of the network until it reaches pairwise stability. Once stability is reached, the network owner observes the deviation of the network from the desired one, and takes actions to restore the original conditions. As it is undesirable for the network to remain stagnant, any node which wants to enter the deviated network next, is allowed to do so immediately, and the original conditions take effect during the entry of such a node and evolution thereafter.

We observe how the topology deviates when the conditions deviate, and if, how, and when the topology is restored, once the sufficient conditions are restored. We also observe the values within the sufficient conditions which are more robust than others, that is, when the conditions are restored to these values, the topology is restored at the earliest.

For simulations, we set the benefit parameters as per the symmetric connections model~\cite{jackson1996strategic}, that is, we set $b_i=\delta^i$, where $\delta \in (0,1)$; we set $\delta=0.8$ in our simulations.
We consider three types of values within the sufficient conditions, namely, \{low($L$), moderate($M$), high($H$)\}
 for each of the parameters $c$, $c_0$ and $\gamma$ (whenever applicable) and observe the combination of their values which are the most robust to deviations.
 In our simulation study, low values correspond to value around the lower 10\% of the range in sufficient conditions, moderate to around 50\% mark, and high to around higher 10\%.
Also, for each combination, we run the network formation process several times in order to account for the effects of randomization in the order in which nodes take decisions.

Owing to sequential entry of nodes, there is an inherent ordering on nodes and they can be numbered from 1 to the current number of nodes in the network, in the order in which they enter. We call the node number at which the sufficient conditions deviate, as the {\em deviation node}. The sufficient conditions are restored during the entry of the node immediately following the deviation node. 
We say that the deviation from sufficient conditions on a parameter is {\em negative} if the deviated value of the parameter is less than its lower bound in the sufficient conditions, and {\em positive} if its deviated value is greater than its upper bound.
In our simulation study, the amount of deviation for each parameter was 2\% of the length of its range in sufficient conditions. The results observed for 5\% and 10\% deviations were almost same. For parameters whose range in sufficient conditions is a singleton, the results were studied for an absolute deviation of 0.01 on the scale where $b_i = 0.8^i$.

\subsection{Simulation Results}
\label{sec:simresults}

We observe the effects of deviation from the derived sufficient conditions for $c$ and $c_0$ on the resulting network. The observations can be primarily classified into the following four cases, in the decreasing order of desirability to network owner:
\begin{enumerate}
\item[(A)] The network does not deviate during the entry and also during the evolution after the entry of deviation node.
\item[(B)] The network deviates after the entry of the deviation node, and perhaps remains deviated during the entry and evolution for the entry of nodes following the deviation node, but after a certain number of such node entries, the network regains its original topology.
\item[(C)] The network deviates after the entry of the deviation node and remains deviated during the entry and evolution for the entry of nodes following the deviation node; the network does not regain its original topology, but the deviation is constant and so a near-desired topology is obtained.
\item[(D)] The network deviates after the entry of the deviation node and the deviation increases monotonically during the entry and evolution for the entry of nodes following the deviation node.
\end{enumerate}

\begin{figure} [t!]
\begin{tabular}{c}
\hspace{-.7cm}
\begin{minipage}{.5\textwidth}
\centering
\iftoggle{clr}{
\includegraphics[scale=0.62]{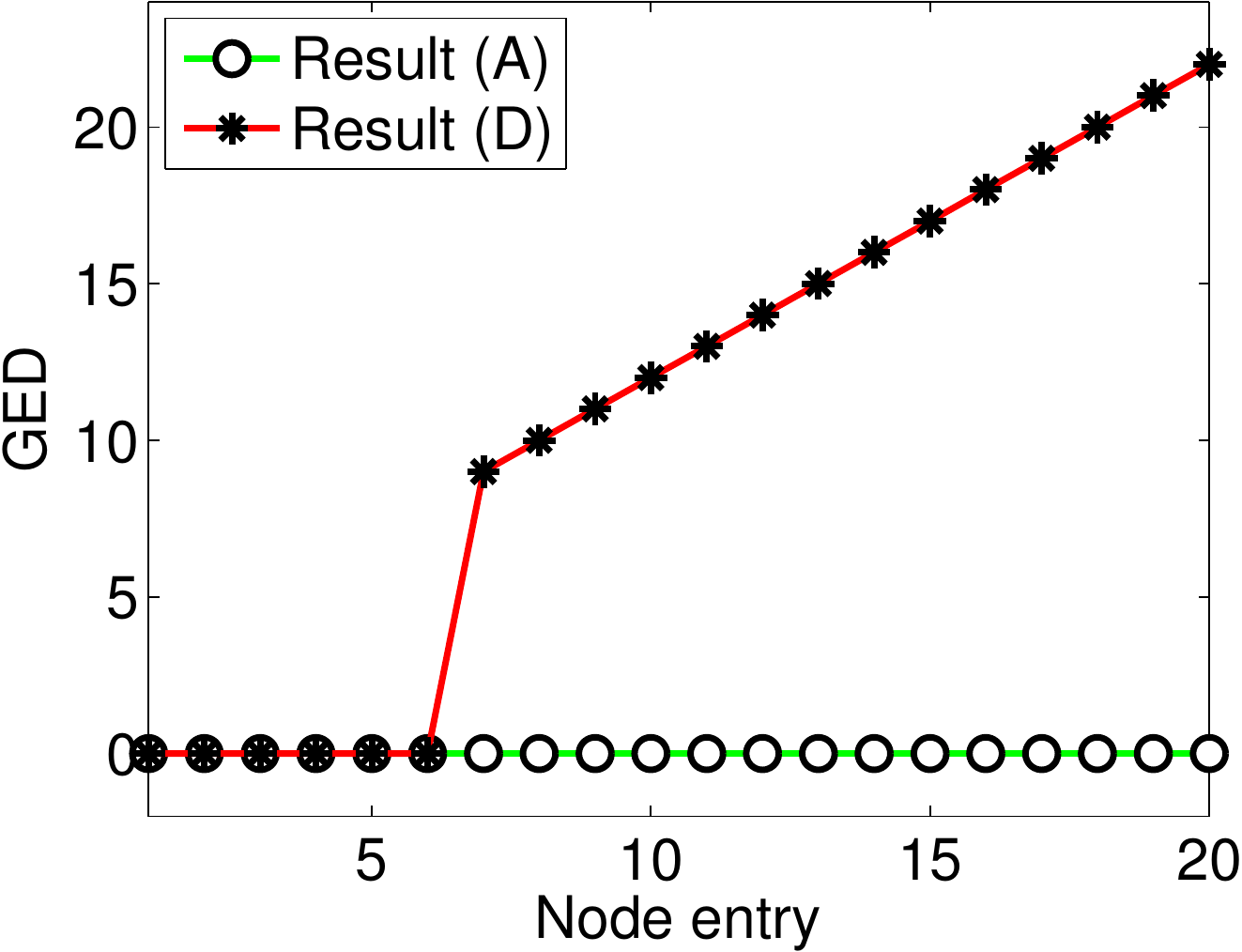}
}{
\includegraphics[scale=0.62]{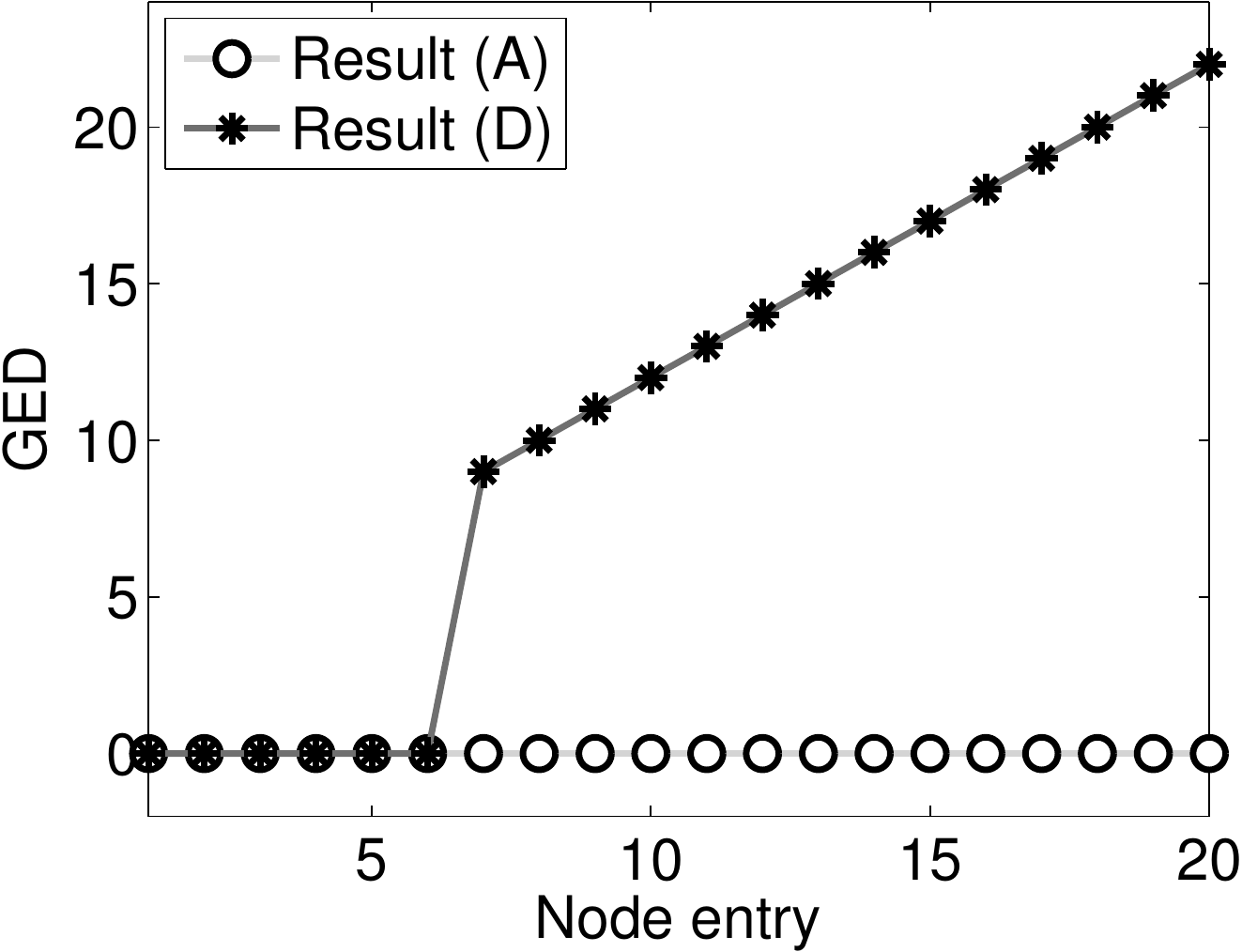}
}
\end{minipage}
\begin{minipage}{.5\textwidth}
\centering
\iftoggle{clr}{
\includegraphics[scale=0.62]{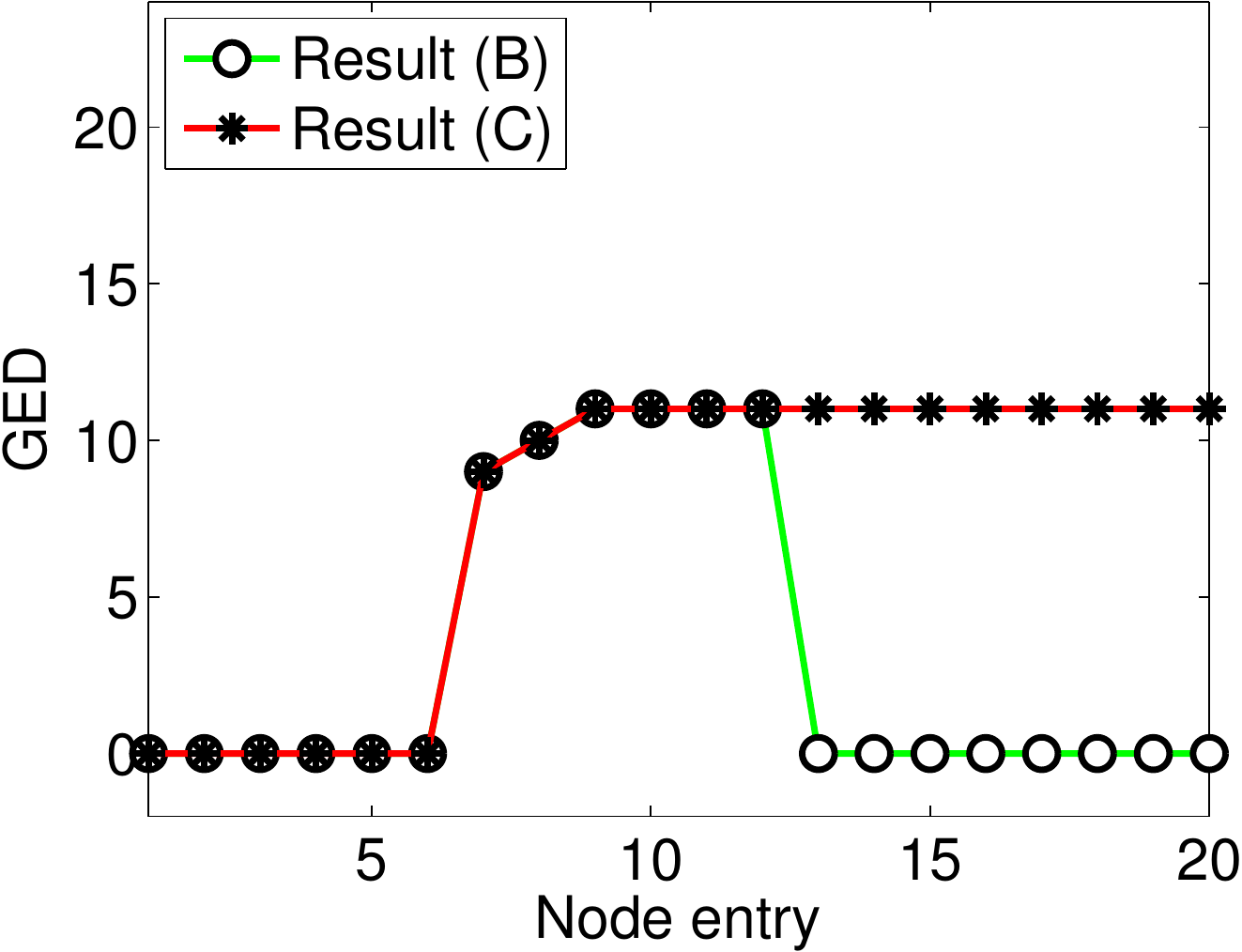}
}{
\includegraphics[scale=0.62]{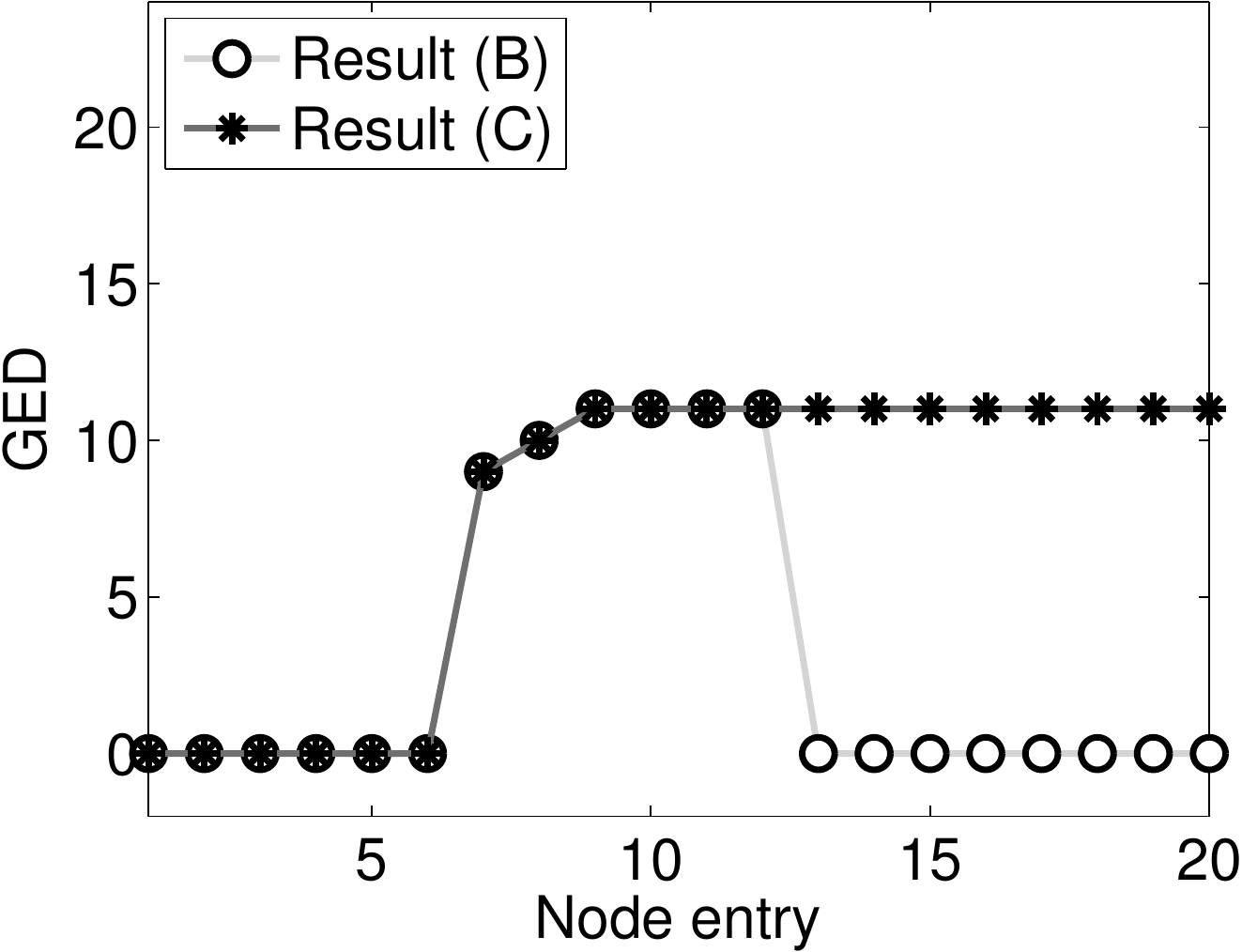}
}
\end{minipage}
\\
\begin{minipage}{.5\textwidth}
\centering
\small{(a)}
\end{minipage}
\begin{minipage}{.5\textwidth}
\centering
\small{(b)}
\end{minipage}
\end{tabular}
\caption{(a-b) Typical results of deviation from the derived sufficient conditions 
for deviation node 7
(Y-axis gives the deviation when the network consists of number of nodes given on X-axis)
}
\label{fig:resulttypes}
\end{figure}

Figures~\ref{fig:resulttypes}(a-b) give typical plots of the above four cases.
The plots are split into two parts for clarity.
Result (A) is the most desirable but can be obtained only for some particular deviation nodes depending on the topology for which the sufficient conditions are derived. Result (B) is very common and
this is the result the network owner should be looking at. Result (C) is good from a practical viewpoint as the resulting network need not be exactly the desired one, but it may still serve the purpose almost entirely. Result (D) is the one that any network owner should avoid.

Recall that $c$ is the cost incurred by a node in order to maintain a link with each of its immediate neighbors. So as $c$ increases, the desirability of a node to form links decreases.
Also as discussed earlier, a higher value of {\em network entry factor} $c_0$ lays the foundation for formation of a more regular graph. In general, it plays an important role in dictating the degree distribution of the resulting network.
In what follows, we study the effects of all valid deviations from sufficient conditions on cost parameters $c$ and $c_0$, on the resulting network.
In the tables that follow, if there were very few instances in which the network did not deviate, we ignore them since such cases are remote when nodes take decisions in some particular order.
For observing deviations from $k$-star topology ($k \geq 3$), the network is assumed to start with the corresponding base graph consisting of $2k$ nodes as discussed earlier.

Enlisted are the major findings of the simulations:
\begin{itemize} 
\item Certain values of parameters within the derived sufficient conditions may be more robust than others, that is, the value to which the conditions are restored during the entry of the node immediately following the entry of the deviation node, may directly affect the restoration of the topology.
\item Network with certain number of nodes may be bottleneck for the range of sufficient conditions (can be seen from the derivations of these conditions). In such cases, the topology deviates only for discretely few deviation nodes, while it does not for others. So the network owner may relax the conditions for most of the network formation process.
\item The sufficient conditions on $c$ are more sensitive than those on $c_0$, that is, the network deviates more from the desired topology when the value of $c$ deviates than when the value of $c_0$ deviates by similar margins.
\item Results obtained owing to deviation from sufficient conditions during the entry of a deviation node may be very different from that obtained owing to deviation during the entry of some other deviation node.
\item It may be possible to uniquely form some interesting topologies which may not be feasible using any static sufficient conditions.
\item In most scenarios, the order in which nodes take decisions plays an important role in deciding the resulting topology. Deviations from sufficient conditions may cause large deviations from the desired topology due to some ordering, while no deviation at all due to some other.
\end{itemize}


\begin{figure} [t!]
\begin{tabular}{c}
\hspace{-.7cm}
\begin{minipage}{.5\textwidth}
\centering
\iftoggle{clr}{
\includegraphics[scale=0.62]{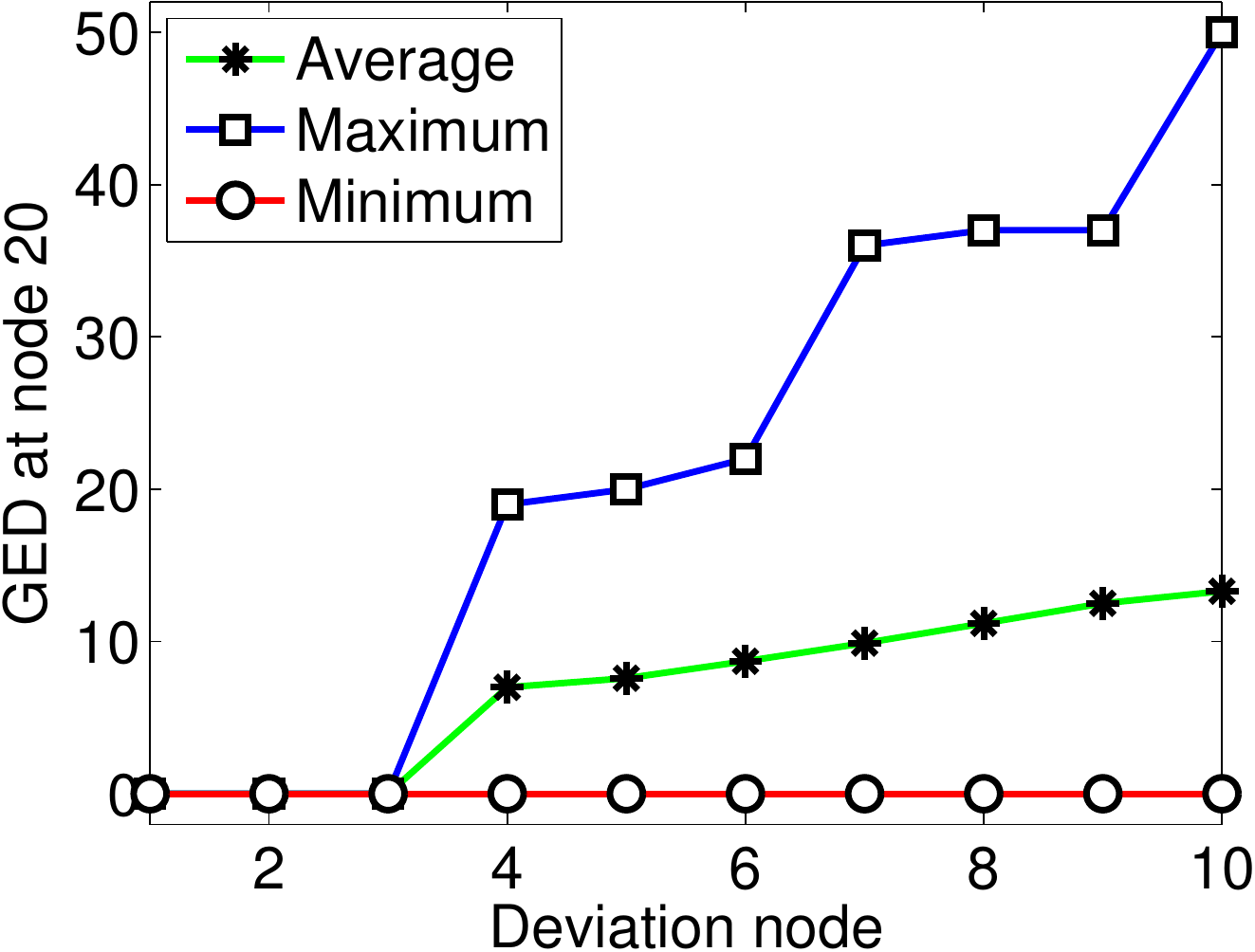}
}{
\includegraphics[scale=0.62]{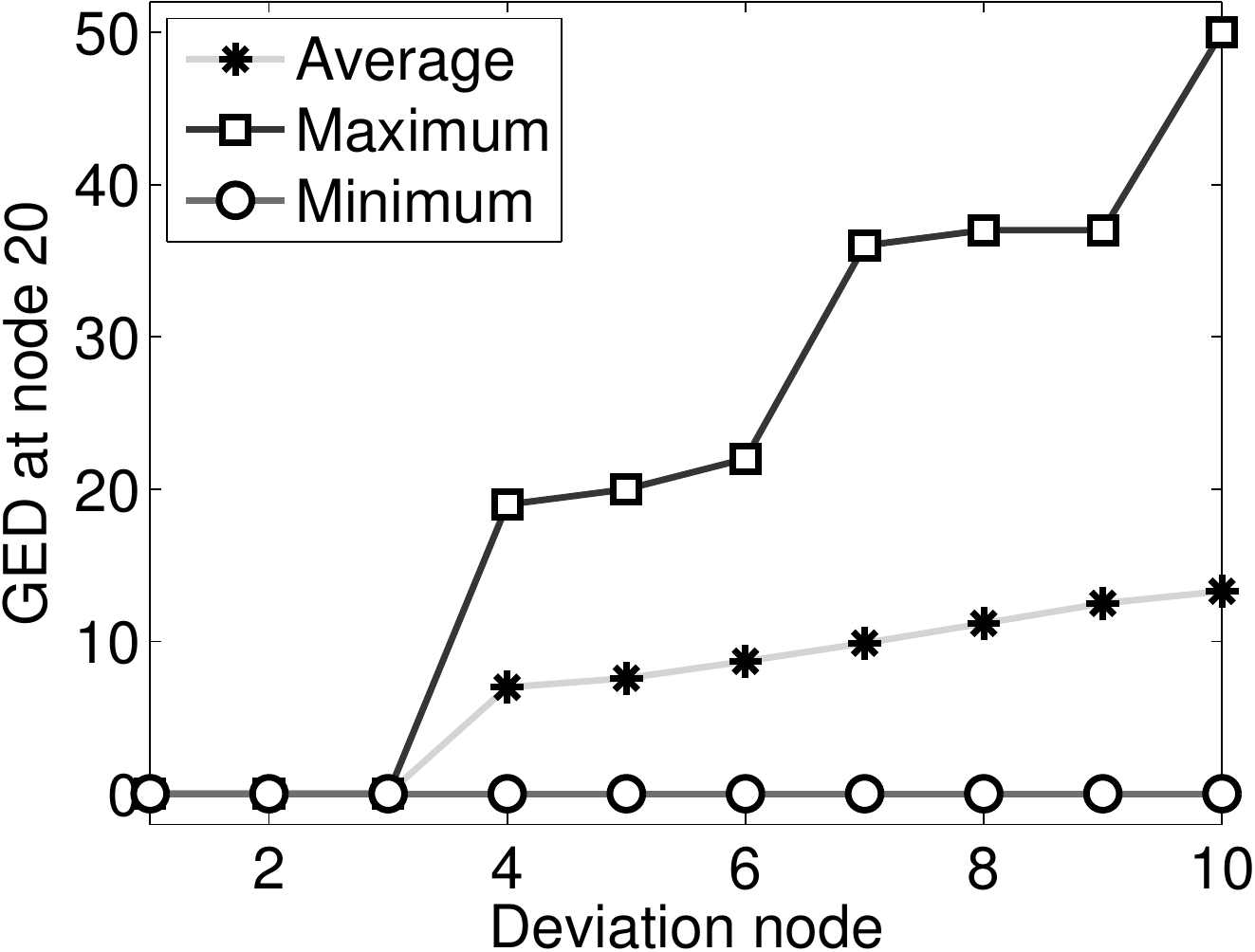}
}
\end{minipage}
\begin{minipage}{.5\textwidth}
\centering
\iftoggle{clr}{
    \includegraphics[scale=0.85]{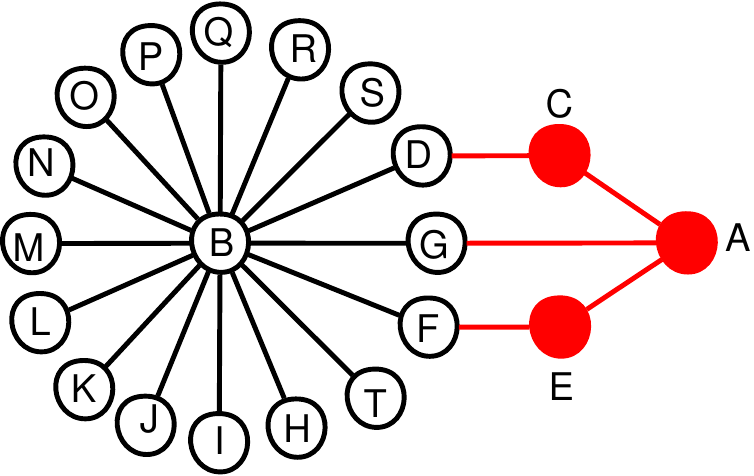}
}{
    \includegraphics[scale=0.85]{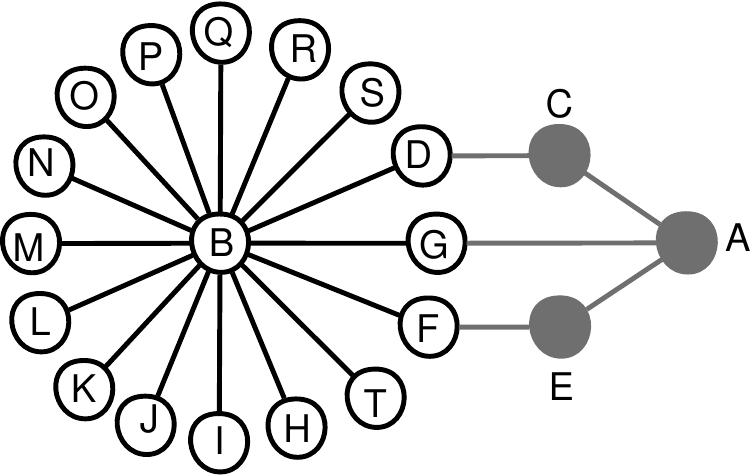}
}
\end{minipage}
\\
\begin{minipage}{.5\textwidth}
\centering
\small{(a)}
\end{minipage}
\begin{minipage}{.5\textwidth}
\centering
\small{(b)}
\end{minipage}
\end{tabular}
\caption{
(a) Results of negative deviation of $c$ from the sufficient conditions for star topology when the network consists of 20 nodes
and 
(b) A near-star network
}
\label{fig:star_costneg}
\end{figure}

The reader should note the difference in labels on the X and Y axes of the different plots in this paper.

\subsection[Results for Deviation with Respect to ${c}$]{Results for Deviation with Respect to $\boldsymbol{c}$}
\label{sec:devcost}

\subsubsection*{{Negative deviation of $\boldsymbol{c}$ from sufficient conditions for star network:}}
These results are shown qualitatively in Table~\ref{tab:devstarcostneg} and quantitatively in Figure~\ref{fig:star_costneg}(a). 
Figure~\ref{fig:star_costneg}(a) plots the deviation from network as observed for a network with 20 nodes, if the conditions were deviated at a given deviation node.
For deviation nodes 2 and 3, no deviation in network was observed. 
For other deviation nodes, Table~\ref{tab:devstarcostneg} shows the type of result obtained owing to deviation from sufficient conditions on $c$ at a deviation node, following which, the values of $\gamma$, $c_0$ and $c$ are restored to one of \{$L,M,H$\}. 
The results are invariant with respect to the restored value of $c_0$.
The table shows that $\gamma=L$ coupled with $c=H$, and $\gamma=M$ coupled with $c=M \text{ or } H$, give the best results, where the star topology is restored as per result (B). 
$\gamma=L$ coupled with $c=M$, and $\gamma=H$ coupled with $c=M \text{ or } H$, give decent results for practical purposes, where a near-star network (Figure
\ref{fig:star_costneg}(b)) 
is obtained as per result (C). 
$c=L$ is unacceptable and should be avoided by network owner desiring to form a star network, as these values are not robust to deviations from sufficient conditions.
Typical observations 
are shown in Figures~\ref{fig:resulttypes}(a-b). 

\begin{figure*} [t!]
\begin{tabular}{c}
\hspace{-.7cm}
\begin{minipage}{.5\textwidth}
\centering
\iftoggle{clr}{
\includegraphics[scale=0.62]{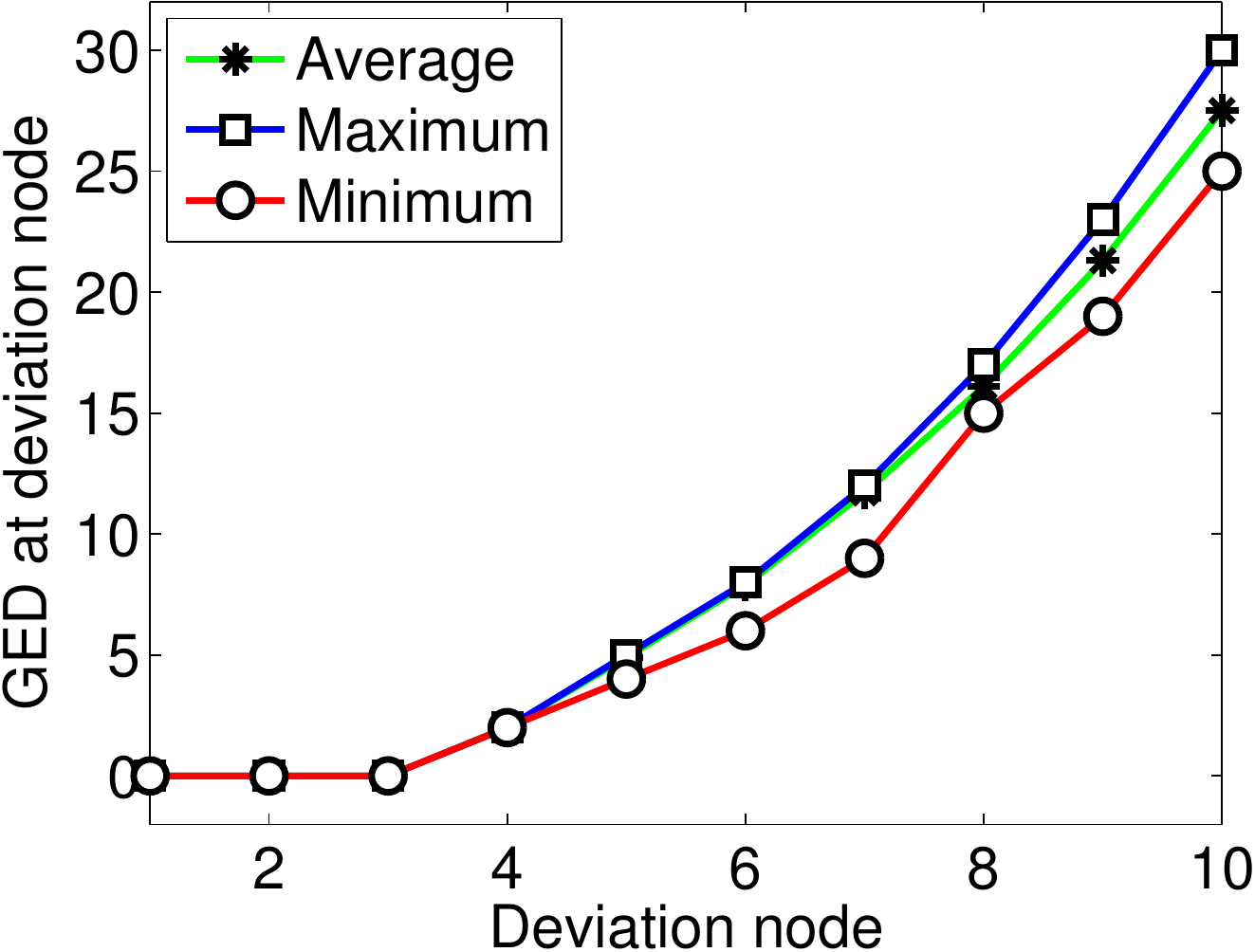}
}{
\includegraphics[scale=0.62]{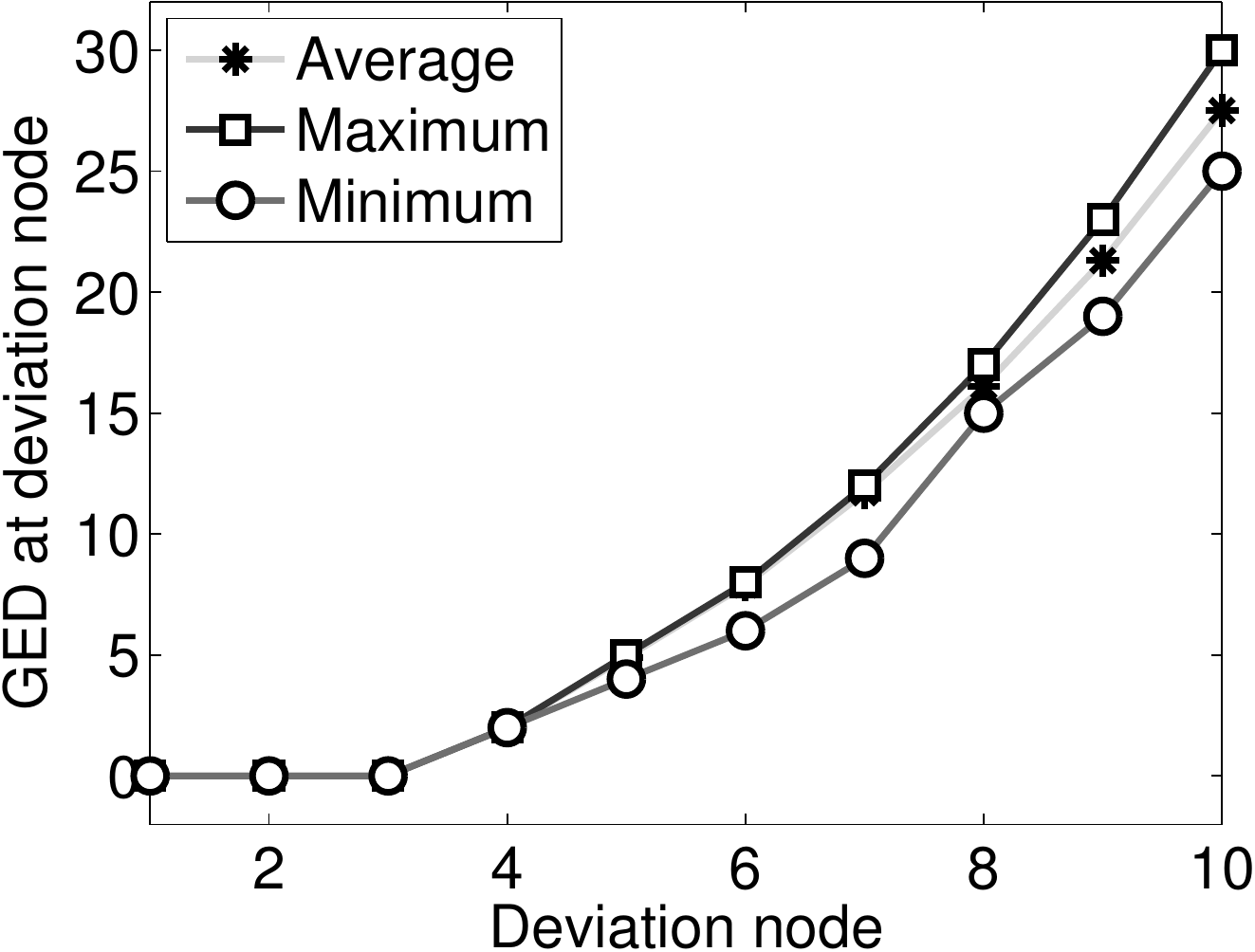}
}
\end{minipage}
\begin{minipage}{.5\textwidth}
\centering
\iftoggle{clr}{
\includegraphics[scale=0.62]{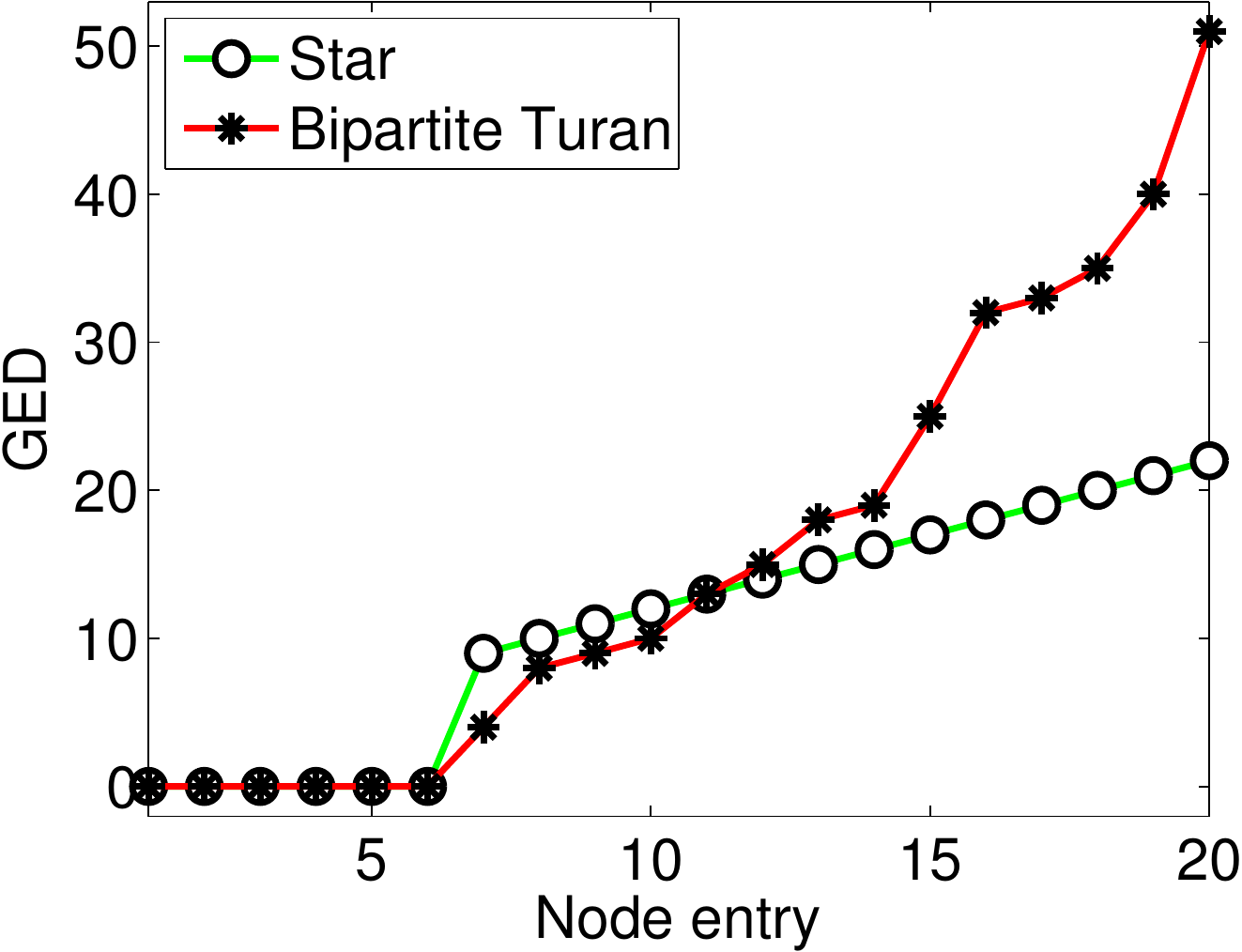}
}{
\includegraphics[scale=0.62]{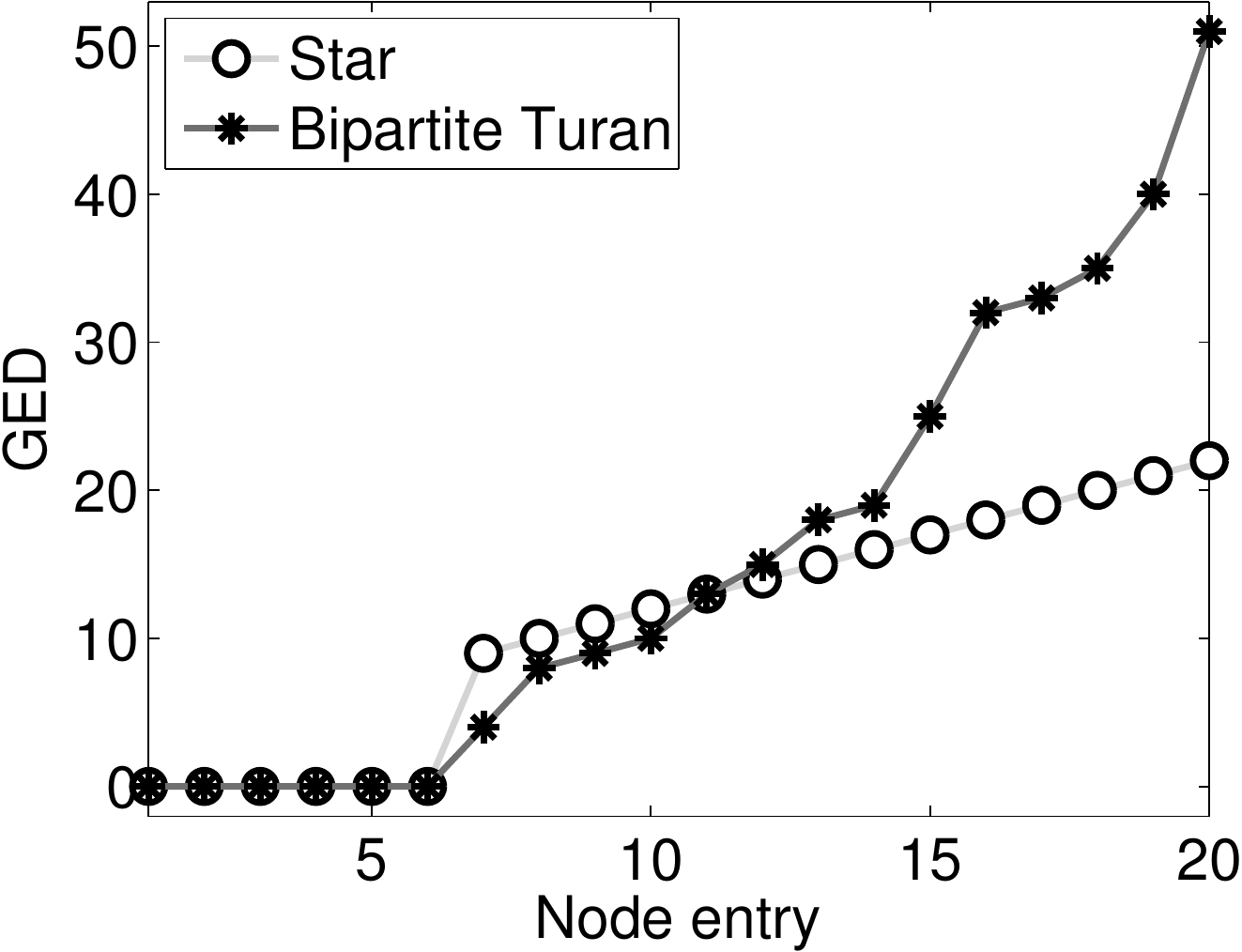}
}
\end{minipage}
\\
\begin{minipage}{.5\textwidth}
\centering
\small{(a)}
\end{minipage}
\begin{minipage}{.5\textwidth}
\centering
\small{(b)}
\end{minipage}
\end{tabular}
\caption{(a) Results of positive deviation of $c$ from the sufficient conditions for complete network and
(b) Comparison between result (D) for star network and bipartite Tur\'an network for deviation node 7
(Y-axis gives the deviation when the network consists of number of nodes given on X-axis)
}
\label{fig:devc}
\end{figure*}

\begin{table}[t]
\centering
  \begin{tabular}{ p{2.2cm}  p{.25cm}  p{.25cm}  p{.75cm}  p{.25cm}  p{.25cm}  p{.75cm} p{.25cm}  p{.25cm} p{.64cm} }
  \hline  \hline
 \T \B 
 & \multicolumn{3}{c}{\hspace{-.3cm}$\gamma=L$}  & \multicolumn{3}{c}{\hspace{-.3cm}$\gamma=M$}  & \multicolumn{3}{c}{\hspace{-.3cm}$\gamma=H$} \\ \hline
  \backslashbox{$c_0$}{$c$} & $L$ & $M$ & $H$  & $L$ & $M$ & $H$  & $L$ & $M$ & $H$ 
  \\ \hline 
\T \B $L/M/H$ &
D & C & B &
D & B & B &
D & C & C
\\ \hline  \hline
  \end{tabular}
  \caption{Results of negative deviation of $c$ for star network
  }
  \label{tab:devstarcostneg}
\end{table}

\subsubsection*{{Positive deviation of $\boldsymbol{c}$ from sufficient conditions for star network:}}
No node enters the network at deviation node 2, while for all other deviation nodes, the network does not deviate at all and so result (A) is obtained. The same is clear from the derivation of sufficient conditions for star network, that entry of node 2 is the bottleneck on the upper bound for $c$ ($c<b_1$). So node 2 stays out of the network until the sufficient conditions are restored so that they are favorable for it to enter the network, and hence the network builds up as desired. 
These results are desirable if the network owner is not too concerned about the delay of node 2's entry into the network. 

\subsubsection*{{Positive deviation of $\boldsymbol{c}$ from sufficient conditions for complete network:}}
No deviation in network was observed for deviation nodes 2 and 3. 
For other deviation nodes, deviations in network were observed only during the entry of the deviation node until the stabilization of the network henceforth (Figure~\ref{fig:devc}(a)). Following this, the sufficient conditions were restored and the network regained the desired topology, after the entry of the node following the deviation node and the stabilization henceforth (result (B)), since the condition $c<b_1-b_2$ ensures that the network so formed has diameter at most 1 (Proposition~\ref{thm:smallworld}), and this is irrespective of the preceding network states. 

\subsubsection*{{Negative deviation of $\boldsymbol{c}$ from sufficient conditions for bipartite Tur\'an network:}}
The desired network was obtained for all deviation nodes except 4, as clear from the derivation of  sufficient conditions (the 4-node network is the bottleneck for the lower bound on $c$).
For deviation node 4, GED between the resulting network of 4 nodes and the corresponding bipartite Tur\'an network was 3. The topology was restored from the entry of the following node onwards in most instances, while it took up to 9 node entries for some.

\subsubsection*{{Positive deviation of $\boldsymbol{c}$ from sufficient conditions for bipartite Tur\'an network:}}
No deviation in network was observed 
for deviation nodes 2 to 5. However, deviation node 6 onwards, result (D) was observed regularly for all combinations of values \{$L,M,H$\} assigned to $\gamma$, $c_0$ and $c$, apart from when nodes take decisions in a particular order (in which case, no deviation was observed).
For each deviation node 6 onwards, the average GED when the network reached the size of 20 nodes was around 50 and was increasing rapidly as shown in Figure~\ref{fig:devc}(b).
This GED is expected to be more than that in the case of star network, owing to its relatively high edge density.
Such deviations from the desired network were observed even for extremely minor deviations of $c$ from the derived sufficient conditions.
So restoring the sufficient conditions is not a viable solution for this case.
The network owner should ensure that the values of $c$ are on the lower side so as to stay away from the upper bound. 

\subsubsection*{{Negative deviation of $\boldsymbol{c}$ from sufficient conditions for $k$-star network:}}
GED for all deviation nodes were strictly positive and monotonically increasing, qualitatively looking like result (D) in Figure~\ref{fig:resulttypes}(a). 

\subsubsection*{{Positive deviation of $\boldsymbol{c}$ from sufficient conditions for $k$-star network:}}
Result (A) was observed for all deviation nodes except $2k$ through $3k-1$. 
The reason for the deviation in network for these deviation nodes is that, in the $k$-star network consisting of number of nodes between $2k$ and $3k-1$, both inclusive, there exists at least one center with only one leaf node linked to it.
When there is a positive deviation of $c$ from the sufficient conditions for $k$-star network, it is beneficial for any other center to delete link with a center that is linked to only one leaf node, and this link deletion leads to other link alterations among other nodes, thus deviating the network from the desired topology.
For deviation nodes $2k$ through $3k-1$, result (D) was observed consistently, which qualitatively looked like the one in Figure~\ref{fig:resulttypes}(a).

%

\subsection[Results for Deviation with Respect to ${c_0}$]{Results for Deviation with Respect to $\boldsymbol{c_0}$}
\label{sec:devc0}

\subsubsection*{{Positive deviation of $\boldsymbol{c_0}$ from sufficient conditions for star network:}}
These results are shown qualitatively in Table~\ref{tab:devstarc0pos} and quantitatively in Figure~\ref{fig:star_c0pos}(a). 
The graph in Figure~\ref{fig:star_c0pos}(a) plots the deviation from network as observed when the network reached the size of 20 nodes, if the conditions were deviated at a given deviation node.
For deviation nodes 2 and 3, no deviation in network was observed. 
For other deviation nodes, Table~\ref{tab:devstarc0pos} shows the type of result obtained owing to deviation from sufficient conditions on $c_0$ at a deviation node, following which, the values of $\gamma$, $c_0$ and $c$ are restored to one of \{$L,M,H$\}. 
%
%
%
When the sufficient conditions are restored to low values of $c$ after deviating from the sufficient conditions, the resulting network is a $(2,n-2)$-complete bipartite network (result (D)) similar to that in Figure~\ref{fig:star_c0pos}(b), where node $Y$ was the original center and the conditions were deviated during entry of node $X$. 

\begin{figure} [t!]
\begin{tabular}{c}
\hspace{-.7cm}
\begin{minipage}{.5\textwidth}
\centering
\iftoggle{clr}{
\includegraphics[scale=0.62]{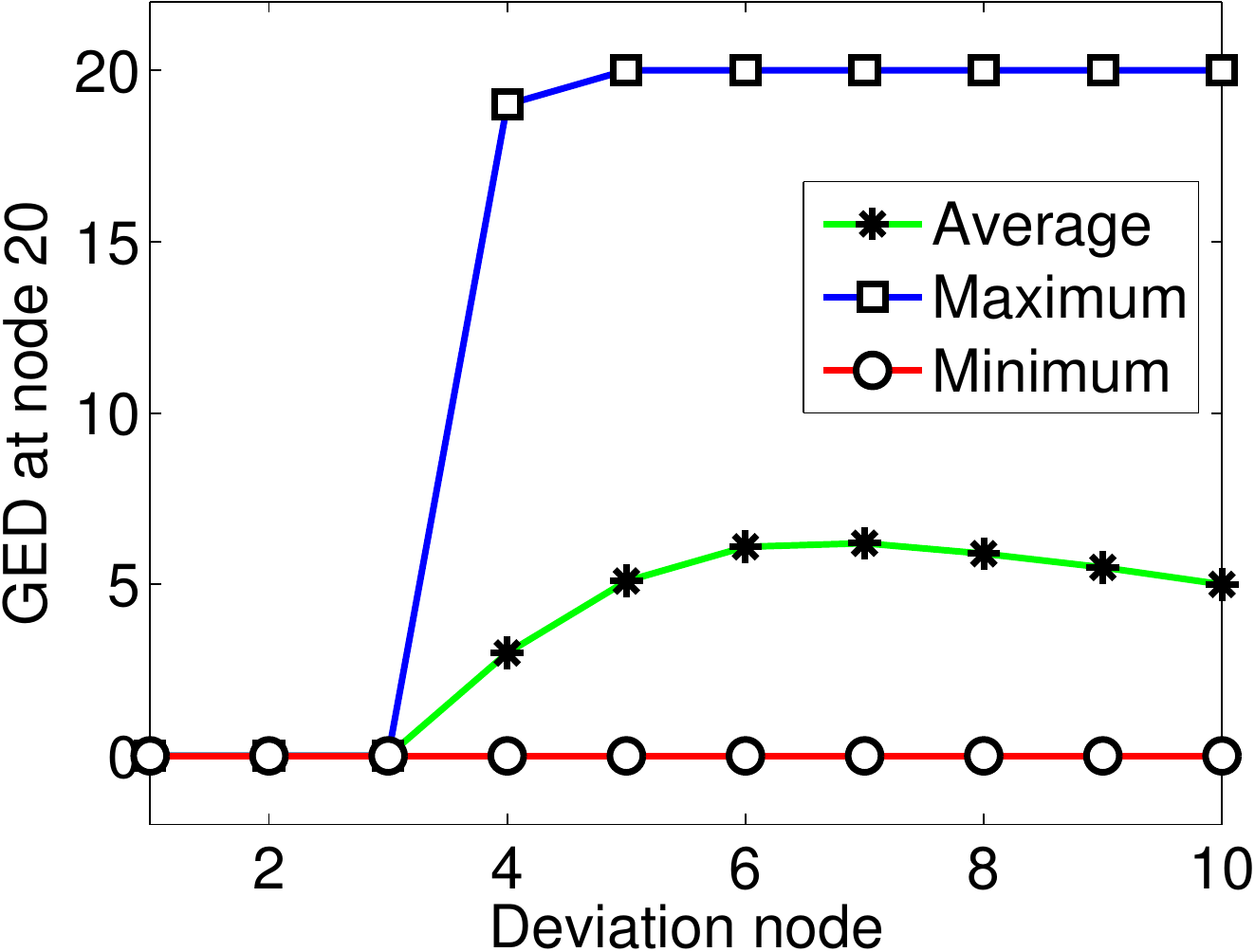}
}{
\includegraphics[scale=0.62]{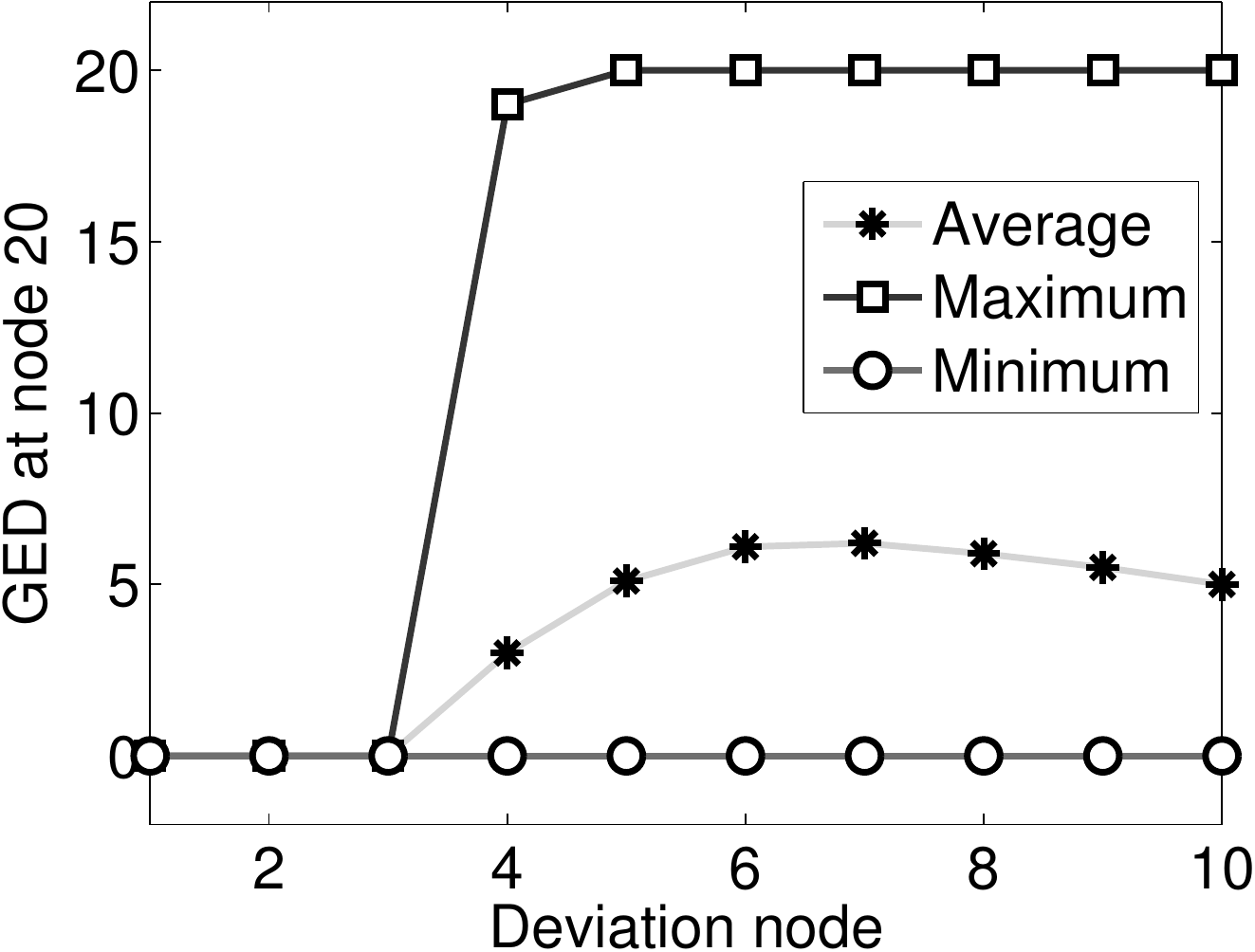}
}
\end{minipage}
\begin{minipage}{.5\textwidth}
\centering
\iftoggle{clr}{
    \includegraphics[scale=0.85]{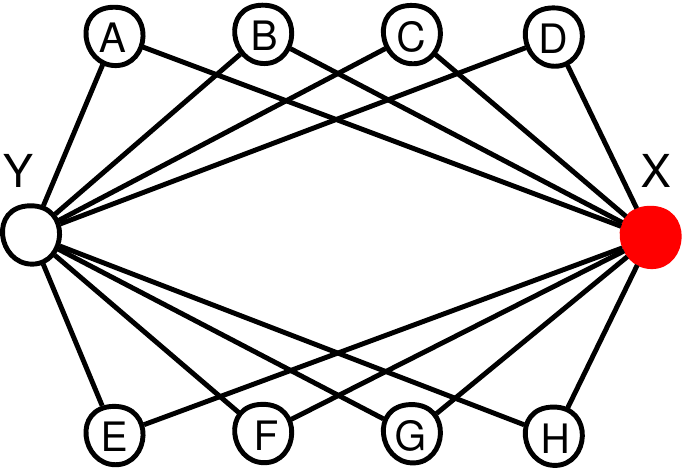}
}{
    \includegraphics[scale=0.85]{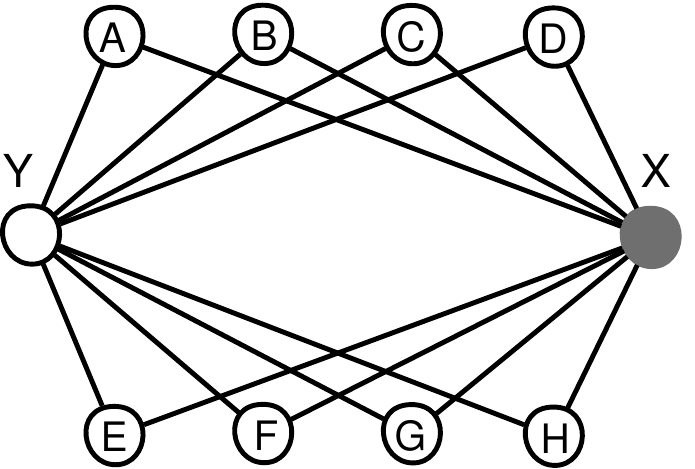}
}
\end{minipage}
\\
\begin{minipage}{.5\textwidth}
\centering
\small{(a)}
\end{minipage}
\begin{minipage}{.5\textwidth}
\centering
\small{(b)}
\end{minipage}
\end{tabular}
\caption{
(a) Results of positive deviation of $c_0$ from the sufficient conditions for star network when the network consists of 20 nodes
and 
(b) A $(2,8)$-complete bipartite network
}
\label{fig:star_c0pos}
\end{figure}

\begin{table}[b]
\centering
  \begin{tabular}{ p{2.2cm}  p{.25cm}  p{.25cm}  p{.75cm}  p{.25cm}  p{.25cm}  p{.75cm} p{.25cm}  p{.25cm} p{.64cm} }
    \hline  \hline
   \T \B 
   & \multicolumn{3}{c}{\hspace{-.3cm}$\gamma=L$}  & \multicolumn{3}{c}{\hspace{-.3cm}$\gamma=M$}  & \multicolumn{3}{c}{\hspace{-.3cm}$\gamma=H$} \\ \hline
  \backslashbox{$c_0$}{$c$} & $L$ & $M$ & $H$  & $L$ & $M$ & $H$  & $L$ & $M$ & $H$ 
  \\ \hline 
\T \B $L/M/H$ &
D & B & B &
D & B & B &
D & C & C
\\ \hline \hline
  \end{tabular}
  \caption{Results of positive deviation of $c_0$ for star network
  }
  \label{tab:devstarc0pos}
\end{table}

\subsubsection*{{Positive deviation of $\boldsymbol{c_0}$ from sufficient conditions for complete and bipartite Tur\'an networks:}}
No deviation was observed for early deviation nodes, that is, if the conditions were deviated when the network consisted of less number of nodes.
Let $d_T$ be the degree of the node to which a new node desires to connect in order to enter the network. For both complete and bipartite Tur\'an networks, beyond a certain limit on the number of nodes, the minimum value of $d_T$ is very high.
So during positive deviation of $c_0$, the term $d_T((1-\gamma)b_2 - c_0)$ becomes extremely negative, overpowering other benefits, thus making it undesirable for a new node to enter the network. A new node enters once the sufficient conditions are restored.
These results are desirable if the network owner is not concerned about the delay of node entry.

\subsubsection*{{Negative deviation of $\boldsymbol{c_0}$ from sufficient conditions for bipartite Tur\'an network:}}
The desired network was obtained for all odd numbered deviation nodes and deviation node 2. For deviation node 4, GED between the resulting network of 4 nodes and the corresponding bipartite Tur\'an network was 3. For most instances, the topology was restored from the entry of the following node onwards; but some instances took up to 9 node entries to settle back to a bipartite Tur\'an network (very similar to the case of negative deviation of $c$). 
For every even-numbered deviation node $n \geq 4$, deviations in network were observed only during the entry of the deviation node until the stabilization of the network henceforth, with GED $=n-1$. Following this, the sufficient conditions were restored and the network regained the desired topology, after the entry of the node following the deviation node and the stabilization henceforth. 
Figure~\ref{fig:devc0}(a) shows the result when node $X$ tries to enter the bipartite Tur\'an network consisting of nodes $A,B,C,D,E$, as the $6^{th}$ node, during negative deviation of $c_0$. It creates links with nodes $B,D$ instead of $A,C,E$, thus giving graph edit distance of 5. Following this, the sufficient conditions are restored and the following node $X+1$ 
forms links with low degree nodes, forming a bipartite Tur\'an network of 7 nodes, thus restoring the topology. 

\subsubsection*{{Negative deviation of $\boldsymbol{c_0}$ from sufficient conditions for $k$-star network:}}
For deviation node $n$ such that $(n \mod k)=1$, the network did not deviate and so result (A) was observed. 
For all other deviation nodes, result (B) was observed.
In general, for deviation node $n$, GED was observed to be 2, and it took $\left[ (k+1-z)\mod k \right]$ node entries for the topology to be restored once the sufficient conditions were restored, where $z = (n \mod k)$.
Figure~\ref{fig:devc0}(b) shows the result when node $X$ tries to enter the 3-star network consisting of nodes $A$ through $J$, as the $11^{th}$ node, during negative deviation of $c_0$.
It creates a link with node $A$ instead of either $B$ or $C$, thus giving GED of 2. Following this, the sufficient conditions are restored and so the following node $X+1$ 
forms links with a lowest degree center, say $C$; but GED remains 2. 
Then the next node $X+2$ tries to enter, which forms a link with the only lowest degree center $B$, forming a 3-star network of 13 nodes, thus restoring the topology.
In this example, $k=3$ and $n=11$ and so it takes 2 node entries for the topology to be restored. 

\subsubsection*{{Positive deviation of $\boldsymbol{c_0}$ from sufficient conditions for $k$-star network:}}
Let $C$ be a center with the lowest degree and $m_j$ be the number of leaf nodes already connected to center $j$.
It can be shown that result (A) will be obtained if the positive deviation of $c_0$ is less that the threshold:
\begin{equation}
\nonumber
(b_3-b_4)\left( \frac{\sum_{j \neq C}m_j + \textbf{I}_{n \neq pk+1}}{k+m_C-2}-1 \right)
\end{equation}
where $n$ is the deviation node, and $\textbf{I}_{n \neq pk+1}$ is 1 if $n \neq pk+1$ for any integer $p$, else it is 0.
If the deviation crossed this threshold in simulations, result (D) was observed consistently, which qualitatively looked like the one in Figure~\ref{fig:resulttypes}(a).
The result is owing to the fact that a high value of $c_0$ would force a new node to prefer connecting to a leaf node which is linked to a center with the highest degree, rather than any center directly; this leads to other link alterations among other nodes, thus deviating the network from the desired topology.

\begin{figure} [t]
\begin{tabular}{c}
\begin{minipage}{.4\textwidth}
\centering
\iftoggle{clr}{
\includegraphics[scale=0.87]{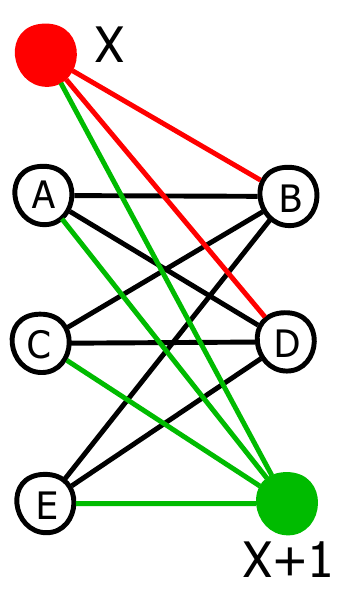}
}{
\includegraphics[scale=0.87]{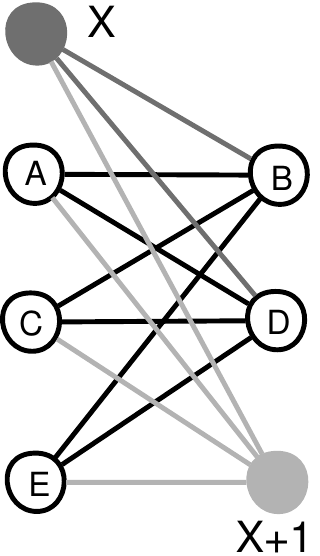}
}
\end{minipage}
\begin{minipage}{.6\textwidth}
\centering
\iftoggle{clr}{
\includegraphics[scale=0.82]{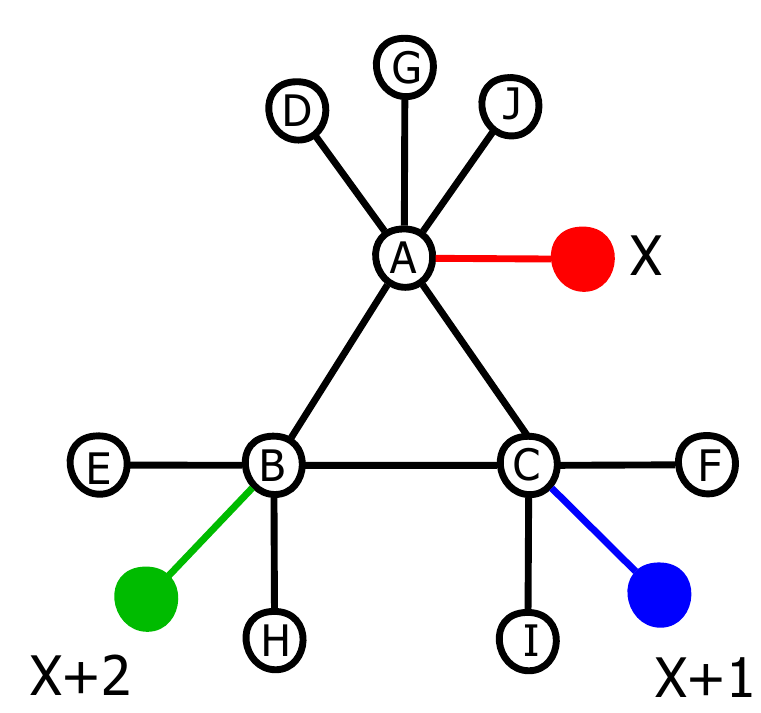}
}{
\includegraphics[scale=0.82]{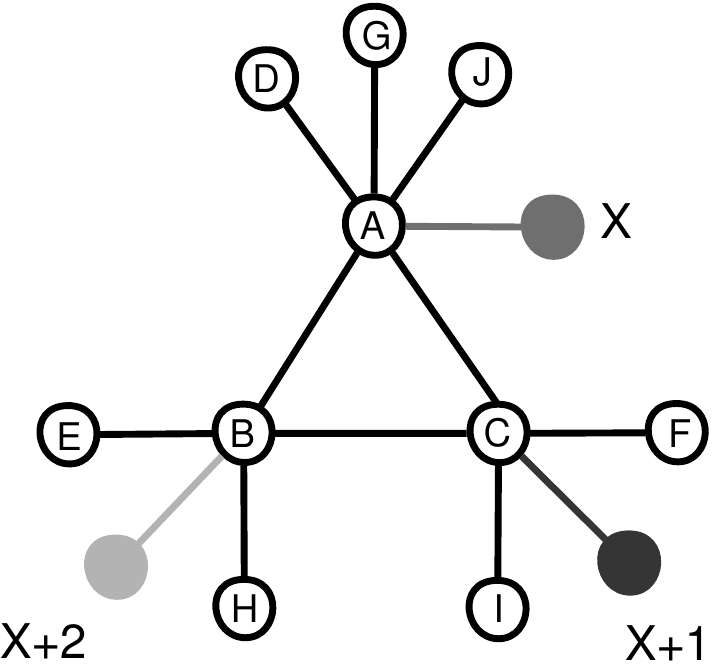}
}
\end{minipage}
\\
\begin{minipage}{.4\textwidth}
\centering
(a)
\end{minipage}
\begin{minipage}{.6\textwidth}
\centering
(b)
\end{minipage}
\end{tabular}
\caption{Restorations of (a) bipartite Tur\'an and (b) 3-star network topologies
}
\label{fig:devc0}
\end{figure}

\section{Conclusion}
\label{sec:conclusion_nfsc}

We proposed a model of recursive network formation where nodes enter a network sequentially, thus triggering evolution 
each time a new node enters.
We considered a sequential move game model with myopic nodes under a very general utility model, and pairwise stability as the equilibrium notion; however the proposed model (Figure~\ref{fig:model}) is independent of the network evolution model, the equilibrium notion, as well as the utility model.
The recursive nature of our model enabled us to analyze the network formation process using an elegant induction-based technique.
For each of the relevant topologies, by directing network evolution as desired, we derived sufficient conditions 
under which that topology uniquely emerges. 
  The derived conditions suggest that conditions on network entry impact degree distribution, while
   conditions on link costs impact density;
   also there arise constraints on intermediary rents owing to contrasting densities of connections in the desired topology.
We then analyzed the social welfare properties of the considered topologies,
and studied the effects of deviating from the derived conditions.

\vspace{-.2cm}


\section*{Acknowledgments}
The original publication appears in Studies in Microeconomics, volume 3, number 2, pages 158-213, 2015, and is available at \href{http://journals.sagepub.com/doi/pdf/10.1177/2321022215588873}{journals.sagepub.com}.
A previous preliminary, concise version of this paper is published in Proceedings of The 8th International Conference on Internet \& Network Economics, 2012 \cite{dhamalwine} and 
%
is available at  \href{http://link.springer.com/chapter/10.1007/978-3-642-35311-6_39}{link.springer.com}.
The authors thank Rohith D. Vallam and Prabuchandran K.J. for useful discussions.

\bibliographystyle{plain}
\bibliography{NFSC_SiM_arXiv_references}

\newpage


\appendix
\numberwithin{equation}{section}
\section*{APPENDIX}

\section{Proof of Proposition~\ref{thm:smallworld}
}
\label{app:smallworld}
\begin{customprop}{\ref{thm:smallworld}}
For a network, if $c<b_1-b_{d+1}$ ($d \geq 1$) and $c_0\leq(1-\gamma)b_2$, the resulting diameter is at most $d$.
\end{customprop}
\begin{proof}
The conditions $c<b_1$ and $c_0\leq(1-\gamma)b_2$ ensure that any new node successfully enters the network, that is, it gets a positive utility by doing so, and the node to which it connects to in order to enter the network, also gets a higher utility.

Now consider a network where $c<b_1-b_{d+1}$ and there exist two nodes, say $A$ and $B$, which are at a distance $x>d$ from each other. The indirect benefit they get from each other is  $b_x \leq b_{d+1}$. In the case where there exist essential nodes connecting these nodes, each has to pay an additional rent of $\gamma b_{x}$. By establishing a connection between them, each node gets an additional direct benefit of $b_1$ and incurs an additional cost $c$. Also this connection may decrease the distances between either of these nodes and other nodes, for instance, direct neighbors of node $B$ which were at distance $b_{x-1}$, $b_x$ or $b_{x+1}$ from node $A$, are now at distance  $\min\{b_2,b_{x-1}\}$, resulting in increase in indirect benefits for node $A$.

It can be easily seen that if either (or both) of these nodes acted as an essential node for some pair of nodes, it remains to do so even after the connection is established. Furthermore, it is possible that the established connection shortens the path between this pair, resulting in higher bridging benefits for the node under consideration.

Summing up, by establishing a mutual connection between nodes which are at distance $x>d$ from each other, the overall increase in utility for either node is at least $b_1-c$ and the overall decrease is at most $b_{d+1}$. So the condition sufficient for link creation is $b_1-c>b_{d+1}$. As this is true for any such pair, without loss of generality, the network will evolve until distance between any pair is at most $d$.
\end{proof}

%
%

\section{Proof of Proposition~\ref{thm:bipartite}
}
\label{app:bipartite}
\begin{customprop}{\ref{thm:bipartite}}
For a network with $\gamma <   \frac{b_2 - b_3}{3b_2 - b_3} $, if $b_1-b_2+ \gamma \left( 3b_2 - b_3 \right) <  c < b_1 - b_3$ 
and $\left( 1-\gamma \right) \left( b_2-b_3 \right) < c_0 \leq \left( 1-\gamma \right) b_2$, the unique resulting topology is a 
bipartite Tur\'an graph.
\end{customprop}
\begin{proof}
We first derive conditions for ensuring pairwise stability of a bipartite Tur\'an network, that is, assuming that such a network is formed, what conditions are required so that there are no incentives for any two unconnected nodes to create a link between them and for any node to delete any of its links. Note that these conditions can be integrated in the later part of the proof within different scenarios that we consider.\\
In what follows, $p_1$ is the size of the partition constituting the node taking its decision, $p_2$ is the size of the other partition and $n=p_1+p_2$ is the number of nodes in the network.
We need to consider cases for some discretely small number of nodes owing to the nature of essential nodes, after which, the analysis holds for arbitrarily large number of nodes. For brevity, we present the analysis for the base case and a generic case in each scenario, omitting presentation of discrete cases. \\

\noindent
\textbf{No two nodes belonging to the same partition should create a link between them:} Their utility should not increase by doing so. This is not applicable for $n=2$. \\
For $n=3$, 
\begin{equation}
\nonumber
2(b_1-c) \leq b_1-c+(1-\gamma)b_2
\end{equation}
\begin{equation}
\label{E16}
\iff c \geq b_1-b_2+\gamma b_2
\end{equation}
For $n\geq 4$,
\begin{equation}
\nonumber
(p_2+1)(b_1-c)+(p_1-2)b_2 \leq p_2(b_1-c)+(p_1-1)b_2
\end{equation}
\begin{equation}
\nonumber
\iff c \geq b_1-b_2
\end{equation}
which is a weaker condition that Inequality~(\ref{E16}).\\

\noindent
\textbf{No node should delete its link with any node belonging to the other partition:} That is, their utility should not increase by doing so. \\
For $n=2$,
\begin{equation}
\nonumber
0\leq b_1-c
\end{equation}
\begin{equation}
\label{E17for2}
\iff c \leq b_1
\end{equation}
For $n\geq 6$,
\begin{equation}
\nonumber
(p_2-1)(b_1-c)+(p_1-1)b_2+b_3 \leq p_2(b_1-c)+(p_1-1)b_2
\end{equation}
\begin{equation}
\label{E17}
\iff c \leq b_1-b_3
\end{equation}
It can be shown that conditions for the discrete cases $n=3,4,5$  are satisfied by Inequality~(\ref{E17}).\\  \\
In the process of formation of a bipartite Tur\'an network, at most four different types of nodes exist at any point in time.
  \begin{center}
\begin{tabular}{l l }
    \hline \hline
\T \B  
I & newly entered node \\ \hline
\T \B  
II & nodes connected to the newly entered node\\ \hline
\T \B  
III & nodes in the same partition as Type II nodes, but not connected to newly entered node\\ \hline
\T \B  
IV & rest of the nodes\\ \hline \hline
  \end{tabular}
  \end{center}
The notation we use while deriving the sufficient conditions are as follows:
\begin{center}
\begin{tabular}{l l}
    \hline  \hline
\T \B 
$k$ & number of nodes of Type II \\ \hline
\T \B 
$n$ & number of nodes in network, including new node\\ \hline
\T \B 
$m_1$ & number of nodes of Types II and III put together\\ \hline
\T \B 
$m_2$ & number of nodes of Type IV\\ \hline \hline
  \end{tabular}
\end{center}

\noindent
\textbf{For the newly entering node to enter the network:} Its utility should be positive after doing so. Also, in case of even $n$, for the new node to be a part of the smaller partition, its first connection should be a node belonging to the larger partition. So for $k=0$, we have \\
For $n\geq 2$,
\begin{equation}
\nonumber
b_1-c+ \lceil \frac{n}{2}-1 \rceil \left( (1-\gamma)b_2 - c_0 \right) + \lfloor \frac{n}{2}-1 \rfloor (1-\gamma)b_3 >0
\end{equation}
It can be seen that the condition is the strongest when $n=2$ whenever
\begin{equation}
\label{E1b}
c_0 \leq (1-\gamma)b_2
\end{equation}
The condition thus becomes
\begin{equation}
\nonumber
c< b_1
\end{equation}
which is satisfied by Inequality~(\ref{E17}).\\

\noindent 
\textbf{The utility of a node in the larger partition, whenever applicable, should not decrease after accepting link from the new node:}\\
For $n=2$,
\begin{equation}
\nonumber
b_1-c \geq 0
\end{equation}
\begin{equation}
\nonumber
\iff c \leq b_1
\end{equation}
For $n \geq 5$,
\begin{align}
\nonumber
\begin{split}
&
\lceil \frac{n}{2} \rceil (b_1 -c) + \lfloor \frac{n}{2}-1 \rfloor b_2 + \gamma \lceil \frac{n}{2}-1 \rceil 2 b_2 + \gamma \lfloor \frac{n}{2}-1 \rfloor 2 b_3 \\
&\geq \lceil \frac{n}{2}-1 \rceil (b_1 -c) + \lfloor \frac{n}{2}-1 \rfloor b_2
\end{split}
\end{align}
\begin{equation}
\nonumber
\iff c \leq b_1+ 2\gamma \lceil \frac{n}{2}-1 \rceil b_2 + \lfloor \frac{n}{2}-1 \rfloor b_3
\end{equation}
The conditions for these as well as the discrete cases $n=3,4$ are satisfied by Inequality~(\ref{E17}).\\

\noindent
\textbf{The new node should connect to a node in the larger partition, whenever applicable:} One way to see this is by ensuring that this strategy strictly dominates connecting to a node in the smaller partition.
This scenario arises for even values of $n\geq 4$.
\begin{align}
\begin{split}
\nonumber
&
b_1-c+ \left( \frac{n}{2}-1 \right) \left( (1-\gamma)b_2 - c_0 \right) + \left( \frac{n}{2}-1 \right) (1-\gamma)b_3 \\
&> b_1-c+ \left( \frac{n}{2} \right) \left( (1-\gamma)b_2 - c_0 \right) + \left( \frac{n}{2}-2 \right) (1-\gamma)b_3 
\end{split}
\end{align}
\begin{equation}
\label{E2a}
\iff c_0 > (1-\gamma)(b_2-b_3)
\end{equation}
An alternative condition would be such that the utility of a node in the smaller partition decreases if it accepts the link from the new node, thus forcing the latter to connect to a node in the other partition. But it can be seen that this condition is inconsistent with Inequality~(\ref{E17}) and so we use Inequality~(\ref{E2a}) to meet our purpose.\\

\noindent
\textbf{Type I node should prefer connecting to a Type III node, if any, than remaining in its current state:}
For $k\geq 2$, this scenario does not arise for $n<6$. 
For $n\geq 6$,
\begin{equation}
\nonumber
\begin{split}
(k+1)(b_1-c)+m_2b_2+(m_1-k-1)b_3 > k(b_1-c)+m_2b_2
+(m_1-k)b_3
\end{split}
\end{equation}
\begin{equation}
\label{E6}
\iff c<b_1-b_3
\end{equation}
Now for $k=1$, this scenario does not arise for $n=2,3$.\\
For $n \geq 4$,
\begin{align}
\nonumber
\begin{split}
2(b_1-c)+m_2b_2+(m_1-2)b_3 
> b_1-c+(1-\gamma)m_2b_2
+(1-\gamma)(m_1-1)b_3
\end{split}
\end{align}
\begin{equation}
\nonumber
\iff c<b_1-b_3+\gamma(m_2b_2+(m_1-1)b_3)
\end{equation}
Note that as $n \geq 4$, we have $m_1 \geq 2$ and $m_2 \geq 1$ and so the above condition is weaker that Inequality~(\ref{E6}).\\
It is also necessary that utility of Type III node does not decrease on accepting link from Type I node. In fact, when the former gets a chance to move, we derive conditions so that it also volunteers to create a link with the later.\\

\noindent
\textbf{The utility of Type III node should increase if it successfully creates a link with Type I node:}
When $k=1$, the case does not arise for $n=2,3$.\\
For $n \geq 6$,
\begin{equation}
\nonumber
(m_2+1)(b_1-c)+(m_1-1)b_2>m_2(b_1-c)+(m_1-1)b_2+(1-\gamma)b_3
\end{equation}
\begin{equation}
\nonumber
\iff c<b_1-b_3+\gamma b_3
\end{equation}
The conditions obtained from discrete cases $n=4,5$ are weaker than this one.\\
For $k\geq 2$, this case does not arise for $n <6$. \\
For $n\geq 6$,
\begin{equation}
\nonumber
(m_2+1)(b_1-c)+(m_1-1)b_2>m_2(b_1-c)+(m_1-1)b_2+b_3
\end{equation}
\begin{equation}
\nonumber
\iff c<b_1-b_3
\end{equation}
The conditions for all cases are satisfied by Inequality~(\ref{E6}).\\

\noindent
\textbf{Type III node should not delete its link with Type IV node:} This can be assured if this strategy is dominated by its strategy of forming a link with Type I node.
This scenario does not arise for $n=2,3$. 
The conditions for the discrete cases $n=4,5,6$ are weaker than that for $n\geq 7$.\\
For $n\geq 7$,
\begin{equation}
\nonumber
\begin{split}
(m_2+1)(b_1-c)+(m_1-1)b_2>(m_2-1)(b_1-c)+b_3
+(m_1-1)b_2+(1-\gamma)b_3
\end{split}
\end{equation}
\begin{equation}
\nonumber
\iff c<b_1-b_3+\frac{\gamma}{2}b_3
\end{equation}
For $k\geq 2$, the cases applicable are $n\geq 6$.
The condition for discrete case $n=6$ is weaker than the following condition.\\
For $n\geq 7$,
\begin{equation}
\nonumber
\begin{split}
(m_2+1)(b_1-c)+(m_1-1)b_2>(m_2-1)(b_1-c)+b_3+b_3+(m_1-1)b_2
\end{split}
\end{equation}
\begin{equation}
\nonumber
\iff c<b_1-b_3
\end{equation}
Hence, all conditions for this scenario are satisfied by Inequality~(\ref{E6}).\\

\noindent
\textbf{Type III node should prefer connecting to Type I node than to another Type III node:} This does not arise for $n<6$.
When $k=1$, \\
For $n\geq 6$,
\begin{align}
\nonumber
\begin{split}
(m_2+1)(b_1-c)+(m_1-1)b_2
>(m_2+1)(b_1-c)
+(m_1-2)b_2+(1-\gamma)b_3
\end{split}
\end{align}
\begin{equation}
\nonumber
\iff b_2>(1-\gamma)b_3
\end{equation} 
which is always true. For $k\geq 2$,\\
For $n \geq 6$,
\begin{equation}
\nonumber
(m_2+1)(b_1-c)+(m_1-1)b_2>(m_2+1)(b_1-c)+(m_1-2)b_2+b_3
\end{equation}
\begin{equation}
\nonumber
\iff b_2>b_3
\end{equation} 
which is always true.\\

\noindent
\textbf{Type IV node should not delete its link with Type III node:} That is, its utility should not increase by doing so.
This does not arise for $n<4$.\\
For $n\geq 7$,
\begin{equation}
\nonumber
\begin{split}
(m_1-1)(b_1-c)+(m_2-1)b_2+(1-\gamma)b_2+b_3 \\ \leq m_1(b_1-c)+(m_2-1)b_2+(1-\gamma)b_2
\end{split}
\end{equation}
\begin{equation}
\nonumber
\iff  c\leq b_1-b_3
\end{equation} 
The conditions for discrete cases $n=4,5,6$ are weaker than the above condition.
For $k\geq 2$, the new cases are $n\geq 6$, where the discrete case $n=6$ result in conditions weaker than the following one.\\
For $n \geq 7$,
\begin{equation}
\nonumber
(m_1-1)(b_1-c)+(m_2-1)b_2+b_2+b_3 \leq m_1(b_1-c)+(m_2-1)b_2+b_2
\end{equation}
\begin{equation}
\nonumber
\iff c\leq b_1-b_3
\end{equation} 
It can be seen that all conditions of this scenario are satisfied by Inequality~(\ref{E6}).\\

\noindent
\textbf{Type IV node should also not break its link with Type II node:} That is, its utility should not increase by doing so.
For $k=1$,\\
For $n\geq 6$,
\begin{align}
\nonumber
\begin{split}
&
(m_1-1)(b_1-c)+(m_2-1)b_2+(1-\gamma)b_4+(1-\gamma)b_3 \\ 
&\leq m_1(b_1-c)+(m_2-1)b_2+(1-\gamma)b_2 
\end{split}
\end{align}
\begin{equation}
\nonumber
\iff c\leq b_1-b_3+(1-\gamma)(b_2-b_4)+\gamma b_3
\end{equation} 
The discrete cases $n=3,4,5$ result in weaker conditions than this.
For $k\geq 2$,\\
For $n\geq 6$,
\begin{equation}
\nonumber
(m_1-1)(b_1-c)+(m_2-1)b_2+b_2+b_3  \leq m_1(b_1-c)+(m_2-1)b_2+b_2 
\end{equation}
\begin{equation}
\nonumber
\iff c\leq b_1-b_3
\end{equation} 
The conditions are satisfied by Inequality~(\ref{E6}).\\

\noindent
\textbf{Type I node should not propose a link to a Type IV node:} One way is to ensure that this strategy of Type I node is dominated by its strategy to propose a link to a Type III node.
It can be seen that for $k \geq 2$ and $n \geq 6$, this translates to
\begin{align}
\nonumber
\begin{split}
&
(k+1)(b_1-c)+m_2b_2+(m_1-k-1)b_3\\ 
&> (k+1)(b_1-c)+(m_2-1)b_2+(m_1-k)b_2
\end{split}
\end{align}
\begin{equation}
\nonumber
\iff b_2-b_3>(m_1-k)(b_2-b_3)
\end{equation}
which is not true for $m_1>k$.\\
So we look at the alternative condition that the utility of Type IV node decreases if it accepts the link from Type I node, and as Type I node computes this decrease in utility, it will not propose a link to Type IV node.
First, we consider $k=1$. The discrete case $n=4$ gives the following condition.\\
\begin{equation}
\nonumber
3(b_1-c)+2\gamma b_2+2\gamma b_2 < 2(b_1-c)+(1-\gamma)b_2+2\gamma b_2+\gamma b_3 
\end{equation}
\begin{equation}
\label{E7b}
\iff c>b_1-b_2+\gamma(3b_2-b_3)
\end{equation}
The other discrete cases $n=3,5$ result in weaker conditions than the above.\\
For $n\geq 6$,
\begin{equation}
\nonumber
(m_1+1)(b_1-c)+(m_2-1)b_2<m_1(b_1-c)+(m_2-1)b_2+(1-\gamma)b_2
\end{equation}
\begin{equation}
\nonumber
\iff c>b_1-b_2+\gamma b_2 
\end{equation}
which is a weaker condition than Inequality~(\ref{E7b}).
Now for $k \geq 2$, $n=4,5$ correspond to pairwise stability conditions and cases $n<4$ are not applicable.\\
For $n\geq 6$,
\begin{equation}
\nonumber
(m_1+1)(b_1-c)+(m_2-1)b_2 < m_1(b_1-c)+(m_2-1)b_2+b_2
\end{equation}
\begin{equation}
\nonumber
\iff c>b_1-b_2
\end{equation}
which is satisfied by Inequality~(\ref{E7b}).\\

\noindent
\textbf{Type IV node should not propose a link to Type I node:} This scenario is essentially equivalent to the previous one scenario of utility of Type IV node decreasing due to link with Type I node, with the equalities permitted. So these result in weaker and hence no additional conditions.\\

\noindent
\textbf{Type III node should not propose a link to Type II node:} One way is to ensure that for Type III node, connecting to Type II node is strictly dominated by connecting to Type I node.
It can be seen that for $k\geq 2$ and $n\geq 6$, this translates to
\begin{equation}
\nonumber
(m_2+1)(b_1-c)+(m_1-2)b_2+b_2<(m_2+1)(b_1-c)+(m_1-1)b_2
\end{equation}
which gives $0>0$. 
So we need to use the alternative condition that the utility of Type II node decreases on accepting link from Type III node.
For $k=1$,\\
For $n=4$,
\begin{equation}
\nonumber
3(b_1-c)+4\gamma b_2 < 2(b_1-c)+(1-\gamma)b_2+2\gamma b_2+\gamma b_3
\end{equation}
\begin{equation}
\nonumber
\iff c > b_1-b_2+\gamma(3b_2 - b_3)
\end{equation}
which is same as Inequality~(\ref{E7b}). \\
For $n\geq 5$,
\begin{align}
\nonumber
\begin{split}
&
(m_2+2)(b_1-c)+(m_1-2)b_2+2\gamma(m_2+1)b_2+2\gamma(m_1-2)b_3 \\
&< (m_2+1)(b_1-c)+(m_1-1)b_2+2\gamma m_2b_2+ 2\gamma(m_1-1)b_3
\end{split}
\end{align}
\begin{equation}
\nonumber
\iff c > b_1-b_2+2\gamma(b_2-b_3)
\end{equation}
which is a weaker condition than Inequality~(\ref{E7b}). 
Now for $k \geq 2$, the only new case is the following.\\
For $n\geq 6$,
\begin{equation}
\nonumber
(m_2+2)(b_1-c)+(m_1-2)b_2<(m_2+1)(b_1-c)+(m_1-1)b_2
\end{equation}
\begin{equation}
\nonumber
\iff c>b_1-b_2
\end{equation}
which is satisfied by Inequality~(\ref{E7b}).\\

\noindent
\textbf{Type II node should not propose a link with Type III node:} This is essentially equivalent to the above scenario of utility of Type II node decreasing due to link with Type III node, with the equalities permitted. So these result in weaker and hence no additional conditions.\\

\noindent
\textbf{No Type II node should delete link with Type IV node:} First, we consider $k=1$.\\
For $n \geq 7$, 
\begin{align}
\nonumber
\begin{split}
&
m_2(b_1-c)+(b_1-c) + (m_1-1)b_2+2\gamma m_2 b_2 + 2\gamma(m_1-1)b_3 \\
&\geq (m_2-1)(b_1-c) + (b_1-c) + (m_1-1)b_2+b_3
+2\gamma (m_2-1)b_2+2\gamma (m_1-1)b_3 +2\gamma b_4
\end{split}
\end{align}
\begin{equation}
\nonumber
\iff c \leq b_1 - b_3 + 2 \gamma (b_2 - b_4)
\end{equation}
This as well as all discrete cases $n<7$ are satisfied by Inequality~(\ref{E6}).\\
For $k \geq 2$,  the cases of $n=4,5$ correspond to pairwise stability condition that we have already considered, while cases $n<4$ are not applicable.\\
For $n \geq 6$,
\begin{equation}
\nonumber
m_2(b_1-c)+(m_1-1)b_2+b_3 \leq (m_2+1)(b_1-c)+(m_1-1)b_2
\end{equation}
\begin{equation}
\nonumber
\iff c \leq b_1-b_3
\end{equation}
which is satisfied by Inequality~(\ref{E6}).\\

\noindent
\textbf{Two Type IV nodes should not create a mutual link:} That is their utilities should not increase by doing so.
When $k=1$, it is not applicable for $n<5$. Also, the discrete case $n=5$ results in the same condition as below.\\
For $n\geq 6$,
\begin{align}
\nonumber
\begin{split}
&
(m_1+1)(b_1-c)+(m_2-2)b_2+(1-\gamma)b_2 \\ 
&\leq m_1(b_1-c)+(m_2-1)b_2+(1-\gamma)b_2
\end{split}
\end{align}
\begin{equation}
\nonumber
\iff c \geq b_1-b_2
\end{equation} 
For $k\geq 2$, $n=5$ corresponds to pairwise stability condition. \\
For $n\geq 6$,
\begin{equation}
\nonumber
(m_1+1)(b_1-c)+(m_2-2)b_2+b_2 \leq m_1(b_1-c)+(m_2-1)b_2+b_2
\end{equation}
\begin{equation}
\nonumber
\iff c \geq b_1-b_2
\end{equation} 
These are weaker conditions than Inequality~(\ref{E7b}).\\

\noindent
\textbf{No two Type II nodes should create a link between themselves:} This only applies to $k\geq 2$.
Also $n=4,5$ result in pairwise stability condition.\\
For $n\geq 6$,
\begin{equation}
\nonumber
(m_2+2)(b_1-c)+(m_1-2)b_2 \leq (m_2+1)(b_1-c)+(m_1-1)b_2
\end{equation}
\begin{equation}
\nonumber
\iff c\geq b_1-b_2
\end{equation} 
which is a weaker condition than Inequality~(\ref{E7b}).\\

\noindent
\textbf{Link between Type I node and Type II node  should not be deleted:} It is clear that it will not be deleted as such a link is just formed with no other changes in the network.\\

Inequalities~(\ref{E6}) and (\ref{E7b}) are stronger conditions than Inequalities~(\ref{E16}), (\ref{E17for2}) and (\ref{E17}).
Furthermore, for non-zero range of $c$, from Inequalities~(\ref{E6}) and (\ref{E7b}), we have
\begin{equation}
\label{E0}
\gamma < \frac{b_2-b_3}{3b_2-b_3}
\end{equation} 
The required sufficient conditions are obtained by combining Inequalities~(\ref{E1b}), (\ref{E2a}), (\ref{E6}), (\ref{E7b}) and (\ref{E0}).
\end{proof}

\section{Proof of Proposition~\ref{thm:2star}
}
\label{app:2star}
\begin{customprop}{\ref{thm:2star}}
Let $\sigma$ be the upper bound on the number of nodes that can enter the network and $\lambda = \lceil \frac{\sigma}{2} -1 \rceil \left( 2b_2-b_3 \right)$.
Then, if $\left( 1-\gamma \right) \left( b_2-b_3 \right) < c_0 < \left( 1-\gamma \right) \left( b_2-b_4 \right) $ and either \\
(i) $\gamma < \min \Big\{  \frac{b_2-b_3}{\lambda-b_3} ,  \frac{b_3}{b_2+b_3} \Big\}$ and $b_1-b_3+\gamma(b_2+b_3) \leq c < b_1$, or\\
(ii) $\frac{b_2-b_3}{\lambda-b_3} \leq \gamma < \min \Big\{ \frac{b_2}{\lambda+b_2} , \frac{b_3}{b_2+b_3}  \Big\}$ and $b_1-b_2+\gamma b_2 + \gamma \lambda \leq c < b_1$,\\
the unique resulting topology is a 
2-star.
\end{customprop}
\begin{proof}
We derive sufficient conditions for the formation of a 2-star network by forming its skeleton of four nodes first, that is, a network with two interconnected centers, connected to one leaf node each. Once this is formed, we ensure that a newly entering node connects to the center with fewer number of leaf nodes, whenever applicable, so as to maintain the load balance between the two centers.\\

\noindent
\textbf{Forming the skeleton of the 2-star network:}
With one node in the network, the second node should successfully create a link with the former. The condition for ensuring this is
\begin{equation}
\label{F0.1}
c<b_1
\end{equation}
For the third node to enter, it should propose a link to any of the two existing nodes in the network, that is, it should get a positive utility by doing so. This gives
\begin{equation}
\nonumber
c<b_1+(1-\gamma)b_2-c_0
\end{equation}
This is ensured by Inequality~(\ref{F0.1}) and 
\begin{equation}
\label{F0.2}
c_0\leq (1-\gamma)b_2
\end{equation}
Also the existing node to which the link is proposed, should accept it, that is, its utility should not decrease by doing so.
\begin{equation}
\nonumber
2(b_1-c) + 2\gamma b_2 \geq b_1-c
\end{equation}
\begin{equation}
\nonumber
\iff c \leq b_1+2\gamma b_2
\end{equation}
which is a weaker condition than Inequality~(\ref{F0.1}).
We have to also ensure that this V-shaped network of three nodes is pairwise stable. It is clear that no node will delete any of its links since such a link is just formed. However, we have to ensure that the two leaf nodes of this V-shaped network do not create a mutual link. This can be ensured by
\begin{equation}
\nonumber
2(b_1-c) \leq b_1-c + (1-\gamma)b_2
\end{equation}
\begin{equation}
\label{F5for3}
\iff c \geq b_1-b_2+\gamma b_2
\end{equation}
Following this, the fourth node should propose a link to one of the two leaf node in the V-shaped network. For ensuring that its utility increases by doing so,
\begin{equation}
\nonumber
c<b_1+(1-\gamma)b_2-c_0 +(1-\gamma)b_3
\end{equation}
which is satisfied by Inequalities~(\ref{F0.1}) and (\ref{F0.2}).
Also, it should prefer connecting to a leaf node than the center of the V-shaped network, that is,
\begin{equation}
\nonumber
b_1-c +(1-\gamma)b_2 - c_0 +(1-\gamma)b_3 > b_1-c +2(1-\gamma)b_2 -2c_0
\end{equation}
\begin{equation}
\label{F0.3}
\iff c_0 > (1-\gamma)(b_2-b_3)
\end{equation}
The leaf node to which the link is proposed, should accept the link.
\begin{equation}
\nonumber
2(b_1-c)+(1-\gamma)b_2+2\gamma b_2 +\gamma b_3 \geq b_1-c +(1-\gamma)b_2
\end{equation}
\begin{equation}
\nonumber
\iff c \leq b_1 + \gamma(2b_2+b_3)
\end{equation}
which is satisfied by Inequality~(\ref{F0.1}).\\
We have to also ensure that this network is pairwise stable. We derive sufficient conditions for pairwise stability of a general 2-star network with number of nodes $n \geq 4$, which includes the sufficient conditions for pairwise stability of the skeleton thus formed.\\

Let the centers of the 2-star be labeled $C_1$ and $C_2$. Also, let the number of leaf nodes connected to $C_1$ be $m_1$ and that connected to $C_2$ be $m_2$. \\

\noindent
\textbf{Leaf nodes that are connected to different centers, should not create a mutual link:} 
This scenario is valid for $n\geq 4$. Without loss of generality, for a leaf node connected to $C_1$,
\begin{align}
\nonumber
\begin{split}
&
2(b_1-c)+(m_1-1)(1-\gamma)b_2 +b_2 +(m_2-1)(1-\gamma)b_3\\
&\leq b_1-c +m_1(1-\gamma)b_2 +m_2(1-\gamma)b_3
\end{split}
\end{align}
\begin{equation}
\label{F5}
\iff c \geq b_1-b_3+\gamma(b_2+b_3)
\end{equation}

\noindent
\textbf{Link between one center and a leaf node of the other center should not be created:}
One option to ensure this is to see that the utility of center $C_1$ decreases owing to its link with a leaf node of $C_2$. This is valid for $n\geq 4$.
\begin{align}
\nonumber
\begin{split}
&
(m_1+2)(b_1-c) +(m_2-1)(1-\gamma)b_2 + \gamma(2)(m_1)2b_2 
+\frac{\gamma}{2}(m_1)(m_2-1)2b_3\\
&< (m_1+1)(b_1-c) +m_2(1-\gamma)b_2 + \gamma(1)(m_1)2b_2 
+ \frac{\gamma}{2}(m_1)(m_2)2b_3
\end{split}
\end{align}
\begin{equation}
\nonumber
\iff c > b_1-b_2+\gamma b_2 +\gamma m_1 (2b_2-b_3)
\end{equation}
As it needs to be true for all $n \geq 4$, we set the condition to
\begin{equation}
\nonumber
 c > \max_{n \geq 4} \Big\{ b_1-b_2+\gamma b_2 +\gamma m_1 (2b_2-b_3) \Big\}
\end{equation}
Since $\max\{m_1\} = \lceil \frac{\sigma}{2}-1 \rceil$, where $\sigma$ is the upper bound on the number of nodes that can enter the network,
\begin{equation}
\label{F6a}
c > b_1-b_2+\gamma b_2 +\gamma \lceil \frac{\sigma}{2}-1 \rceil (2b_2-b_3)
\end{equation}
An alternative option to the above condition is to ensure that the utility of leaf node connected to $C_2$ decreases when it establishes a link with $C_1$.
\begin{align}
\nonumber
\begin{split}
&
2(b_1-c) +(m_2-1)(1-\gamma)b_2 +m_1(1-\gamma)b_2 \\
&< b_1-c +m_2(1-\gamma)b_2  + m_1(1-\gamma)b_3
\end{split}
\end{align}
\begin{equation}
\nonumber
\iff c > b_1-(1-\gamma)b_2+m_1(1-\gamma)(b_2-b_3)
\end{equation}
As it needs to be true for all $n \geq 4$ and $\max\{m_1\} = \lceil \frac{\sigma}{2}-1 \rceil$, we set the condition to
\begin{equation}
\label{F6b}
 c > b_1-(1-\gamma)b_2+(1-\gamma) \lceil \frac{\sigma}{2}-1 \rceil(b_2-b_3)
\end{equation}

\noindent
\textbf{No link is broken in the 2-star network:}
It is easy to check that, as 2-star is a tree graph, the condition $c<b_1$ in Inequality~(\ref{F0.1}) is sufficient to ensure this.\\

\noindent
\textbf{Two leaf nodes of a center should not create a mutual link:}
This case arises for $n\geq 5$. It can be easily checked that the condition $c \geq b_1-b_2+\gamma b_2$ in Inequality~(\ref{F5for3}) is sufficient to ensure this.\\

This completes the sufficient conditions for pairwise stability of a 2-star network.
In what follows, we ensure that any new node successfully enters an existing 2-star network such that the topology is maintained.\\

\noindent
\textbf{A newly entering node should prefer connecting to the center with less number of leaf nodes, whenever applicable:}
This case arises when $n$ is even and $n\geq 6$, that is, when a new node tries to enter a 2-star network with odd number of nodes. Without loss of generality, let $m_1=m_2+1$. So the new node should prefer connecting to $C_2$ over $C_1$.
\begin{align}
\nonumber
\begin{split}
&
b_1-c +(m_2+1)((1-\gamma)b_2 - c_0) +m_1(1-\gamma)b_3\\
&> b_1-c +(m_1+1)((1-\gamma)b_2 -c_0) +m_2(1-\gamma)b_3
\end{split}
\end{align}
\begin{equation}
\nonumber
\iff c_0 > (1-\gamma)(b_2-b_3)
\end{equation}
which is same as Inequality~(\ref{F0.3}).\\

\noindent
\textbf{The new node should not stay out of the network:}
Its utility should be positive when it enters the network by connecting to the center with less number of leaf nodes, whenever applicable.
\begin{equation}
\nonumber
b_1-c +(m_2+1)((1-\gamma)b_2 - c_0) +m_1(1-\gamma)b_3 > 0
\end{equation}
It can be easily seen that, as $m_1,m_2 \geq 1$, the above is always true when Inequalities~(\ref{F0.1}) and (\ref{F0.3}) are satisfied.\\

\noindent
\textbf{The center with less number of leaf nodes, whenever applicable, should accept the link from the newly entering node:}
The condition $c<b_1$ in Inequality~(\ref{F0.1}) is sufficient to ensure this.\\

\noindent
\textbf{The newly entering node should prefer connecting to the center with less number of leaf nodes, whenever applicable, over connecting to any leaf node:}
It is easy to see that, as $b_3>b_4$, whenever the number of leaf nodes connected to the centers are different, a newly entering node prefers connecting to a leaf node connected to $C_1$ over that connected to $C_2$ (assuming $m_1=m_2+1$). 
Hence we have to ensure that connecting to the center with less number of leaf nodes, whenever applicable, is more beneficial to a newly entering node than connecting to a leaf node that is connected to $C_1$. Without loss of generality, we want the new node to prefer connecting to $C_2$ (irrespective of whether $m_1=m_2$ or $m_1=m_2+1$).
\begin{align}
\nonumber
\begin{split}
&
b_1-c +(m_2+1)((1-\gamma)b_2 - c_0) +m_1(1-\gamma)b_3 \\
&> b_1-c +(1-\gamma)b_2-c_0 +m_1(1-\gamma)b_3 +m_2(1-\gamma)b_4
\end{split}
\end{align}
\begin{equation}
\nonumber
\iff m_2(1-\gamma)b_2 -m_2 c_0 > m_2(1-\gamma)b_4
\end{equation}
As $m_2 \geq 1$,
\begin{equation}
\label{F4}
c_0 < (1-\gamma)(b_2-b_4)
\end{equation}

The conditions on $c$ can be obtained from Inequalities~(\ref{F0.1}), (\ref{F5for3}), (\ref{F5}), and either (\ref{F6a}) or (\ref{F6b}). 
Suppose we choose Inequality~(\ref{F6b}) over Inequality~(\ref{F6a}). So, for $c$ to have a non-empty range of values, from Inequalities~(\ref{F0.1}) and (\ref{F6b}), we must have
\begin{equation}
\nonumber
b_1-(1-\gamma)b_2+(1-\gamma) \lceil \frac{\sigma}{2}-1 \rceil(b_2-b_3) < b_1
\end{equation}
As $\gamma<1$, the above is equivalent to
\begin{equation}
\nonumber
b_2>\lceil \frac{\sigma}{2}-1 \rceil(b_2-b_3) 
\end{equation}
which is not true for arbitrarily large values of $\sigma$. So we cannot use Inequality~(\ref{F6b}).
Suppose we choose Inequality~(\ref{F6a}). So, for $c$ to have a non-empty range of values, from Inequalities~(\ref{F0.1}) and (\ref{F6a}), we must have
\begin{equation}
\nonumber
b_1-b_2+\gamma b_2 +\gamma \lceil \frac{\sigma}{2}-1 \rceil (2b_2-b_3) < b_1
\end{equation}
Let $\lambda = \lceil \frac{\sigma}{2}-1 \rceil (2b_2-b_3)$. So the above is equivalent to
\begin{equation}
\label{2stargamma1}
\gamma < \frac{b_2}{\lambda+b_2}
\end{equation}
which is a valid range of $\gamma$ as $\gamma \in {[0,1)}$.
So we use Inequality~(\ref{F6a}) instead of Inequality~(\ref{F6b}). Also, Inequality~(\ref{F5for3}) is weaker than Inequality~(\ref{F6a}).
For $c$ to have a non-empty range of values, it is also necessary, from Inequalities~(\ref{F0.1}) and (\ref{F5}), that
\begin{equation}
\nonumber
b_1-b_3+\gamma(b_2+b_3) < b_1
\end{equation}
\begin{equation}
\label{2stargamma2}
\iff \gamma < \frac{b_3}{b_2+b_3}
\end{equation}
Both Inequalities~(\ref{F5}) and (\ref{F6a}) lower bound $c$. So we need to determine the stronger condition of the two. It can be seen that Inequality~(\ref{F5}) is at least as strong as Inequality~(\ref{F6a}) if and only if
\begin{equation}
\nonumber
b_1-b_3+\gamma(b_2+b_3) \geq b_1-b_2+\gamma b_2 + \gamma \lambda
\end{equation}
\begin{equation}
\label{2stargamma3}
\iff \gamma \leq \frac{b_2-b_3}{\lambda-b_3}
\end{equation}
We consider the cases when either is a stronger condition.

\noindent
\textbf{Case (i)} If Inequality~(\ref{2stargamma3}) is true:\\
Inequalities~(\ref{F0.1}) and (\ref{F5}) are the strongest conditions. So the sufficient condition on $c$ is
\begin{equation}
\label{2starfinalc1}
b_1-b_3+\gamma(b_2+b_3) \leq c < b_1
\end{equation}
and Inequalities~(\ref{2stargamma1}), (\ref{2stargamma2}) and (\ref{2stargamma3}) give
\begin{equation}
\nonumber
\gamma < \min \Big\{ \frac{b_2-b_3}{\lambda-b_3} , \frac{b_2}{\lambda+b_2} ,   \frac{b_3}{b_2+b_3}  \Big\}
\end{equation}
It can also be shown that for $\lambda b_3 \geq b_2^2$, 
$\min \Big\{ \frac{b_2-b_3}{\lambda-b_3} , \frac{b_2}{\lambda+b_2} ,   \frac{b_3}{b_2+b_3}  \Big\} = \frac{b_2-b_3}{\lambda-b_3}$ \\
 and for $\lambda b_3 \leq b_2^2$, 
$\min \Big\{ \frac{b_2-b_3}{\lambda-b_3} , \frac{b_2}{\lambda+b_2} ,   \frac{b_3}{b_2+b_3}  \Big\} = \frac{b_3}{b_2+b_3}$. So the above reduces to
\begin{equation}
\label{2starfinalgamma1}
\gamma < \min \Big\{ \frac{b_2-b_3}{\lambda-b_3} , \frac{b_3}{b_2+b_3}  \Big\}
\end{equation}

\noindent
\textbf{Case (ii)} If Inequality~(\ref{2stargamma3}) is not true:\\
Inequalities~(\ref{F0.1}) and (\ref{F6a}) are the strongest conditions. So the sufficient condition on $c$ is
\begin{equation}
\label{2starfinalc2}
b_1-b_2+\gamma b_2 + \gamma \lambda \leq c < b_1
\end{equation}
and Inequalities~(\ref{2stargamma1}), (\ref{2stargamma2}) and the reverse of (\ref{2stargamma3}) give
\begin{equation}
\label{2starfinalgamma2}
\frac{b_2-b_3}{\lambda-b_3} \leq \gamma < \min \Big\{ \frac{b_2}{\lambda+b_2}  ,  \frac{b_3}{b_2+b_3}  \Big\}
\end{equation}
Furthermore, Inequalities~(\ref{F0.2}), (\ref{F0.3}) and (\ref{F4}) give the sufficient conditions on $c_0$.
\begin{equation}
\label{2starfinalc0}
(1-\gamma)(b_2-b_3) < c_0 < (1-\gamma)(b_2-b_4)
\end{equation}
Inequalities~(\ref{2starfinalc1}), (\ref{2starfinalgamma1}) and (\ref{2starfinalc0}) give the sufficient conditions $(i)$ in the proposition, while Inequalities~(\ref{2starfinalc2}), (\ref{2starfinalgamma2}) and (\ref{2starfinalc0}) give the sufficient conditions $(ii)$.
\end{proof}

\section{Proof of Lemma~\ref{lem:kstar0}
}
\label{app:kstar0}
\begin{customlem}{\ref{lem:kstar0}}
Under the proposed utility model, for the entire family of $k$-star networks (given some $k\geq 3$) to be pairwise stable, it is necessary that 
$\gamma=0$ and $c=b_1-b_3$.
\end{customlem}
\begin{proof}
We consider two scenarios sufficient to prove this.\\

\noindent
\textbf{I) No center should delete its link with any other center:} Here, only one case is enough to be considered, that is, when each center has just one leaf node,
since in all other cases, the benefits obtained by each center from the connection with other centers is at least as much. For $k=3$, 
\begin{align}
\nonumber
\begin{split}
&
3(b_1-c) + 2(1-\gamma)b_2 + \gamma(1)(2)2b_2 + \frac{\gamma}{2}(1)(2)b_3 \\
&\geq
2(b_1-c) + 2(1-\gamma)b_2 + (1-\gamma)b_3 + \gamma(1)(1)2b_2 
 + 2\left( \frac{\gamma}{2}(1)(1)2b_3 \right) +  \frac{\gamma}{3}(1)(1)2b_4
\end{split}
\end{align}
\begin{equation}
\label{eq:kstarineq1}
\iff c \leq b_1-b_3+\gamma(2b_2+b_3)-\frac{2\gamma}{3}b_4
\end{equation}
For $k\geq 4$,
\begin{align}
\nonumber
\begin{split}
&
(k-1+1)(b_1-c) + (k-1)(1-\gamma)b_2 + \gamma(1)(k-1)2b_2  
+ \frac{\gamma}{2}(1)(k-1)2b_3\\
 &\geq (k-2+1)(b_1-c) + (k-2)(1-\gamma)b_2 + b_2 + (1-\gamma)b_3  \\
 &\;\;\;\;\; +  \gamma(1)(k-2)2b_2+ \frac{\gamma}{2}(1)(k-2)2b_3 + \gamma(1)(1)2b_3 + \frac{\gamma}{2}(1)(1)2b_4
\end{split}
\end{align}
\begin{equation}
\label{eq:kstarineq2}
\iff c \leq b_1-b_3+\gamma(b_2-b_4)
\end{equation}

\noindent
\textbf{II) Leaf nodes of different centers should not form a link with each other:} Consider a leaf node. Let $m_i$ be the number of leaf nodes connected to the center to which the leaf node under consideration, is connected. For $k\geq 3$,
\begin{align}
\nonumber
\begin{split}
&
2(b_1-c) + (m_i-1)(1-\gamma)b_2 + (1-\gamma)b_3 (\sum_{j \neq i}m_i - 1) + (k-1)b_2 \\
&\leq b_1-c + (m_i-1)(1-\gamma)b_2 + (1-\gamma)b_3 \sum_{j \neq i}m_i + (k-1)(1-\gamma)b_2
\end{split}
\end{align}
\begin{equation}
\label{eq:kstarineq3}
\iff c \geq b_1-b_3+\gamma((k-1)b_2+b_3)
\end{equation}
The only way to satisfy Inequalities~(\ref{eq:kstarineq1}), (\ref{eq:kstarineq2}) and (\ref{eq:kstarineq3}) simultaneously is by setting 
\begin{equation}
\label{eq:kstargamma}
\gamma=0
\end{equation}
and
\begin{equation}
\label{eq:kstarcost}
c=b_1-b_3
\end{equation}
thus proving the lemma.
\end{proof}

\section{Proof of Proposition~\ref{thm:kstar}
}
\label{app:kstar}
\begin{customprop}{\ref{thm:kstar}}
For a network starting with the base graph for $k$-star (given some $k \geq 3$), and $\gamma =0 $, if $c =b_1-b_3 $ 
and $ b_2-b_3  < c_0 < b_2-b_4$, the unique resulting topology is a 
$k$-star.
\end{customprop}
\begin{proof}
It is clear from Lemma~\ref{lem:kstar0} that under the proposed utility model, for the family of $k$-star networks ($k\geq 3$) to be pairwise stable, it is necessary that $\gamma=0$ and $c=b_1-b_3$ in order to stabilize all possible $k$-star networks for a given $k$, and hence forms the necessary part of sufficient conditions for the formation of a $k$-star network. Hence, for the rest of this proof, we will assume that
\begin{equation}
\label{eq:appkstargamma}
\gamma=0
\end{equation}
and
\begin{equation}
\label{eq:appkstarcost}
c=b_1-b_3
\end{equation}

 Without loss of generality, assume some indexing over the $k$ centers from $1$ to $k$.
Let $C_i$ be the center with index $i$ and $m_i$ be the number of leaf nodes it is linked to. 
Also we start with a base graph in which every center is linked to one leaf node and the number of leaf nodes linked to each center increases as the process goes on. 
So we have, $m_i \geq 1\text{ for } 1 \leq i \leq k$.\\

\noindent
\textbf{For the newly entering node to propose entering the network:}
Our objective is to ensure that the newly entering node connects to a center with the least number of leaf nodes, in order to maintain balance over the number of leaf nodes linked to the centers. Without loss of generality, assume that we want the newly entering node to connect to $C_1$. The utility of the newly entering node should be positive after doing so.
\begin{equation}
\nonumber
b_1 - c + (m_1+k-1)\left(b_2-c_0\right) +b_3\sum_{i=2}^{k} m_i > 0
\end{equation}
Since the minimum value of $m_i$ is $1$ for any $i$, the above condition is true if
\begin{equation}
\nonumber
c < b_1 + k \left(b_2-c_0\right) + (k-1)b_3
\end{equation}
This is satisfied by Equation~(\ref{eq:appkstarcost}) and
\begin{equation}
\label{G1forc0}
c_0 < b_2+b_3
\end{equation}

\noindent
\textbf{The newly entering node should connect to a center with the least number of leaf nodes, whenever applicable:}
This case does not arise when all centers have the same number of leaf nodes. Moreover, the way we direct the evolution of the network, the number of leaf nodes connected to any two centers differs by at most one. Without loss of generality, assume that we want the newly entering node to connect to $C_1$. Consider a center $C_p$ such that $m_p = m_1+1$. So the newly entering node should prefer connecting to $C_1$ over connecting to $C_p$.
\begin{align}
\nonumber
\begin{split}
&
b_1-c + (m_1+k-1) \left(b_2 - c_0 \right) +b_3\sum_{i=2}^{k} m_i \\
&> b_1-c + (m_p+k-1) \left( b_2 - c_0 \right) + b_3\sum_{\substack{1 \leq i \leq k \\ i \neq p}} m_i
\end{split}
\end{align}
As $m_p = m_1 + 1$, we have
\begin{equation}
\label{G3}
c_0 > b_2-b_3
\end{equation}

\noindent
\textbf{For a center with the least number of leaf nodes to accept the link from the newly entering node:}
It can be easily seen that this is ensured by Equation~(\ref{eq:appkstarcost}).\\

\noindent
\textbf{The newly entering node should not connect to any leaf node:}
It can be easily seen that owing to benefits degrading with distance, for the newly entering node, connecting to any leaf node which is connected to a center with the most number of leaf nodes, strictly dominates connecting to any other leaf node, whenever applicable. So it is sufficient to ensure that the newly entering node does not connect to any leaf node which is connected to a center with the most number of leaf nodes.
This can be done by ensuring that for the newly entering node, connecting to a center with the least number of leaf nodes strictly dominates connecting to any leaf node which is connected to a center with the most number of leaf nodes.
Say we want the newly entering node to prefer connecting to center $C_1$ over a leaf node that is linked to center $C_p$.
\begin{align}
\nonumber
\begin{split}
&
b_1-c + (m_1+k-1)(b_2-c_0) +  b_3\sum_{i=2}^k m_i \\
&> b_1-c +b_2 - c_0 + (m_p + k-2)b_3 +  b_4\sum_{\substack{1 \leq i \leq k \\ i \neq p}} m_i 
\end{split}
\end{align}
We need to consider two cases (i) $m_p = m_1+1$ and (ii) $m_p=m_1$\\
Case (i) $m_p = m_1+1$: Substituting  this value of $m_p$ gives
\begin{equation}
\nonumber
\begin{split}
(m_1+k-1)(b_2-c_0) +  (b_3-b_4)\sum_{\substack{2 \leq i \leq k \\ i \neq p}} m_i + m_1 (b_3-b_4) \\+ b_3
> b_2 - c_0 + (m_1 + k-1)b_3 
\end{split}
\end{equation}
As the minimum value of $\sum_{2 \leq i \leq k, i \neq p}m_i$ is $k-2$, the above remains true if we replace $\sum_{2 \leq i \leq k, i \neq p}m_i$ with $k-2$. Further simplification gives
\begin{equation}
\nonumber
(m_1+k-2)(b_2-b_4-c_0)>0
\end{equation}
Since $m_1+k-2>0$ is positive (as $m_1 \geq 1$ and $k \geq 3$), we must have
\begin{equation}
\label{G4forc0}
c_0 < b_2-b_4
\end{equation}
Case (ii) $m_k=m_1$: It can be similarly shown that Equation~(\ref{G4forc0}) is the sufficient condition.\\

\noindent
Now that the newly entering node enters in a way such that $k$-star network is formed, we have to ensure that no further modifications of links occur so that the network thus formed, is pairwise stable.\\

\noindent
\textbf{For centers and the corresponding leaf nodes to not delete the link between them:} It can be easily seen that $c<b_1$, a weaker condition than Equation~(\ref{eq:appkstarcost}), is a sufficient condition to ensure this.\\

\noindent
\textbf{No center should delete its link with any other center:} This is ensured by the inequalities in the proof of Lemma~\ref{lem:kstar0}, which are weaker than Equations~(\ref{eq:appkstargamma}) and (\ref{eq:appkstarcost}). \\

\noindent
\textbf{Leaf nodes of a center should not form a link with each other:} The net benefit that a leaf node would get by forming such a link should be non-positive.
\begin{equation}
\nonumber
b_1-c - b_2 \leq 0
\end{equation}
\begin{equation}
\nonumber
\iff c \geq b_1-b_2
\end{equation}
which is satisfied by Equation~(\ref{eq:appkstarcost}).\\

\noindent
\textbf{Leaf nodes of different centers should not form a link with each other:} This is ensured by the inequality in the proof of Lemma~\ref{lem:kstar0}, which is weaker than Equations~(\ref{eq:appkstargamma}) and (\ref{eq:appkstarcost}). \\

\noindent
\textbf{Link between a center and a leaf node of any other center should not be created:} Let $C_i$ be the center under consideration and the leaf node under consideration be linked to $C_j$ ($j\neq i$). There are two ways to ensure this. First is to ensure that a center neither proposes nor accepts a link with a leaf node of any other center. This mathematically is
\begin{align}
\nonumber
\begin{split}
&
(k-1+m_i+1)(b_1-c)+  b_2(\sum_{\substack{1 \leq q \leq k \\ q \neq i}}m_q-1) \\ &< (k-1+m_i)(b_1-c) + b_2 \sum_{\substack{1 \leq q \leq k \\ q \neq i}}m_q 
\end{split}
\end{align}
\begin{equation}
\nonumber
\iff c > b_1-b_2
\end{equation}

An alternative to this condition is to ensure that a leaf node neither proposes nor accepts a link with a center to which it is not connected, but since this condition is already satisfied by Equation~(\ref{eq:appkstarcost}), this alternative need not be considered.

Equations~(\ref{eq:appkstargamma}), (\ref{eq:appkstarcost}), (\ref{G1forc0}), (\ref{G3}) and (\ref{G4forc0}) give the required sufficient conditions for the $k$-star network topology.
%
\end{proof}

\section{Proof of Theorem~\ref{thm:gedkstar}
}
\label{app:gedkstar}
\begin{customthm}{\ref{thm:gedkstar}}
There exists an $O(\mu^{k+2})$ polynomial time algorithm to compute the graph edit distance 
 between a graph $g$ and a $k$-star graph with same number of nodes as $g$, where $\mu$ is the number of nodes in $g$.
\end{customthm}
\begin{proof}
Assume that the mapping of the $k$ centers of the $k$-star network to the nodes in $g$, is known. Let us call these nodes of $g$ as {\em pseudo-centers}. The graph edit distance can be computed by taking the minimum number of edge edit operations over all possible mappings. In a $k$-star graph, each node, other than centers, is allotted to exactly one center. Hence, our objective is to allot nodes, other than pseudo-centers, (call them {\em pseudo-leaves}) in $g$ to pseudo-centers such that the graph edit distance is minimized.
Let $\mu$ and $\xi$ be the number of nodes and edges in $g$, respectively. 
Let {\em vacancy} of a pseudo-center at any point of time be defined as the maximum number of nodes that can be allotted to it, given the current allotment.
Note that if $\mu$ is not a multiple of $k$, the vacancy of a pseudo-center depends not only on the number of pseudo-leaves allotted to it, but also on the number of pseudo-leaves allotted to other pseudo-centers.

It is clear that if the mapping of the $k$ centers is known, for transforming $g$ to a corresponding $k$-star, it is necessary that all missing links between any two pseudo-centers be added (let $\beta_1$ be the number of such links) and all existing links between any two pseudo-leaves be deleted (let $\beta_2$ be the number of such links).
The only other links that need to be computed for additions or deletions, in order to minimize graph edit distance, are those interlinking pseudo-leaves with pseudo-centers.
The number of links that already interlink pseudo-leaves with pseudo-centers in $g$ is $\beta_3 = (\xi-\beta_2-($\begin{footnotesize}$\dbinom{k}{2}$\end{footnotesize}$-\beta_1))$.
Say the number of these edges that are retained during the transformation to $k$-star, is $f$, that is, exactly $f$ pseudo-leaves are allotted a pseudo-center and $(\mu-k-f)$ are not. So the number of edges interlinking pseudo-leaves with pseudo-centers, that are deleted during the transformation, is $(\beta_3-f)$. Also, the number of edges to be added in order to allot the pseudo-leaves, that are not allotted to any pseudo-center, to some pseudo-center having a positive vacancy, is $(\mu-k-f)$.
So the number of edge edit operations is $(\beta_1+\beta_2+\beta_3+\mu-k-2f) = (\mu+\xi+2\beta_1-$\begin{normalsize}$\frac{k}{2}$\end{normalsize}$(k+1)-2f)$.
Given a mapping of the $k$ centers, the only variable in this expression is $f$. 
So in order to minimize its value, we need to maximize the number of edges interlinking pseudo-leaves and pseudo-centers, that remain intact after the transformation to $k$-star.
We now address this problem of maximizing $f$.

Let the number of nodes in $g$ be $\mu=pk+q$ where $p$ and $q$ are integers such that $p \geq 0$ and $1\leq q< k$. 
In a $k$-star graph with $\mu$ nodes, $q$ centers are linked to $p$ leaf nodes and the remaining $k-q$ are linked to $p-1$ leaf nodes. So for transforming $g$ to a corresponding $k$-star graph, $q$ pseudo-centers should be allotted $p$ pseudo-leaves and the remaining $k-q$ should be allotted $p-1$.
So, at most $q$ pseudo-centers should be allotted $p$ nodes, that is, the vacancy of at most $q$ pseudo-centers should be $p$, while that of the remaining $k-q$ should be $p-1$.
In other words, to start with, the sum of vacancies of any $q+1$ pseudo-centers should be at most $(q+1)p-1$. 

\begin{figure}[t!]
\centering
\includegraphics[scale=0.57]{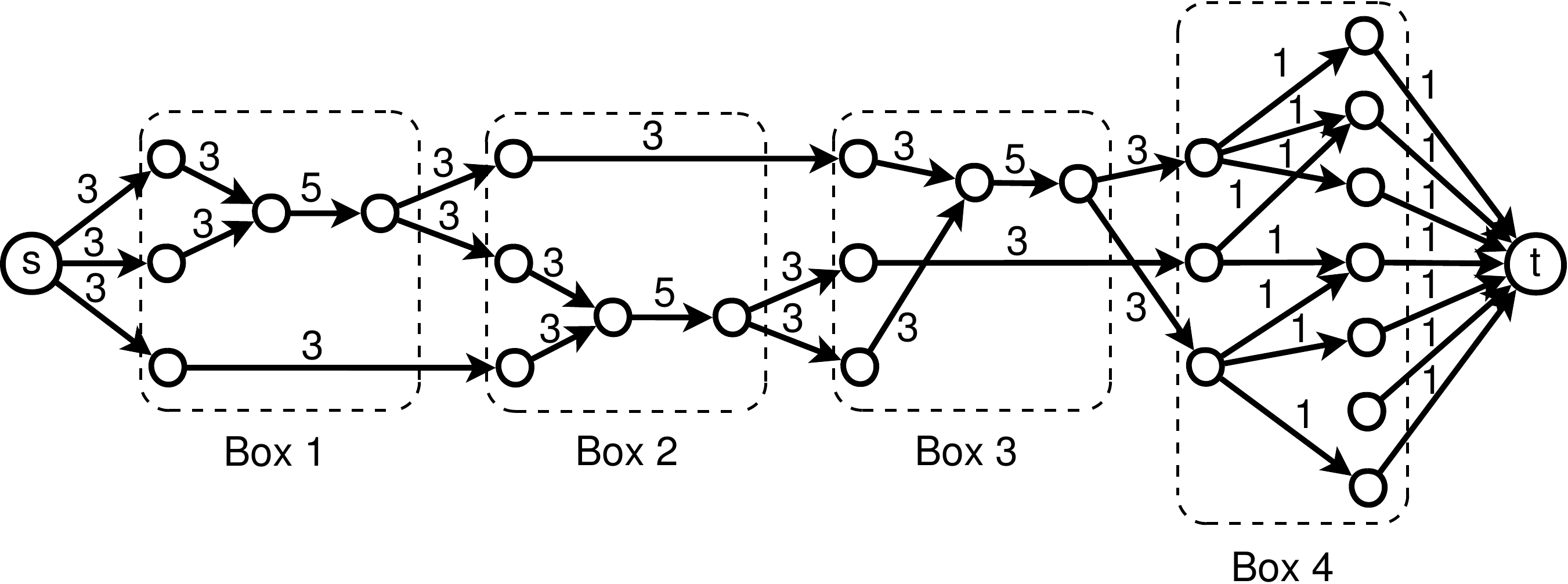}
\caption{Formulation of graph edit distance between graph $g$ ($\mu=10$) and a $3$-star graph with same number of nodes as $g$, as a max-flow problem}
\label{fig:max_flow_eg}
\end{figure}

The above problem can be formulated as an application of max-flow in a directed network.
Figure~\ref{fig:max_flow_eg} shows the formulation for a graph $g$ with 10 nodes and a 3-star graph, where $p=3$ and $q=1$. The edges directing from the source node $s$ to the left $k$ nodes in Box 1 (here $k=3$) and those in Boxes 1, 2 and 3, formulate the vacancy of each of these pseudo-centers to be $p$. Boxes 1, 2 and 3 formulate the constraint that the sum of vacancies of any $q+1$ pseudo-centers should be at most $(q+1)p-1$.
The rightmost Box 4 is obtained by considering edges only interlinking any pseudo-centers (left nodes) and pseudo-leaves (right nodes). 

As all the edges have integer capacities, the Ford-Fulkerson algorithm constructs an integer maximum flow.
The number of constraints concerning the sum of vacancies of pseudo-centers is \begin{footnotesize}$\dbinom{k}{q+1}$\end{footnotesize} and number of edges added per such constraint is $2(q+1)+2$.

So the maximum number of edges, say $\chi$, in the max-flow formulation, is $k$ (from source node to left $k$ nodes in Box 1) $+$ $\left( 2(q+1)+2 \right)$\begin{footnotesize}$\dbinom{k}{q+1}$\end{footnotesize} (from the above calculation) $+$ $k(\mu-k)$ (upper limit on the number of edges in Box 4, interlinking pseudo-centers and pseudo-leaves) $+$ $(\mu-k)$ (number of edges directing towards target node). 
Since $1 \leq q < k$, we have $\chi=O(k^{\frac{k}{2}+1}+\mu k)$.
As the value of the maximum flow is upper bounded by $\mu-k$, the Ford-Fulkerson algorithm runs in $O(\chi \mu) = O(\mu k^{\frac{k}{2}+1}+\mu^2 k)$ time. 
Furthermore, as $k$ is a constant, the asymptotic worst-case time complexity is $O(\mu^2)$.

So given a mapping of the $k$ centers, the number of edge edit operations, $(\mu+\xi+2\beta_1-$\begin{normalsize}$\frac{k}{2}$\end{normalsize}$(k+1)-2f)$, is minimized since $f$ is maximized.
The time complexity of the above algorithm is dominated by the max-flow algorithm. The above analysis was assuming that the mapping of the $k$ centers of the $k$-star network to the nodes in $G$, is known. The graph edit distance can, hence, be computed by taking the minimum edit distance over all \begin{footnotesize}$\dbinom{\mu}{k}$\end{footnotesize} $= O(\mu^k)$ possible mappings. So the asymptotic worst-case time complexity of the algorithm is $O(\mu^{k+2}) = O(\mu^{O(1)})$, since $k$ is a constant.
%
\end{proof}

\end{document}